\newcommand {\apx} {{\sf APX}}
\newcommand {\np} {{\sf NP}}
\DeclareMathOperator*{\supp}{supp}
\DeclareMathOperator*{\rd}{{Red}}
\newtheorem{theorem}{Theorem}[section]
\newtheorem{lemma}[theorem]{Lemma}
\newtheorem{observation}[theorem]{Observation}
\newtheorem{corollary}[theorem]{Corollary}
\newtheorem{proposition}[theorem]{Proposition}
\newtheorem{definition}{Definition}[section]
\newtheorem{example}[theorem]{Example}
\newtheorem{remark}[theorem]{Remark}
\newtheorem*{theorem1}{Theorem~\ref{thm:beta=gamma} (restated)}
\newtheorem*{theorem3}{Theorem~\ref{thm:apxHard} (restated)}
\newcommand{\Zplus}{{{\mathbb Z}_+}}
\newcommand{\smallPi}{f}
\newcommand{\bigPi}{\Pi}
\begin{document}
\title{Perfect phylogenies via branchings in acyclic digraphs\\ and a generalization of Dilworth's theorem}

\author{Ademir Hujdurovi\' c$^{a,b}$\thanks{E-mail: \texttt{ademir.hujdurovic@upr.si}}
\and
Edin Husi\' c$^{c,b}$\thanks{E-mail: \texttt{e.husic@lse.ac.uk}}
\and
Martin Milani\v c$^{a,b}$\thanks{E-mail: \texttt{martin.milanic@upr.si}}
\and
Romeo Rizzi$^{d}$\thanks{E-mail: \texttt{romeo.rizzi@univr.it}}
\and
Alexandru I.~Tomescu$^{e}$\thanks{E-mail: \texttt{tomescu@cs.helsinki.fi}}
}

\maketitle

\vspace{-0.5cm}
\begin{center}
$^a$ University of Primorska, UP IAM, Koper, Slovenia\\

$^b$ University of Primorska, UP FAMNIT, Koper, Slovenia \\

$^c$ London School of Economics, Department of Mathematics, London, United Kingdom\\

$^d$ University of Verona, Department of Computer Science,
Verona, Italy\\

$^e$ Helsinki Institute for Information Technology HIIT, Department of Computer Science, University of Helsinki, Finland
\end{center}

\begin{abstract}
Motivated by applications in cancer genomics and following the work of Hajirasouliha and Raphael (WABI 2014), Hujdurovi\'c et al.~(IEEE TCBB, to appear) introduced the minimum conflict-free row split (MCRS) problem: split each row of a given binary matrix into a
bitwise OR of a set of rows so that the resulting matrix corresponds to a perfect phylogeny and has the minimum possible number of rows among all matrices with this property. Hajirasouliha and Raphael also proposed the study of a similar problem, in which the task is to minimize the number of distinct rows of the resulting matrix. Hujdurovi\'c et al.~proved that both problems are NP-hard, gave a related characterization of transitively orientable graphs, and proposed a polynomial-time heuristic algorithm for the MCRS problem based on coloring cocomparability graphs.

We give new, more transparent formulations of the two problems, showing that the problems are equivalent to
two optimization problems on branchings  in a derived directed acyclic graph. Building on these formulations, we obtain new results on the two problems, including: (i) a strengthening of the heuristic by Hujdurovi\'c et al.~via a new min-max result in digraphs generalizing Dilworth's theorem, which may be of independent interest, (ii) APX-hardness results for both problems, (iii) approximation algorithms, and (iv) exponential-time algorithms solving the two problems to optimality faster than the na\"ive brute-force approach. Our work relates to several well studied notions in combinatorial optimization: chain partitions in partially ordered sets, laminar hypergraphs, and (classical and weighted) colorings of graphs.
\end{abstract}

\noindent
{\bf Keywords:} Perfect phylogeny, Minimum Conflict-Free Row Split problem,
branching, acyclic digraph, chain partition, Dilworth's theorem, min-max theorem, approximation algorithm, \apx-hardness

\maketitle

\section{Introduction}

\begin{sloppypar}
A \emph{perfect phylogeny} is a rooted tree representing the evolutionary history of a set of $m$ objects.
The objects bijectively label the leaves of the tree and there are $n$ binary variables called \emph{characters},
each labeling exactly one edge of the tree. For each leaf, the set of characters that appear on the
unique root-to-leaf path is the set of characters taking value $1$ at the object labeling the leaf. While
every perfect phylogeny naturally corresponds to an $m\times n$ binary matrix having objects as rows and
characters as columns, the \emph{perfect phylogeny problem} asks the opposite question: Does a given binary
matrix correspond to a perfect phylogeny? The perfect phylogeny problem and various generalizations of it
have been extensively studied in computational biology. In this paper we study two combinatorial optimization problems,
 both generalizations of the perfect phylogeny problem, first considered by Hajirasouliha and Raphael~\cite{wabi14} and
 motivated by applications in cancer genomics.

Following the work~\cite{wabi14}, Hujdurovi\'c et al.~\cite{tcbb16} introduced the \emph{minimum conflict-free row split}
problem, which can be informally stated as follows: given a binary matrix $M$, split each row of $M$ into a bitwise OR\footnote{Here,
OR denotes the usual binary OR function, assuming that value true is represented by $1$ and value false by $0$, that is,
\hbox{$x \text{\,OR\,} y = 1$} if and only if at least one of $x$ and $y$ has value $1$.}
of a set of rows so that the resulting matrix corresponds to a perfect phylogeny and has the minimum number of rows
among all matrices with this property. To state the problem formally, we need two definitions.
\end{sloppypar}

\begin{definition}\label{def:conflict}
Given a matrix $M$, three distinct rows $r$, $r'$, $r''$ of $M$ and two distinct columns $i$ and $j$ of $M$, we denote by $M[(r,r',r''),(i,j)]$ the $3\times 2$ submatrix of $M$ formed by rows $r$, $r'$, $r''$ and columns $i$, $j$ (in this order).
Two columns $i$ and $j$ of a binary matrix $M$ are said to be {\normalfont in conflict} if there exist rows $r, r', r''$ of $M$ such that
$$M[(r,r',r''),(i,j)] = \left(
                          \begin{array}{cc}
                            1 & 1 \\
                            1 & 0 \\
                            0 & 1 \\
                          \end{array}
                        \right)\,.$$
We say that a binary matrix $M$ is{ \normalfont conflict-free} if no two columns of $M$ are in conflict.
\end{definition}

\begin{definition}
Let $M \in \{0,1\}^{m\times n}$. Label the rows of $M$ as $r_1,r_2, \dots , r_m$. A binary matrix $M' \in \{0,1\}^{m' \times n}$ is a {\normalfont row split} of $M$ if there exists a partition of the set of rows of $M'$ into $m$ sets $R_1, R_2, \dots R_m$ such that for all $i \in \{1, \dots ,m\}$, $r_i$ is the bitwise $OR$ of the binary vectors in $R_i$. The set $R_i$ of rows of $M'$ is said to be the set of {\normalfont split rows} of row $r_i$ (with respect to $M'$).
\end{definition}

For simplicity, we defined a row split as a binary matrix $M'$ for which a suitable partition of rows exists. However, throughout the paper we will make a slight technical abuse of this terminology by considering any row split $M'$ of $M$ as already equipped with an arbitrary (but fixed) partition of its rows $R_1,\ldots, R_m$ satisfying the above condition. For an example of these notions, see Fig.~\ref{fig:ex}. For the sake of clarity, from now on we omit displaying the zeros in binary matrices.

\begin{figure}[h!]
\begin{center}
\includegraphics[width=0.8\textwidth]{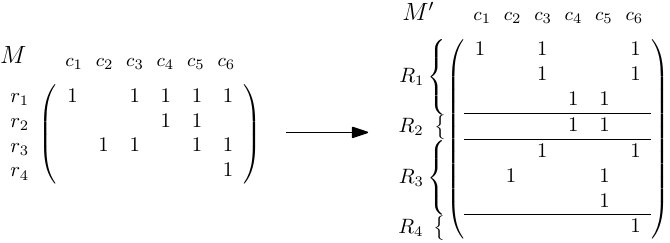}
\end{center}
\caption{An example of a binary matrix $M$ and a conflict-free row split $M'$ of $M$.}
\label{fig:ex}
\end{figure}

We denote by $\gamma(M)$ the minimum number of rows in a conflict-free row split $M'$ of $M$.
Formally, the minimum conflict-free row split problem is defined as follows:

\medskip

\begin{center}
\fbox{\parbox{0.95\linewidth}{\noindent
{\sc MinimumConflict-FreeRowSplit (MCRS):}\\[.8ex]
\begin{tabular*}{.93\textwidth}{rl}
{\em Input:} & A binary matrix $M$.\\
{\em Task:} & Compute $\gamma(M)$.
\end{tabular*}
}}
\end{center}

\medskip

We will also consider a variant of the problem, proposed by Hajirasouliha and Raphael~\cite{wabi14}, in which the task is to compute a row split $M'$ of $M$ such that the number of {\em distinct} rows in $M'$ is minimized. Let $\eta(M)$ denote the minimum number of distinct rows in a conflict-free row split $M'$ of $M$. Similarly as above, we consider the corresponding optimization problem.

\medskip

\begin{center}
\fbox{\parbox{0.95\linewidth}{\noindent
{\sc MinimumDistinctConflict-FreeRowSplit (MDCRS):}\\[.8ex]
\begin{tabular*}{.93\textwidth}{rl}
{\em Input:} & A binary matrix $M$.\\
{\em Task:} & Compute $\eta(M)$.
\end{tabular*}
}}
\end{center}

\medskip
The connection between conflict-free matrices and perfect phylogenies is well known: the rows of a binary matrix $M$ are the leaves of a perfect phylogeny if and only if $M$ is conflict-free (see~\cite{Estabrook:1975aa,gusfieldbook}). Moreover, if this is the case, then the corresponding tree can be retrieved from $M$ in time linear in the size of $M$~\cite{Gusfield91}. The intuition behind the fact that a conflict-free matrix corresponds to a perfect phylogeny is that one can map each row to a leaf of a tree, and each column to an edge, so that each row has a~$1$ exactly on those columns that are mapped to the edges on the path from the root to the leaf corresponding to the row. The forbidden $3 \times 2$ matrix from Definition~\ref{def:conflict} as a submatrix leads a contradiction, since then the two distinct edges $e_i$ and $e_j$ to which columns $i$ and $j$ are mapped, respectively, are such that $e_i$ appears both before, and after, $e_j$ on a root-to-leaf path. We refer to~\cite{wabi14,tcbb16} and to references therein for further details on the biological aspects of the MCRS and the MDCRS problems.

Another well studied family of combinatorial objects closely related to the MCRS and the MDCRS problems
are laminar set families. A set family $\mathcal{F}$ is said to be \emph{laminar} if every two sets $A,B\in \mathcal{F}$ satisfy $A \cap B = \emptyset$, $A\subseteq B$, or $B \subseteq A$. The connection with laminar families follows from the fact that a binary matrix $M$ is conflict-free if and only if the sets of rows indicating the positions of ones in the columns of $M$ form a laminar family. This connection will be exploited in Section~\ref{sec:2-approx}. Laminar families of sets play an important role in network design problems~\cite{MR1805713}, in the study of packing and covering problems~\cite{MR1627624,MR1729148,MR2436998}, and in several other areas of combinatorial optimization, see, e.g.,~\cite{schrijver}.

\medskip
In~\cite{tcbb16}, the MCRS and the MDCRS problems were proved \np-hard, a related characterization of transitively
orientable graphs was given, and a polynomial-time heuristic algorithm was proposed for the MCRS problem based on coloring cocomparability graphs (that is, complements of transitively orientable graphs). Following~\cite{tcbb16}, the main aim of this paper is to further advance the understanding of structural and computational aspects of the MCRS and the MDCRS problems.
\medskip

\begin{sloppypar}
\noindent{\bf Our results and techniques.}
The first and main result of this paper is a result showing that the MCRS and the MDCRS
problems can be equivalently formulated as two optimization problems on branchings in
a directed acyclic graph derived from the given binary matrix, the so-called {\it containment digraph}. (Precise definitions of these notions and the corresponding problems will be given in Section~\ref{sec:formulation}.) These equivalencies lead to more transparent formulations of the two problems. We will ascertain the applicability and usefulness of
these novel formulations by deriving the following results and insights about the MCRS and the MDCRS problems:
\begin{itemize}
  \item We prove a new min-max result on digraphs strengthening Dilworth's theorem on chain partitions and antichains in partially ordered sets.
      This result is described in Section~\ref{sec:min-max}, which can be read independently of the rest of the paper. This result, besides being interesting on its own as a generalization of a classical min-max result, connects well to the MCRS problem via the problem's branching formulation. The constructive, algorithmic proof of the result shows that a related problem is polynomially solvable: a problem in which only a subset of all branchings of the containment digraph is examined, namely the so-called linear branchings (branchings corresponding to chain partitions of the partial order underlying the containment digraph).
      This approach leads to a new heuristic for the MCRS problem, improving on a previous heuristic~from~\cite{tcbb16}.
  \item We strengthen the \np-hardness results for the two problems to \apx-hardness results.
 \item We complement the inapproximability results with three approximation algorithms: a~\hbox{$2$-approximation} algorithm for the MDCRS problem (implying that the problem is \apx-complete) and two approximation algorithms for the MCRS problem, the approximation ratios of which are expressed in terms of two parameters of the containment digraph, corresponding to the height and the width of the underlying partial order, respectively.
\item The branching formulations allow for the development of faster exact exponential-time solutions for the two problems,
when compared to a direct brute-force approach that follows directly from the problems' definitions.
\end{itemize}
\end{sloppypar}
\medskip

\begin{sloppypar}
\noindent{\bf Comparison with related work.}
In~\cite{wabi14}, Hajirasouliha and Raphael introduced the so-called Minimum-Split-Row
problem, in which only a given subset of rows of the input matrix needs to be split and,
roughly speaking, the task is to minimize the number of additional rows in the resulting
conflict-free row split. All results from~\cite{wabi14} actually deal with the variant of the
problem in which all rows need to be split (some perhaps trivially by setting $R_i = \{r_i\}$);
in this case, the optimal value of the Minimum-Split-Row problem coincides with
the difference \hbox{$\gamma(M)-r(M)$}, where
$r(M)$ is the number of rows of $M$. In the same paper, a lower bound on
the value of $\gamma(M)$ was derived and, in the concluding remarks of the paper,
a study of the MDCRS problem was suggested.
In subsequent works by Hujdurovi\'c et al.~\cite{tcbb16}, the MCRS
problem was introduced and several claims from~\cite{wabi14} were proved incorrect,
including an \np-hardness proof of the Minimum-Split-Row problem (which would imply \np-hardness
of the MCRS problem). However, it was shown in~\cite{tcbb16} that the
MCRS problem is indeed \np-hard, as is the MDCRS problem. Moreover, a polynomially
solvable case of the MCRS problem was identified and an efficient heuristic algorithm for
the problem on general instances was proposed, based on coloring cocomparability graphs.
\end{sloppypar}

The results of this paper improve on the previously known results about the
two problems: \np-hardness results are strengthened to \apx-hardness results,
approximation algorithms for the two problems are proposed, and the heuristic
algorithm for the MCRS problem given by Hujdurovi\'c et
al.~from~\cite{tcbb16} is improved. The key tools leading to most of these
results are the newly proposed branching formulations and the new min-max
theorem strengthening Dilworth's theorem. The min-max theorem
has a constructive algorithmic proof, leading to a polynomial-time algorithm
to compute a chain partition of a given partially ordered set equipped with a
monotone weight function such that the of sum of the maximum weights in the chains
is minimized. This result contrasts with known results
in the literature implying that two natural variants of the problem are
\np-hard: (i) the variant in which the chains used in the partition have to
be of bounded size~\cite{MR1380086,MR2386521}, and (ii) the variant in which
the weight function is not necessarily monotone, which corresponds to a
variant of the graph coloring problem known as Weighted Coloring (see,
e.g.,~\cite{MR1439871,MR2195353,MR2499496,MR3286675}), in the class of
cocomparability graphs. We refer to the remarks following
Corollary~\ref{cor:min-price-chain-partition-algorithm} in
Section~\ref{sec:min-max} for more details.
See also Figure~\ref{fig:problems} in Section~\ref{sec:conclusion}, where we
summarize the relations between the problems introduced in this paper
and several problems studied in the literature, along with the corresponding complexity
results.

\medskip

\noindent{\bf Structure of the paper.} The branching formulations of the two
problems are given in Section~\ref{sec:formulation}. A strengthening of Dilworth's
theorem and its connection to the MCRS problem is discussed in Section~\ref{sec:Dilworth}.
\apx-hardness proofs and approximation algorithms are presented in Section~\ref{sec:approx}.
We conclude the paper with a summary and some questions for future research
in Section~\ref{sec:conclusion}.

\medskip

\noindent{\bf Remark on notation.} A {\em binary matrix} $M\in \{0,1\}^{m\times n}$ is a matrix having $m$ rows and $n$ columns, and all entries $0$ or $1$. Each row of such a matrix is a vector in $\{0,1\}^n$; each column is a vector in $\{0,1\}^m$.
We will usually denote by $R_M = \{r_1,\ldots, r_m\}$ and $C_M = \{c_1,\ldots, c_n\}$ the (multi)sets of rows and columns of $M$, respectively.
The entry of $M$ at row $r_i$ and column $c_j$ will be denoted by $M_{i,j}$ or $M_{r_i,j}$ when appropriate.
For brevity, we will often write ``the number of distinct rows (resp., columns) of $M$'' to mean
``the maximum number of pairwise distinct rows (resp., columns) of $M$''. Two rows (resp., columns) are considered distinct
if they differ as binary vectors. All binary matrices in this paper will be assumed to contain no row whose all entries are~$0$.

In our proofs and constructions we will often simplify the binary matrix $M$ under consideration by working instead with the matrix denoted by $\rd(M)$, obtained by taking from $M$ exactly one copy from each set of identical columns.

\smallskip
An extended abstract of this work appeared in the proceedings of WG 2017~\cite{WG2017}.

\section{Formulations in Terms of Branchings in Directed Acyclic Graphs}\label{sec:formulation}

In this section, we are going to formulate the MCRS and the
MDCRS problems in terms of branchings in directed acyclic graphs (DAGs).
First, we give the necessary definitions.

\begin{definition}
Let $D = (V,A)$ be a DAG. A {\em branching} of $D$ is a subset $B$ of $A$ such that $(V,B)$ is a digraph in which for each vertex $v$ there is at most one arc leaving $v$.
\end{definition}

The following construction (see, e.g.,~\cite{wabi14}) can be performed on any given binary matrix $M$ and results in a directed acyclic graph. Given a column $c_j\in C_M$, the {\em support of $c_j$} is the set defined as \hbox{$\{r_i\in R_M: M_{i,j} = 1\}$} and denoted by $\supp_M(c_j)$. Given a binary matrix $M\in \{0,1\}^{m\times n}$, the {\em containment digraph} $D_M$ of $M$ is the directed acyclic graph with vertex set $V=\{\supp_M(c): c\in C_M\}$
and arc set $A = \{(v,v'): v,v'\in V \land v\subset v'\}$ where $\subset$ is the relation of proper inclusion of sets.
See Fig.~\ref{fig:DM} for an example.

\begin{figure}[h!]
\begin{center}
\includegraphics[width=\textwidth]{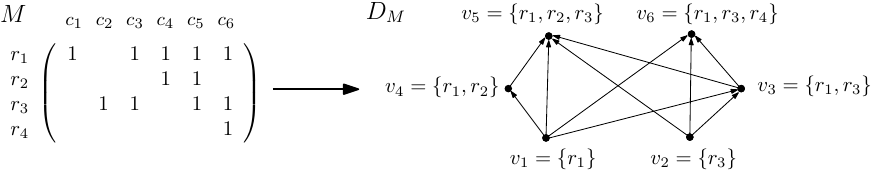}
\end{center}
\caption{An example of a binary matrix $M$ and its containment digraph $D_M$.}
\label{fig:DM}
\end{figure}

Let $M\in\{0,1\}^{m\times n}$ be a binary matrix, let $D_M=(V,A)$ be the containment digraph of $M$, and let $B$ be a branching of $D_M$.
For a vertex $v\in  V$, we denote by $N^-_B(v)$ the set of all vertices
$v'\in V$ such that $(v',v)\in B$. A {\em source} of $B$ is a vertex not entered by any arc of $B$.
For a vertex $v \in V$, an element $r \in v$ (that is, a row of $M$) is said to be {\em covered} in $v$ with respect to $B$
(or just {\it $B$-covered}) if $r \in \cup \, N^-_B(v)$. (When it is clear to which branching we are referring to, we will say just that ``$r$ is covered in $v$''.) Analogously, we say that $r\in v$ is {\em uncovered} in $v$ with respect to $B$ if $r$ is not covered in $v$. A {\em $B$-uncovered pair} is a pair $(r,v)$ such that $r$ is a row of $M$, $v$ is a vertex of $D_M$ (that is, the support of a column of $M$), $r \in v$, and $r$ is uncovered in $v$ with respect to $B$. For a row $r$ of $M$, we will denote by $U_B(r)$ the set of all $B$-uncovered pairs with first coordinate $r$, and by $U(B)$ the set of all $B$-uncovered pairs. To illustrate these notions, we elaborate further on the example from Fig.~\ref{fig:DM} in Fig.~\ref{fig:example}, where two branchings $B_1$ and $B_2$ of the arc set of $D_M$ are depicted, together with uncovered pairs $(r,v)$ with respect to each of the two branchings.

\begin{figure}[h!]
\begin{center}
\includegraphics[width=\textwidth]{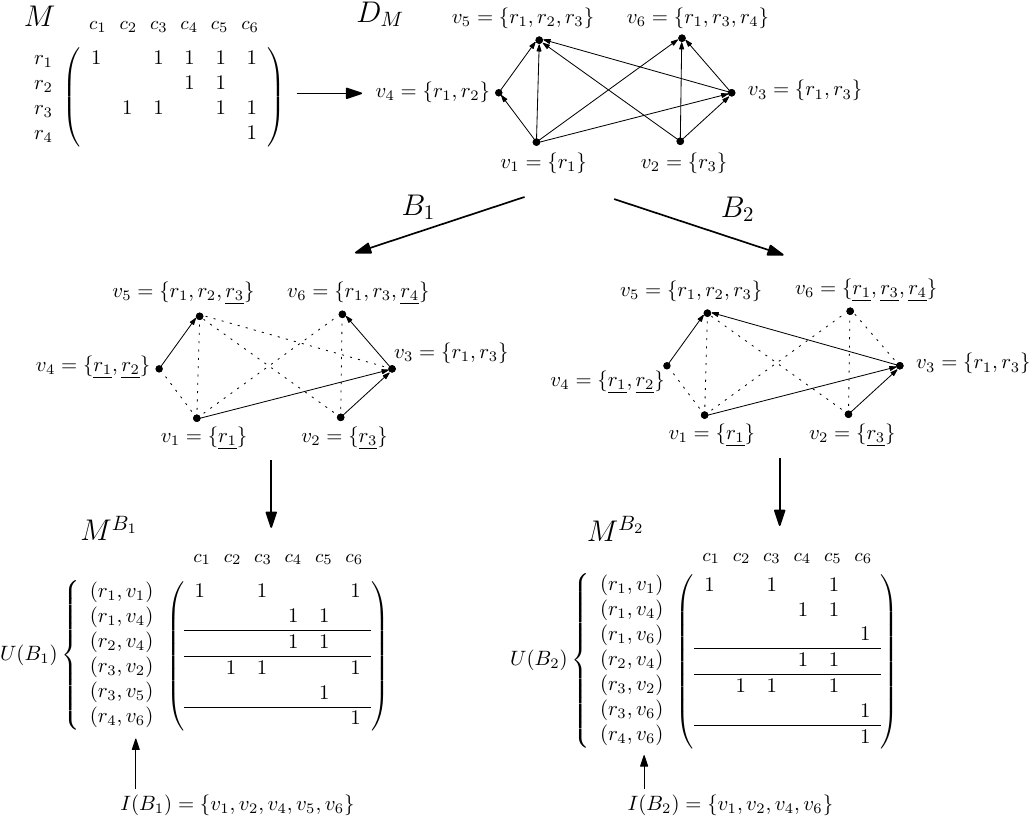}
\end{center}
\caption{An example of a binary matrix $M$, its containment digraph $D_M$ with two branchings $B_1$ and $B_2$, and the resulting row splits of $M$ (the $B_1$-split and $B_2$-split). The row split $M^{B_1}$ is an optimal solution to the MCRS problem given $M$. The row split $M^{B_2}$ is an optimal solution to the MDCRS problem given $M$.
Pairs $(r,v)$ for which $r$ is underlined as an element of $v$ in the figure showing $B_1$ (respectively, $B_2$)
are exactly the uncovered elements with respect to $B_1$ (respectively, $B_2$).}
\label{fig:example}
\end{figure}

For a branching $B\subseteq A$, we say that a vertex $v\in V$ is {\em $B$-irreducible} if there exists some element $r\in v$ that is uncovered in $v$ with respect to $B$ (equivalently, if $v \not\in \cup \, N^-_B(v)$).
In particular, every source of $B$ is $B$-irreducible. We denote by $I(B)$ the set of all $B$-irreducible vertices; see Fig.~\ref{fig:example} for an example.

We denote by $\beta(M)$ the minimum number of elements in $U(B)$ over all branchings $B$ of $D_M$. Similarly, we denote with $\zeta(M)$ the minimum number of elements in $I(B)$ over all branchings $B$ of $D_M$. The corresponding optimization problems are the following:

\medskip
\noindent
\begin{minipage}[t]{0.5\textwidth}
\begin{center}
\fbox{\parbox{0.95\linewidth}{\noindent
{\sc MinimumUncoveringBranching (MUB):}\\[.8ex]
\begin{tabular*}{.93\textwidth}{rl}
{\em Input:} & A binary matrix $M$.\\
{\em Task:} & Compute $\beta(M)$.
\end{tabular*}
}}
\end{center}
\end{minipage}
\hfill
\begin{minipage}[t]{0.5\textwidth}
\begin{center}
\fbox{\parbox{0.95\linewidth}{\noindent
{\sc MinimumIrreducingBranching (MIB):}\\[.8ex]
\begin{tabular*}{.93\textwidth}{rl}
{\em Input:} & A binary matrix $M$.\\
{\em Task:} & Compute $\zeta(M)$.
\end{tabular*}
}}
\end{center}
\end{minipage}

\medskip
The announced equivalence between the MCRS and the MUB problems, and between the MDCRS and the MIB problems is captured in the following theorem.
We denote by $\omega$ any real number such that there exists an ${\mathcal{O}}(n^{\omega})$ algorithm for multiplying two $n\times n$ binary matrices (e.g., $\omega=2.3728639$~\cite{MR3239939}).

\begin{sloppypar}
\begin{theorem}\label{thm:beta=gamma}
For every binary matrix $M\in \{0,1\}^{m\times n}$ with exactly $k$ distinct columns, the following holds:
\begin{enumerate}
  \item Any branching $B$ of $D_M$ can be transformed in time ${\mathcal{O}}(kmn)$ to a conflict-free row split of $M$ with exactly $|U(B)|$ rows and with exactly $|I(B)|$ distinct rows.

  \item Any conflict-free row split $M'\in \{0,1\}^{m'\times n}$ of $M$ can be transformed in time \hbox{${\mathcal{O}}(mn+m'k^2+k^{\omega})$} to a branching $B$ of $D_M$ such that $|U(B)|$ is at most the number of rows of $M'$ and
$|I(B)|$ is at most the number of distinct rows of $M'$.
\end{enumerate}
Consequently, for every binary matrix $M$, we have $\gamma(M)=\beta(M)$ and $\eta(M)=\zeta(M)$.
\end{theorem}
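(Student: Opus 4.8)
The plan is to establish the two claimed transformations separately and then read off the equalities. For the first transformation, fix a branching $B$. Because every arc of $D_M$ goes from a smaller support to a strictly larger one and $B$ has out-degree at most one at each vertex, $(V,B)$ is a forest $F$ in which the unique out-arc of a vertex points to a strictly larger support, its \emph{parent}, and the roots are the maximal supports reached. I would then define the row split $M'$ by indexing its rows by the $B$-uncovered pairs: the pair $(r,v)$ with $r=r_i$ yields one row of $M'$, placed in the split set $R_i$, whose $1$-entries are exactly the columns $c_j$ such that $\supp_M(c_j)$ is an $F$-ancestor of $v$ (with $v$ itself allowed). Identical columns of $M$ are treated together by first reducing to $\rd(M)$ and copying entries back, keeping the construction within $O(kmn)$ time.

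The heart of this first part is three verifications. First, that $M'$ is genuinely a row split: for a fixed row $r$ the supports containing $r$ form an $F$-ancestor-closed set $V_r$ (if $r\in v$ and $v\subset v'$ then $r\in v'$), the uncovered pairs $(r,v)$ are precisely the $F$-minimal elements of $V_r$, and every element of $V_r$ is an $F$-ancestor of some such minimal element; hence the bitwise OR of the rows assigned to $r$ has a $1$ in column $c_j$ exactly when $\supp_M(c_j)\in V_r$, i.e.\ when $M_{r,j}=1$. Second, that $M'$ is conflict-free: the support in $M'$ of a column $c_j$ is exactly the set of rows attached below the $F$-subtree rooted at $\supp_M(c_j)$, and in a forest two rooted subtrees are either nested or disjoint, so these supports form a laminar family. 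Third, that the counts are exact: the number of rows is $|U(B)|$ by construction, and two rows receive the same vector iff their vertices have the same set of $F$-ancestors, which (since a vertex is among its own ancestors and the ancestor relation is antisymmetric) forces the vertices to coincide; as the vertices occurring in uncovered pairs are exactly the $B$-irreducible ones, the number of distinct rows is $|I(B)|$.

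For the reverse transformation I would start from the laminar family $\mathcal{F}=\{\supp_{M'}(c_j)\}$ on the rows of $M'$, together with the map $\pi\colon R_{M'}\to R_M$ sending each split row to the row of $M$ it refines, so that $\supp_M(c_j)=\pi(\supp_{M'}(c_j))$ for every column. The idea is to read the branching off the laminar forest of $\mathcal{F}$: each $M'$-support has a laminar parent, and pushing the parent relation through $\pi$ produces candidate arcs of $D_M$, from which one selects for each vertex a single out-arc to the image of the nearest strictly larger projected support. One then charges each uncovered pair of the resulting branching injectively to a row of $M'$, and each irreducible vertex to a distinct row of $M'$, yielding $|U(B)|\le m'$ and $|I(B)|$ at most the number of distinct rows of $M'$. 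The running time is dominated by building $D_M$ itself: after an $O(mn)$ pass to form the supports, I would obtain the containment relation among the $k$ distinct supports by one Boolean matrix product in time $O(k^{\omega+\epsilon})$, with $O(m'k^2)$ bookkeeping to assign the arcs.

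Combining the two parts gives the equalities: the first yields $\gamma(M)\le\beta(M)$ and $\eta(M)\le\zeta(M)$, and, applied to an optimal row split, the second yields the reverse inequalities, so $\gamma(M)=\beta(M)$ and $\eta(M)=\zeta(M)$. I expect the main obstacle to be the second transformation, precisely because $\pi$ can collapse several laminar sets onto the same vertex of $D_M$ and can turn a strict inclusion into an equality of projected supports; one must argue that projecting the laminar forest nonetheless yields a well-defined branching (out-degree at most one) and, more delicately, that the charging of uncovered pairs and irreducible vertices to rows of $M'$ remains injective despite this collapsing. The first transformation, by contrast, is the comparatively clean observation that a branching is simply a forest of supports whose rooted subtrees are automatically laminar.
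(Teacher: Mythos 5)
Your first transformation is exactly the paper's: your matrix is the paper's $B$-split $M^B$ (a row for each uncovered pair $(r,v)$, with a $1$ in column $c_j$ iff $\supp_M(c_j)$ is a $B$-ancestor of $v$), and your three verifications are sound. In particular, the ancestor-closedness of $V_r$ is precisely what makes ``uncovered'' coincide with ``$F$-minimal in $V_r$'', matching the paper's iterative covering argument, and your observation that column supports of $M^B$ are subtrees of a forest, hence laminar, is a slightly slicker route to conflict-freeness than the paper's direct $3\times 2$-submatrix contradiction. The identification of distinct rows with $B$-irreducible vertices via antisymmetry of the ancestor relation also matches the paper.

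The gap is in the second transformation, which you outline (project the laminar parent relation of the supports of $M'$, i.e., the transitive reduction of $D_{M'}$, down to arcs of $D_M$ -- the paper's construction) but do not carry out; and the part you defer is the actual content of the lemma. Two remarks. First, the collapsing you flag as the main obstacle does not in fact occur once $M$ is reduced to distinct columns (the paper reduces to $\rd(M)$ for exactly this reason): since $r_p\in\supp_M(c_i)$ iff $R_p\cap\supp_{M'}(c'_i)\neq\emptyset$, the $M'$-support of a column determines its $M$-support, so $v'_i=v'_j$ forces $v_i=v_j$; on a duplicate-free $M$ the correspondence between supports is a bijection, and out-degree at most one is then automatic from laminarity (the proper supersets of a nonempty set in a laminar family form a chain; arcs out of empty supports are simply excluded). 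Second, and genuinely missing, is the injective charging of uncovered pairs to rows of $M'$. The paper proves the stronger statement that $M^B$ is obtained from $M'$ by deleting rows: for each uncovered pair $(r_i,v_k)$ pick $r'\in R_i$ with $M'_{r',k}=1$ and show that the row $r'$ equals the $M^B$-row indexed by $(r_i,v_k)$. The delicate direction, $M'_{r',j}=1\Rightarrow v_j\in B^+(v_k)$, uses uncoveredness essentially: conflict-freeness makes $v'_j$ and $v'_k$ comparable, and if $v'_j\subsetneq v'_k$ then the path of elementary arcs from $v'_j$ to $v'_k$ projects to a $B$-path from $v_j$ to $v_k$ with $r_i\in v_j$, so $r_i$ would be covered in $v_k$, a contradiction; hence $v'_k\subseteq v'_j$ and $v_j\in B^+(v_k)$. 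Injectivity of the resulting map then follows from the disjointness of the sets $R_i$ together with the fact (from your part~1) that $M^B$-rows with distinct second coordinates are distinct as vectors. Without this step neither $|U(B)|\le m'$ nor the bound on distinct rows is established, so your plan as written does not yet deliver $\beta(M)\le\gamma(M)$ or $\zeta(M)\le\eta(M)$.
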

\end{sloppypar}

\begin{sloppypar}
Results presented in Sections~\ref{sec:heuristic},~\ref{subsec:hard}, and~\ref{subsect:height and width} will rely on Theorem~\ref{thm:beta=gamma}. Before giving a proof of the theorem, let us discuss one further consequence of it.
The theorem allows for the development of faster exact exponential-time solutions for the two problems, when compared to a direct brute-force approach that follows directly from the problems' definitions. Consider the simple approach of enumerating all possible branchings of $D_M$ and selecting the best one. Denoting by $W$ the set of vertices $u$ of $D_M$ of out-degree $d^+(u)$ at least one and disregarding polynomial factors, the time complexity of this approach is of the order ${\mathcal{O}}(\prod_{u\in W}d^+(u)) = {\mathcal{O}}(n^n) = {\mathcal{O}}(2^{n\log n})$,
where $n$ is the number of distinct columns of the input matrix $M$.
On the other hand, the time complexity of the straightforward approach to the two problems based on generating all possible row splits of $M$
cannot even be expressed as a function of $n$ only. A row with $k$ ones has at least as many splits as the number of partitions of a $k$-element set, which is the quantity counted by the Bell number $B_k$ and clearly bounded from below by $2^k$. Thus, for a matrix with $m$ rows, each with at least $n/2$ ones, the total number of row splits of $M$ is at least $2^{mn/2}$.
\end{sloppypar}

\smallskip
Theorem~\ref{thm:beta=gamma} will be proved in two steps. First, we show how to split the input matrix $M$ in a conflict-free way, given a branching $B$ of its containment digraph; the number of rows (resp., distinct rows) of the resulting row split equals the number of $B$-uncovered pairs (resp., $B$-irreducible vertices). Second, we show that any conflict-free row split $M'$ of $M$ can be reduced, by possibly deleting some rows, into a row split of $M$ obtained from some branching of $D_M$ (as in the first step).

The proof of the first part of Theorem~\ref{thm:beta=gamma} relies on the notion of a {\it $B$-split}, defined as follows.

\begin{definition}\label{def:B-split}
Let $M$ be a binary matrix with rows $r_1,\ldots,r_m$ and columns $c_1,\ldots,c_n$. For a branching $B$ of $D_M$, we define the \emph{$B$-split of $M$}, denoted by $M^B$, as the matrix with rows indexed by the elements of the set $U(B)$, and columns  $c'_1,\ldots,c'_n$, as follows.
Let $V = V(D_M)$ and for all $j\in \{1,\ldots, n\}$, let $v_j = \supp_M(c_j)$ (so $v_j\in V$).
For a vertex $v\in V$, we denote by  $B^+(v)$ the set of all vertices in $V$
reachable by a directed path from $v$ in $(V,B)$ (note that $v\in B^+(v))$.
For all $(r,v)\in U(B)$ and all $j\in \{1,\ldots, n\}$, set:
$$M^B_{(r,v),j}=\left\{
\begin{array}{ll}
1, & \hbox{if $v_j\in B^+(v)$;}  \\
0, & \hbox{otherwise.}
\end{array}
\right.
$$
\end{definition}

Note that if $M^B_{(r,v),j}=1$, then $r\in v_j$. See Fig.~\ref{fig:example} for an example of a binary matrix $M$ with two branchings $B_1$ and $B_2$ of its containment digraph and the corresponding row splits.

In the following lemma we show that the $B$-split of $M$ is a conflict-free row split of $M$ and compute the number of rows (resp., the number of distinct rows) of $M^B$.

\begin{lemma}\label{lemma:branching to split}
Let $M$ be a binary matrix without duplicated columns, $B$ a branching of $D_M$, and let $M^B$ be the $B$-split of $M$. Then $M^B$ is a conflict-free row split of $M$ with $|U(B)|$ rows, splitting each row $r_i$ of $M$ into rows of $M^B$ indexed by $U_B(r_i)$. Moreover, the number of distinct rows in $M^B$ is $|I(B)|$.
\end{lemma}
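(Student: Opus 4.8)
The plan is to verify the three assertions of Lemma~\ref{lemma:branching to split} in turn: that $M^B$ is a row split of $M$, that it is conflict-free, and that the row/distinct-row counts are $|U(B)|$ and $|I(B)|$. First I would show $M^B$ is a row split whose block $R_i$ for row $r_i$ consists of the rows indexed by $U_B(r_i)$. Fix a row $r_i$ and a column $c_j$, and check that the bitwise OR of the vectors $\{M^B_{(r_i,v),\cdot} : (r_i,v)\in U_B(r_i)\}$ agrees with $r_i$ in coordinate $j$. By the definition of $M^B$, the OR has a $1$ in coordinate $j$ iff there exists a $B$-uncovered pair $(r_i,v)$ with $v_j\in B^+(v)$; I must argue this happens exactly when $r_i\in v_j$, i.e. when $M_{i,j}=1$. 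The forward direction follows from the remark after Definition~\ref{def:B-split} ($M^B_{(r,v),j}=1$ implies $r\in v_j$). For the reverse direction, if $r_i\in v_j$, I would trace the maximal $B$-path ending at $v_j$ back to a source and locate the first vertex $v$ on it (starting from $v_j$ and following arcs of $B$ backwards) in which $r_i$ is uncovered; since $v_j\in B^+(v)$, this yields the required uncovered pair. The existence of such a first uncovered vertex is the key point and uses that $B$ is acyclic together with the fact that a source vertex containing $r_i$ is automatically $B$-irreducible with $r_i$ uncovered.

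Second, I would establish that $M^B$ is conflict-free. By the laminar-family characterization mentioned in the introduction, it suffices to show that the supports of the columns $c'_1,\ldots,c'_n$ of $M^B$ form a laminar family, i.e. any two column supports are either disjoint or nested. The support of $c'_j$ in $M^B$ is $\{(r,v)\in U(B) : v_j\in B^+(v)\}$. For two columns $c'_j,c'_{j'}$, I would compare $v_j$ and $v_{j'}$: because $B$ is a branching each vertex has out-degree at most one, so the sets $B^+(v_j)$ and $B^+(v_{j'})$ (the forward-reachable sets along the unique outgoing paths) satisfy that either one contains the other or they first diverge — more precisely, the ancestor sets $\{v : v_j\in B^+(v)\}$ and $\{v : v_{j'}\in B^+(v)\}$ are nested or disjoint. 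This nestedness is exactly the structural consequence of each vertex having at most one out-arc: the in-trees rooted at distinct vertices of $B$ are pairwise disjoint or one is contained in the other along the tree. Translating this to the pair-indexed supports gives laminarity, hence conflict-freeness.

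Third, for the counting, the number of rows of $M^B$ is $|U(B)|$ directly by construction, since rows are indexed by $U(B)$. For the number of \emph{distinct} rows, I would show that two uncovered pairs $(r,v)$ and $(r',v')$ yield identical rows of $M^B$ iff $B^+(v)=B^+(v')$, and that $B^+(v)=B^+(v')$ holds iff $v$ and $v'$ reach a common vertex along $B$ — equivalently, iff $v=v'$ is forced or $v,v'$ lie on a common $B$-path. The cleanest route is: the row $M^B_{(r,v),\cdot}$ depends only on $v$ (it records which $v_j$ lie in $B^+(v)$), so distinct rows correspond to distinct \emph{values} of $B^+(v)$ as $(r,v)$ ranges over $U(B)$. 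Since each $B$-irreducible vertex is precisely a vertex $v$ that appears as the second coordinate of some uncovered pair, the distinct rows are in bijection with $\{B^+(v) : v\in I(B)\}$, and I would argue $v\mapsto B^+(v)$ is injective on $I(B)$ because the branching structure makes $v$ recoverable as the unique minimal element of $B^+(v)$. This yields exactly $|I(B)|$ distinct rows.

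The main obstacle I anticipate is the reverse direction of the row-split verification (showing $r_i\in v_j$ forces the existence of a $B$-uncovered pair $(r_i,v)$ with $v_j\in B^+(v)$): this requires the ``first uncovered ancestor'' argument and careful use of acyclicity, and it is the crux on which the row-split property rests. The laminarity and counting steps are then comparatively routine consequences of the at-most-one-out-arc property of branchings, though one must be attentive in the distinct-row count that the bijection is with $I(B)$ rather than with the set of all second coordinates of uncovered pairs.
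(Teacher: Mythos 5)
Your proposal is correct, and on the row-split step it takes the same route as the paper: the paper likewise handles $M_{r,j}=1$ by repeatedly replacing $v_j$ with a covering in-neighbor containing $r$ until an uncovered pair $(r,v)$ with $v_j\in B^+(v)$ is reached, terminating by finiteness. One caution there: since a branching bounds \emph{out}-degree but not in-degree, there is no unique ``maximal $B$-path ending at $v_j$''; the backward trace must at each step \emph{choose} an in-neighbor containing $r_i$ (which is exactly what ``$r_i$ is covered'' provides), so your phrasing suggesting a canonical path should be repaired, though the underlying argument is sound. Where you genuinely diverge is in the remaining two steps. For conflict-freeness the paper argues by contradiction from the forbidden $3\times 2$ submatrix, using that two vertices reachable from a common vertex of a branching must lie on one out-path; you instead show the ancestor sets $\{v : v_j\in B^+(v)\}$ are pairwise nested or disjoint and invoke the laminar characterization (Observation~\ref{observation:laminar}). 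This is valid and arguably cleaner: if $x$ is a common ancestor of $v_j$ and $v_{j'}$, then $v_j,v_{j'}\in B^+(x)$ lie on a single out-path, whence one ancestor set contains the other; both proofs ultimately rest on the same out-degree-at-most-one fact. For the distinct-row count, the paper proves injectivity via mutual reachability plus acyclicity of $D_M$, while you recover $v$ from $B^+(v)$ as its unique inclusion-minimal element, giving injectivity of $v\mapsto B^+(v)$ on all of $V$ --- slightly more general and equally correct; just delete your intermediate claim that $B^+(v)=B^+(v')$ iff $v,v'$ reach a common vertex (a common reachable vertex yields only containment of the reachable sets, not equality), which your final argument does not use anyway. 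Do make explicit that both the laminarity translation and the equivalence ``rows equal iff $B^+(v)=B^+(v')$'' use the no-duplicated-columns hypothesis through the bijection $j\mapsto v_j$, and note that your closing worry is vacuous: $I(B)$ is by definition exactly the set of second coordinates occurring in $U(B)$.
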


\begin{proof}
It is clear that the number of rows in $M^B$ is $|U(B)|$. For a row $r$ of $M$, we claim that $r$ is the bitwise OR of the rows of $M^B$ indexed by the set $U_B(r)$.
Suppose that $M_{r,j}=1$. Then $r\in v_j$.
We claim that there exists a vertex $v\in V$ such that $(r,v)\in U_B(r)$ and $M^B_{(r,v),j}=1$. If $(r,v_j)\in U_B(r)$, we can choose $v=v_j$ and
we are done. If this is not the case, then $r$ is covered in $v_j$, and hence $r\in v_k$ for some $v_k$ such that $(v_k,v_j)\in B$. Now if $(r,v_k)\not\in U_B(r)$, then we
repeat the argument with $v_k$ replaced by a ``covering'' in-neighbor. The procedure has to terminate after finitely many steps. Hence, we may assume that $(r,v_k)\in U_B(r)$.
This implies that $M^B_{(r,v_k),j}=1$.
Suppose now that $M_{r,j}=0$. Then $r\not \in v_j$ and therefore $M^B_{(r,v),j}=0$, for every $(r,v)\in U_B(r)$. This shows that $r$ is bitwise OR of the rows of $M^B$ indexed by $U_B(r)$, and therefore $M^B$ is row split of matrix $M$.

Suppose that two columns $c'_p$ and $c'_q$ of $M^B$ are in conflict.
Then there exist row indices, $(r_i,v_{i'}),(r_j,v_{j'})$ and $(r_k,v_{k'})$ in  $U(B)$ such that
$M^B_{(r_i,v_{i'}),p}=M^B_{(r_i,v_{i'}),q}=M^B_{(r_j,v_{j'}),p}=M^B_{(r_k,v_{k'}),q}=1$
and $M^B_{(r_j,v_{j'}),q}=M^B_{(r_k,v_{k'}),p}=0$.
Since $M^B_{(r_i,v_{i'}),p}=M^B_{(r_i,v_{i'}),q} = 1$, we have $v_p\in B^+(v_{i'})$ and $v_q\in B^+(v_{i'})$, that is,
$v_p$ and $v_q$ are reachable by a directed path from $v_{i'}$ in $(V,B)$.
Since $B$ is a branching, this is only possible if $v_q\in B^+(v_p)$ or $v_p\in B^+(v_q)$;
we may assume without loss of generality that $v_q\in B^+(v_p)$.
Since $M^B_{(r_j,v_{j'}),p}=1$, it follows that $v_p\in B^+(v_{j'})$. This further implies that $v_q\in B^+(v_{j'})$. Since $r_j\in v_{j'}$, it follows that $r_j\in v_q$, which contradicts the fact that $M^B_{(r_j,v_{j'}),q}=0$. The obtained contradiction shows that $M^B$ is conflict-free.

It remains to prove that the number of distinct rows in $M^B$ is $|I(B)|$.
Note that for any row $(r,v)$ in $M^B$ we have $v \in I(B)$.
Let $v\in I(B)$.
It is not difficult to see that for $r_i,r_j\in V$ such that $(r_i,v)\in U(B)$ and $(r_j,v)\in U(B)$, the rows of $M^B$ indexed by $(r_i,v)$ and $(r_j,v)$ are equal.
Hence the number of distinct rows in $M^B$ is at most $|I(B)|$.
To complete the proof we construct a set of size $|I(B)|$ of pairwise distinct rows of $M^B$. For every $v_i\in I(B)$, let $r^i$ be an arbitrary element of the (non-empty) set $v_i\setminus \cup N^-_B(v_i)$. Since $r^i$ is uncovered in $v_i$ with respect to $B$, the pair $(r^i,v_i)$ is an element of $U(B)$. We claim that the rows of $M^B$ indexed by $(r^i,v_i)$ over all $v_i\in I(B)$ are pairwise distinct.
Suppose that there exist $v_i$ and $v_j$ in $I(B)$ such that $v_i\neq v_j$ and the rows of $M^B$ indexed by $(r^i,v_i)$ and $(r^j,v_j)$ are equal. Since $M^B_{(r^i,v_i),i}=M^B_{(r^j,v_j),j}=1$ and the two rows are equal, we infer that
$M^B_{(r^i,v_i),j}=M^B_{(r^j,v_j),i}=1$.
Therefore $v_j\in B^+(v_i)$ and $v_i\in B^+(v_j)$. Since $B$ is a DAG, it follows that $v_i=v_j$, a contradiction.  This shows that there are exactly $|I(B)|$ distinct rows in $M^B$.
\end{proof}

The following lemma, exemplified in Fig.~\ref{fig:example3}, is the key to the converse direction.

\begin{lemma}\label{lemma:split to branching}
There exists an algorithm that takes as input a binary matrix $M$ without duplicated columns and a conflict-free row split $M'\in \{0,1\}^{m'\times n}$ of $M$, and computes in time ${\mathcal{O}}(m'n^2+n^{\omega})$ a branching
$B$ of $D_M$ such that $M^B$ can be obtained from $M'$ by removing some rows.
\end{lemma}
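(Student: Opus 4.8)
\emph{The plan} is to expose the laminar structure hidden inside $M'$ and read a branching of $D_M$ directly off it. Since $M'$ is conflict-free, its column supports $L_j:=\supp_{M'}(c'_j)$, $j\in\{1,\dots,n\}$, form a laminar family $\mathcal{L}$ of subsets of the rows of $M'$, and I would work with the forest $F$ of $\mathcal{L}$ in which the parent of a set is the smallest member of $\mathcal{L}$ properly containing it. First I would record the link between $M'$ and $D_M$: writing $R_1,\dots,R_m$ for the fixed partition of the rows of $M'$ and identifying the row $r_i$ of $M$ with the part $R_i$, the bitwise-OR property gives $M_{i,j}=1$ iff $R_i\cap L_j\neq\emptyset$, so $v_j:=\supp_M(c_j)=\{\,i : R_i\cap L_j\neq\emptyset\,\}$. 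Hence the map $\psi\colon L_j\mapsto v_j$ is monotone. Because $M$ has no duplicated columns the $v_j$ are pairwise distinct, which forces the $L_j$ to be pairwise distinct as well and makes $\psi$ an order-preserving bijection from $\mathcal{L}$ onto $V(D_M)$ sending proper containments to proper containments. I would then define $B$ by adding, for each non-root $L_j$ of $F$ with parent $L_{\pi(j)}$, the arc $(v_j,v_{\pi(j)})$: this is a genuine arc of $D_M$ (proper containment, by the last sentence), and since distinct sources $v_j$ each carry at most one such arc, $B$ is a branching.

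Next I would compute what $B$ encodes. Following parent pointers shows $B^+(v_j)=\{v_k : L_k\supseteq L_j\}$ and $N^-_B(v_j)=\{v_k : L_k \text{ is a child of } L_j \text{ in } F\}$. Consequently a pair $(r_i,v_j)$ is $B$-uncovered exactly when $R_i$ meets $L_j$ but meets no child of $L_j$; by laminarity, in that case every element $x\in R_i\cap L_j$ has $L_j$ as its smallest containing set in $\mathcal{L}$ (otherwise the child of $L_j$ on the chain of sets above $x$ would be met by $R_i$).

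The heart of the argument is to match $U(B)$ injectively into the rows of $M'$. To the uncovered pair $(r_i,v_j)$ I assign such an element $x$, viewed as a row of $M'$ lying in the part $R_i$. On one hand, the row of $M^B$ indexed by $(r_i,v_j)$ has support $\{k : v_k\in B^+(v_j)\}=\{k : L_k\supseteq L_j\}$; on the other, since $L_j$ is the smallest set of $\mathcal{L}$ containing $x$, the support $\{k : x\in L_k\}$ of the row $x$ of $M'$ equals this same index set. Thus the two rows are identical binary vectors, and $x$ lies in the part $R_i$, matching the split of $M^B$ described in Lemma~\ref{lemma:branching to split}. The assignment is injective: elements chosen for pairs with different first coordinate lie in different parts, while those chosen for $(r_i,v_j)$ and $(r_i,v_{j'})$ with $v_j\neq v_{j'}$ have different smallest containing sets and so are different rows. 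Hence the row multiset of $M^B$ is contained in that of $M'$, i.e.\ $M^B$ is obtained from $M'$ by deleting rows, as required. I expect this correspondence to be the main obstacle, since it is where laminarity must be used most delicately (in particular the asymmetric ``meets $L_j$ but no child'' characterization of uncovered pairs and the support equality it yields).

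For the running time, the only nontrivial step is building $F$, which amounts to deciding all proper containments among $L_1,\dots,L_n$. Using the $n\times m'$ incidence matrix ${M'}^{\top}$, whose $(j,x)$ entry is $1$ iff $x\in L_j$, the product $P:={M'}^{\top}(J-M')$ with $J$ the all-ones $m'\times n$ matrix satisfies $P_{j,k}=|L_j\setminus L_k|$, so $L_j\subseteq L_k$ iff $P_{j,k}=0$. Blocking the inner dimension into $\lceil m'/n\rceil$ square multiplications computes $P$ in time $O(m'n^{\omega-1+\epsilon}+n^{\omega+\epsilon})$, which is $O(m'n^2+n^{\omega+\epsilon})$ since $\omega<3$; from $P$ together with the sizes $|L_j|$ one extracts all parents, the vertices $v_j$, and hence the branching $B$, in $O(m'n+n^2)$ further time. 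This is comparatively routine once containment is phrased as a matrix product.
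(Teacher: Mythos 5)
Your proof is correct up to one boundary case, and it constructs the same branching as the paper by a somewhat different route. The paper defines $B$ via the \emph{elementary} arcs of $D_{M'}$ (its transitive reduction, computed with the Aho--Garey--Ullman algorithm in $O(n^{\omega+\epsilon})$ after building $D_{M'}$ in $O(m'n^2)$), and verifies the branching property and the row-injection by case analysis on conflict-freeness; the laminar structure of the supports $L_j$ is never made explicit there (it only surfaces later, in Section~\ref{sec:2-approx}). You make laminarity the organizing principle: for non-empty $L_j$ the members of $\mathcal{L}$ containing it form a chain, so the parent in your laminar forest is exactly the head of the unique elementary out-arc of $v'_j$ in $D_{M'}$, and your $B$ coincides with the paper's on non-empty supports. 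What your route buys is a cleaner correctness argument --- the characterization of uncovered pairs as ``$R_i$ meets $L_j$ but no child of $L_j$,'' together with the support identity $\{k: x\in L_k\}=\{k: L_k\supseteq L_j\}$ for $x$ whose minimal containing set is $L_j$, replaces the paper's two-directional verification that $M'_{r',j}=1$ iff $v_j\in B^+(v_k)$ --- and a self-contained complexity bound: your blocked rectangular product yielding all quantities $|L_j\setminus L_k|$ in $O(m'n^{\omega-1+\epsilon}+n^{\omega+\epsilon})$ is a valid substitute for transitive reduction and fits within $O(m'n^2+n^{\omega+\epsilon})$ since $\omega<3$. Your injectivity argument (disjoint parts for distinct first coordinates, distinct minimal containing sets for distinct second coordinates) matches the paper's, which handles the second case by appealing to the last part of the proof of Lemma~\ref{lemma:branching to split}.

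The one gap: you never address $L_j=\emptyset$, which occurs exactly when $M$ has an all-zero column --- the paper's standing assumption forbids all-zero \emph{rows}, not columns, and this is precisely why its definition of $B$ carries the explicit condition $v'_i\neq\emptyset$. For the empty set, ``the smallest member of $\mathcal{L}$ properly containing it'' need not be unique (two incomparable sets both contain $\emptyset$), so your forest is ill-defined at that node, and the identity $B^+(v_j)=\{v_k: L_k\supseteq L_j\}$ would fail there no matter which parent were chosen. The fix is immediate and should be stated: declare $\emptyset$ a root, i.e., give it no outgoing arc. This preserves the branching property, and none of your subsequent claims is affected, because every uncovered pair $(r_i,v_j)$ satisfies $r_i\in v_j$, hence $L_j\neq\emptyset$, so your chain and support arguments are only ever invoked at non-empty sets, and the presence or absence of $\emptyset$ in any $N^-_B(v)$ does not change $\cup \, N^-_B(v)$.
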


\begin{proof}
Denote the rows of $M$ with  $r_1,\ldots,r_m$ and the columns with $c_1,\ldots,c_n$. Let $R_i$ be the set of split rows of $r_i$, and let $c'_i$ be the column of $M'$ corresponding to $c_i$.
For $i\in \{1,\ldots,n\}$, let $v_i=\supp_M(c_i)$ and $v'_i=\supp_{M'}(c'_i)$.
We claim that for every $i,j\in \{1,\ldots,n\}$, if $(v'_i,v'_j)$ is an arc in $D_{M'}$ then $(v_i,v_j)$ is an arc in $D_M$.
Suppose that  $(v'_i,v'_j)$ is an arc in $D_{M'}$ and $(v_i,v_j)$ is not an arc in $D_M$. It follows that $v'_i\subseteq v'_j$ and $v_i\not \subseteq v_j$. Let $r_k\in v_i\setminus v_j$. Then $M'_{r',i}=1$ for some $r'\in R_k$. Since $v'_i\subseteq v'_j$, it follows that $M'_{r',j}=1$ and consequently $r_k\in v_j$, a contradiction.

\begin{figure}[h!]
\begin{center}
\includegraphics[width=\textwidth]{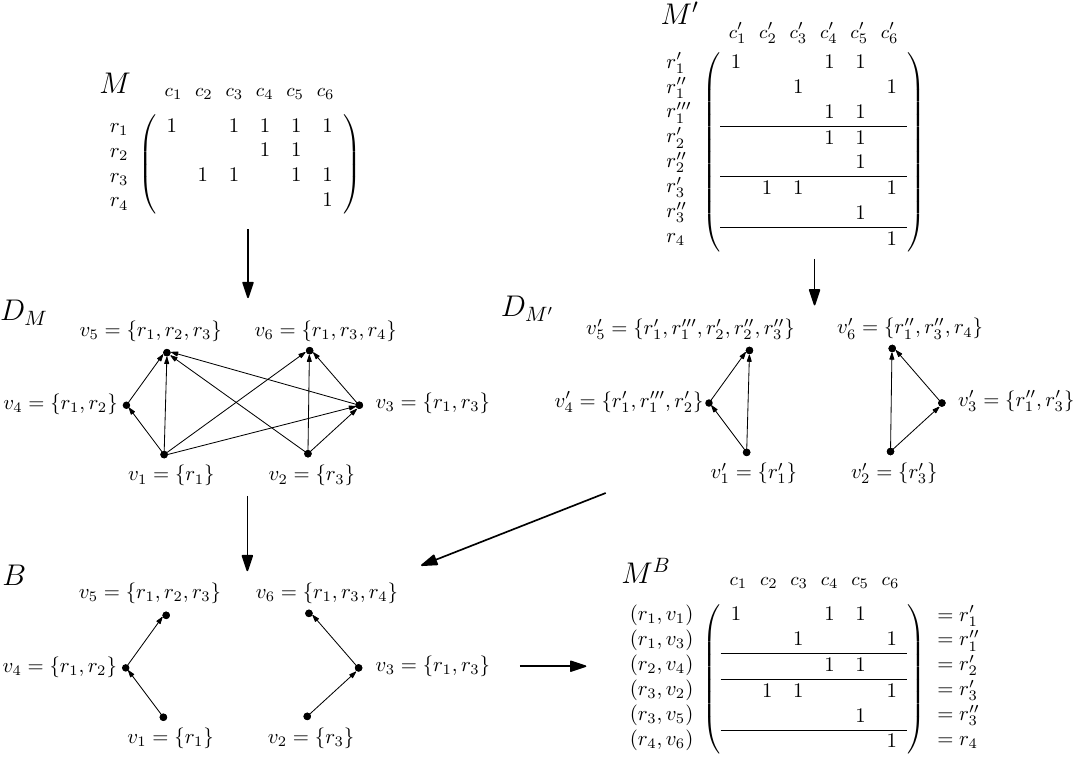}
\end{center}
\caption{An example of a binary matrix $M$, a conflict-free row split $M'$ of $M$, with their containment digraphs $D_M$ and $D_M'$, the resulting branching $B$ of $D_M$, and the resulting $B$-split $M^B$.}
\label{fig:example3}
\end{figure}

We say that an arc $(v'_i,v'_j)$ is {\em elementary} in $D_{M'}$ if there exists no $k\in \{1,\ldots,n\}$ such that both $(v'_i,v'_k)$ and $(v'_k,v'_j)$ are arcs of $D_{M'}$. Let $B$ be the subset of the arc set of $D_M$ defined by $(v_i,v_j)\in B$ if and only if $v'_i\neq \emptyset$ and $(v'_i, v'_j)$ is an elementary arc of $D_{M'}$.
We claim that $B$ is a branching of $D_M$.
Suppose that $(v_i,v_j)\in B$ and $(v_i,v_k)\in B$, for $j\neq k$.
Then, both $(v'_i, v'_j)$ and $(v'_i, v'_k)$ are elementary arcs of $D_{M'}$, which implies that $v_i' \subseteq v_j' \cap v_k'$.
Since $v'_i\neq \emptyset$ and $M'$ is conflict-free, it follows that $v_j' \subseteq v_k'$ or vice versa.
By definition of $B$, we obtain that $v_j' = v_k'$.
However, since $v_j\neq v_k$, we may assume that there exists some $r_p\in v_j\setminus v_k$, and therefore, there exists $r'\in R_p$, such that $r'\in v'_j$. Since $r_p\not \in v_k$ we have $R_p\cap v'_k=\emptyset$, contrary to the fact that $r'\in R_p\cap v'_j =R_p \cap v'_k$.
We conclude that $B$ is a branching.

Next, we prove that $M^B$ can be obtained from $M'$ by removing some rows, or, equivalently, that there exists a one-to-one mapping assigning to each row of $M^B$ an identical row of $M'$. Every row of $M^B$ is indexed by an element of $U(B)$.
Every element of $U(B)$ is of the form  $(r_i,v_k)$ with $(r_i,v_k)\in U_B(r_i)$ for some $i\in \{1,\ldots,m\}$ and $r_i\in v_k$.
To define a mapping as above, it suffices to show that there exists a row $r'$ of $M'$ such that $r'\in R_i$ and $r'$ is equal to the row of $M^B$ indexed by $(r_i,v_k)$, or more precisely  that $M'_{r',j}=1$ if and only if $v_j\in B^+(v_k)$.
First, observe that $r_i \in v_k$ implies that there exists some $r'\in R_i$ such that $M'_{r',k}=1$.

Assume that $M'_{r',j}=1$. Since  $M'_{r',k}=M'_{r',j}=1$ and $M'$ is conflict-free, it follows that either $v'_j\subseteq v'_k$ or $v'_k\subseteq v'_j$.
Suppose that $v'_j$ is a proper subset of $v'_k$ and therefore there exists a non-trivial $v'_j,v'_k$-path $P'$ consisting only of elementary arcs of $D_{M'}$. Since $r'$ is an element of $v'_j$, the set $v'_j$ is non-empty, which implies that the path $P'$ corresponds to a non-trivial $v_j,v_k$-path $P$ in $B$, therefore  $v_k\in B^+(v_j)$.
Since $M'_{r',j}=1$, and $r'\in R_i$, it follows that $r_i\in v_j$. However, this contradicts the fact that $(r_i,v_k)\in U_B(r_i)$. Therefore, $v'_k\subseteq v'_j$ and consequently $v_j\in B^+(v_k)$.
We proved that $M'_{r',j}=1$ implies that  $v_j\in B^+(v_k)$.

Suppose now that $v_j\in B^+(v_k)$. If $v_j=v_k$, then $j=k$ and $M'_{r',j}=M'_{r',k}=1$, as desired.
If $v_j\neq v_k$, then  since $v_j\in B^+(v_k)$, it follows that $v'_k\subseteq v'_j$. Combining this with the fact that $M'_{r',k}=1$, we conclude that $M'_{r',j}=1$. This completes the proof that the row $r'$ of $M'$ is equal to the row of $M^B$ indexed by $(r_i,v_k)$.

The above considerations imply the existence of a mapping assigning to each row of $M^B$ an identical row of $M'$. In fact, any mapping as defined above is also one-to-one, which can be seen as follows.
First, two rows of $M^B$ indexed by elements of $U(B)$ with distinct first coordinates, say $r_i$ and $r_j$, will be mapped to rows of $M'$ from $R_i$ and $R_j$, respectively, and by construction the sets $R_i$ and $R_j$ are disjoint.
Second, suppose we have two rows of $M^B$ indexed by elements of $U(B)$ with identical first coordinates but distinct second coordinates, say $(r_i,v_j)$ and $(r_i,v_k)$. The last part of the proof of Lemma~\ref{lemma:branching to split} implies that no two rows of $M^B$ indexed by pairs that differ in the values of their second coordinates are identical. Consequently, the images of rows of $M^B$ indexed by  $(r_i,v_j)$ and $(r_i,v_k)$ are also not identical (as binary vectors), and therefore they correspond to different rows of $M'$.

We conclude that $M^B$ can be obtained from $M'$ by deleting some rows.

It remains to estimate the time complexity of computing branching $B$. First, we compute the containment digraph $D_{M'}$ in time ${\mathcal{O}}(m'n^2)$.
Second, we compute the set $A'$ of elementary arcs of $D_{M'}$ in time ${\mathcal{O}}(n^{\omega})$ using the algorithm of Aho et al.~\cite{MR0306032}. Finally, branching $B$ can be computed from $A'$ in time ${\mathcal{O}}(|A'|) = {\mathcal{O}}(n^2)$. The claimed running time follows.
\end{proof}

Now we have everything ready to prove Theorem~\ref{thm:beta=gamma}.

\begin{sloppypar}
\begin{theorem1}
For every binary matrix $M\in \{0,1\}^{m\times n}$ with exactly $k$ distinct columns, the following holds:
\begin{enumerate}
  \item Any branching $B$ of $D_M$ can be transformed in time ${\mathcal{O}}(kmn)$ to a conflict-free row split of $M$ with exactly $|U(B)|$ rows and with exactly $|I(B)|$ distinct rows.

  \item Any conflict-free row split $M'\in \{0,1\}^{m'\times n}$ of $M$ can be transformed in time
  \hbox{${\mathcal{O}}(mn+m'k^2+k^{\omega})$} to a branching $B$ of $D_M$ such that $|U(B)|$ is at most the number of rows of $M'$ and
$|I(B)|$ is at most the number of distinct rows of $M'$.
\end{enumerate}
Consequently, for every binary matrix $M$, we have $\gamma(M)=\beta(M)$ and $\eta(M)=\zeta(M)$.
\end{theorem1}
\end{sloppypar}

\begin{proof}
Let $B$ be a branching of $D_M$. By Lemma~\ref{lemma:branching to split}, it suffices to show that
$M^B$, the $B$-split of $M$, can be computed in time ${\mathcal{O}}(kmn)$.
This can be achieved as follows. First, we compute the reduced matrix $\rd(M)$ in time ${\mathcal{O}}(mn)$ using radix sort.
Second, we compute the containment digraph $D_M$ in time ${\mathcal{O}}(k^2m) = {\mathcal{O}}(kmn)$ by performing pairwise comparisons of columns of $\rd(M)$.
Third, we compute the set $U(B)$ in time ${\mathcal{O}}(k^2m) = {\mathcal{O}}(kmn)$ by checking for each of the $k$ vertices $v\in V(D_M)$,
each of the ${\mathcal{O}}(m)$ elements $r\in v$, and each of the ${\mathcal{O}}(k)$ in-neighbors $u$ of $v$ in $D_M$ whether $r\in u$.
Fourth, in time ${\mathcal{O}}(k^2)$ we compute for each $v\in V(D_M)$ the set $B^+(v)$.
Finally, in time ${\mathcal{O}}(|U(B)|n)$ we compute the matrix $M^B$ using the definition. Note that
$|U(B)|\le km$, hence ${\mathcal{O}}(|U(B)|n) = {\mathcal{O}}(kmn)$ and the claimed time complexity follows.

Now, let $M'\in \{0,1\}^{m'\times n}$ be a conflict-free row split of $M$.

Consider first the case when $M$ is without duplicated columns.
In this case $k = n$ and by Lemma~\ref{lemma:split to branching}, in time ${\mathcal{O}}(m'n^2+
n^{\omega})$ a branching $B$ of $D_M$ can be computed such that $M^B$, the $B$-split of $M$, can be obtained from $M'$ by removing some rows. Since $|U(B)|$ (resp., $|I(B)|$) equals the number of rows (resp., the number of distinct rows) of $M^B$, this implies that $|U(B)|$ is at most the number of rows of $M'$ and $|I(B)|$ is at most the number of distinct rows of $M'$.

Consider now the general case. Let $X$ be a set of $k$ columns of $M$ such that the matrix $\rd(M)$ can be identified with the submatrix of $M$ obtained by considering only the columns in $X$. Let $M''\in \{0,1\}^{m'\times k}$ be the submatrix of $M'$ obtained by considering only the $k$ columns in $X$. Then, $M''$ is a conflict-free row split of $\rd(M)$. Note that matrix $M''$ can be computed in time proportional to its size, $m'k$, plus the number of columns, $n$, plus the time it takes to compute $\rd(M)$, which can be done in time ${\mathcal{O}}(mn)$ using radix sort. Since $\rd(M)$ is without duplicated columns, we have by the previous case
that in time ${\mathcal{O}}(m'k^2+k^{\omega}))$ a branching $B$ of $D_{\rd(M)} = D_{M}$ can be computed
such that $|U(B)|$ is at most the number of rows of $M''$ and
$|I(B)|$ is at most the number of distinct rows of $M''$.
Since the number of rows of $M''$ equals the number of rows of $M'$,
this immediately implies that $|U(B)|$ is at most the number of rows of $M'$.
Also, by construction, any two distinct rows of $M''$ correspond to a pair of distinct rows of $M'$ and hence
the number of distinct rows of $M''$ is at most the number of distinct rows of $M'$.
This implies that $|I(B)|$ is at most the number of distinct rows of $M'$.
The total time complexity is ${\mathcal{O}}(m'k+mn+m'k^2+k^{\omega})= {\mathcal{O}}(mn+m'k^2+k^{\omega})$, which establishes the second part of the theorem.

Finally, we show that $\gamma(M)=\beta(M)$ and $\eta(M)=\zeta(M)$.
Let $B$ be a branching of $D_M$ such that $|U(B)| = \beta(M)$.
By the first part of the theorem, there exists a conflict-free row split $M'$ of $M$ with $|U(B)|$ rows,
therefore $\gamma(M)\le |U(B)| = \beta(M)$.
Conversely, if $M'$ is a conflict-free row split of $M$ with $\gamma(M)$ rows, then there exists a
branching $B$ of $D_M$ such that $|U(B)|$ is at most the number of rows of $M'$ (that is, $\gamma(M)$).
This implies $\beta(M)\le |U(B)|\le \gamma(M)$. Therefore, $\gamma(M)=\beta(M)$.
The proof of equality $\eta(M)=\zeta(M)$ is analogous.
\end{proof}

\section{A Strengthening of Dilworth's Theorem and its Connection to the
 Minimum Conflict-Free Row Split Problem}\label{sec:Dilworth}

By Theorem~\ref{thm:beta=gamma}, the MCRS problem can be concisely formulated in terms of a problem on branchings in a derived digraph. As shown by Hujdurovi\' c et al.~in~\cite{tcbb16}, the MCRS problem is \np-hard; consequently, the MUB problem is also \np-hard. In this section we show that a related problem in which we examine only a subset of all the branchings of the containment digraph of the input binary matrix is polynomially solvable. This will be achieved by deriving, in Section~\ref{sec:min-max}, a min-max theorem generalizing the classical Dilworth's theorem on partially ordered sets, which may be of independent interest. The resulting heuristic algorithm will be described in Section~\ref{sec:heuristic} (see also Remark~\ref{remark:tight} on p.~\pageref{remark:tight}).

\subsection{A Min-Max Relation Strengthening Dilworth's Theorem}\label{sec:min-max}

This section can be read independently of the rest of the paper.

Consider a pair $(D,\smallPi)$ where $D=(V,A)$ is a DAG
and $\smallPi : V \to \Zplus$ is a \emph{weight function} of $D$. (We use $\Zplus$ for the set of non-negative integers.)
The weight function $\smallPi$ is called \emph{monotone} if $\smallPi_u \leq \smallPi_v$
for every $(u,v)\in A$.

In $D$, a \emph{non-trivial path} is a directed path with at least one arc.
We denote by $D^t$ the transitive closure of $D$, that is, the DAG $(V,A^t)$ on the same vertex set as $D$ having an arc $(u,v)\in A^t$ if and only if there exists a non-trivial path in $D$ from $u$ to $v$.
A chain in $D$ is a sequence of vertices $C = (v_1, v_2, \ldots, v_s)$
such that $(v_i,v_{i+1})\in A^t$ for all $i\in \{1,\ldots, s-1\}$;
sometimes we regard $C$ as the set of its vertices $C = \{v_1, v_2, \ldots, v_s\}$.
The \emph{price of chain} $C$ is given by
$\bigPi(C) = \max_{v\in C} \smallPi_v$.
A family of vertex-disjoint chains $P = \{C_1, \ldots, C_p\}$
is called a \emph{chain partition} of $D$ if every vertex of $D$ is contained
in precisely one chain of $P$.
The \emph{price} of chain partition $P$ is defined as
$\bigPi(P) = \sum_{i=1}^p \bigPi(C_i)$.
Consider the following problem.

\medskip
\begin{center}
\fbox{\parbox{0.98\linewidth}{\noindent
{\sc MinimumPriceChainPartition:}\\[.8ex]
\begin{tabular*}{.95\textwidth}{rl}
{\em Input:} & A DAG $D = (V,A)$ and a monotone weight function $\smallPi : V \to \Zplus$ of $D$.\\
{\em Task:} & Compute a chain partition $P$ of $D$ such that the price $\bigPi(P)$  is minimum possible.
\end{tabular*}
}}
\end{center}
\medskip

In this section we give a polynomial-time algorithm and a min-max
characterization for the above problem.
As can be expected, the notion of antichain
will play a main role in this min-max characterization.
An \emph{antichain} of $D$ is a set of vertices $N\subseteq V$
such that $N$ is an independent set (that is, a set of pairwise non-adjacent vertices) in $D^t$;
in other words, no non-trivial path of $D$ has both endpoints in $N$.
Note that $|C\cap N| \leq 1$
for any chain $C$ and any antichain $N$.
The \emph{width} of $D$,
denoted by ${\it wdt}(D)$, is the maximum cardinality of an antichain in $D$.

A classical theorem of Dilworth states that ${\it wdt}(D)$
equals the minimum number of chains in a chain partition of $D$~\cite{Dilworth}.
Moreover, a chain partition of $D$ into ${\it wdt}(D)$ chains can be computed in time $\widetilde{\mathcal{O}}(n^\omega)$ where $n = |V(D)|$, $\omega$ is any real number such that there exists an ${\mathcal{O}}(n^{\omega})$ algorithm for multiplying two $n\times n$ binary matrices (e.g., $\omega=2.373$), and the $\widetilde{\mathcal{O}}(\cdot)$ notation ignores logarithmic factors. Indeed, by applying the approach of Fulkerson~\cite{MR0078334} (see also~\cite{MR545530,makinen2015genome}), a minimum chain partition of $D$ can be computed by solving a maximum matching problem in a derived bipartite graph having $2n$ vertices. This can be done in time $\widetilde{\mathcal{O}}(n^\omega)$ using the algorithm of~\cite{MR636312}.\footnote{Alternatively, one could use the bipartite matching algorithm from~\cite{MR1356505} to obtain the (incomparable) running time of ${\mathcal O}(\sqrt{n}m\log_n(n^2/m))$ where $m=|A^t|$ is the number of edges in the transitive closure of $D$. For the sake of simplicity of presentation, we state the theorem with the running time resulting from using the Ibarra-Moran algorithm.}

For later use, we summarize these facts as follows.

\begin{theorem}[Dilworth's theorem]\label{thm:Dilworth}
Every DAG $D$ admits a chain partition of size ${\it wdt}(D)$.
Such a chain partition can be computed in time
$\widetilde{\mathcal{O}}(|V(D)|^\omega)$.
\end{theorem}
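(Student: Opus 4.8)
The plan is to prove the min-max equality (that the minimum size of a chain partition equals ${\it wdt}(D)$) together with the algorithmic bound through the classical reduction of Fulkerson to bipartite matching, closing the min-max gap via K\"onig's theorem. Throughout I work with the partial order $(V,<)$ induced by the transitive closure, writing $u<v$ for $(u,v)\in A^t$; a chain in the sense of the excerpt is then exactly a chain of this poset, and an antichain is exactly an antichain of this poset. The trivial direction is immediate: since $|C\cap N|\le 1$ for every chain $C$ and every antichain $N$, any chain partition $P$ of $D$ needs at least $|N|$ chains to cover a maximum antichain $N$, so ${\it wdt}(D)$ is a lower bound on the minimum number of chains. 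It remains to exhibit a chain partition attaining this bound and to produce it efficiently.

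First I would set up Fulkerson's bipartite graph $H$. Put $n=|V|$ and take two disjoint copies $V^+=\{v^+:v\in V\}$ and $V^-=\{v^-:v\in V\}$ of the vertex set, joining $u^+$ to $v^-$ by an edge exactly when $u<v$. The key correspondence is that a matching $M$ in $H$ and a chain partition of $D$ are two views of the same object. Given a chain partition into chains $C_1,\dots,C_p$, each chain $(w_1<w_2<\cdots<w_\ell)$ contributes the $\ell-1$ edges $w_1^+w_2^-,\,w_2^+w_3^-,\dots$; since distinct chains are vertex-disjoint and each link uses a distinct $\cdot^+$ and a distinct $\cdot^-$ endpoint, these edges form a matching of size $\sum_i(|C_i|-1)=n-p$. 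Conversely, in any matching each vertex is the $\cdot^+$-endpoint of at most one edge and the $\cdot^-$-endpoint of at most one edge, so following the links produces vertex-disjoint directed paths; transitivity of $<$ guarantees each such path is a chain, and the resulting chain partition has $n-|M|$ chains. Hence the minimum number of chains equals $n-\nu(H)$, where $\nu(H)$ is the maximum matching number of $H$.

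The crux is to certify that this minimum equals ${\it wdt}(D)$, that is, to produce an antichain of size $n-\nu(H)$. Here I would invoke K\"onig's theorem: in the bipartite graph $H$ there is a vertex cover $K=K^+\cup K^-$ (with $K^+\subseteq V^+$ and $K^-\subseteq V^-$) of size $\nu(H)$. Define $A=\{v\in V: v^+\notin K^+ \text{ and } v^-\notin K^-\}$. If some $u,v\in A$ satisfied $u<v$, then the edge $u^+v^-$ would be uncovered, contradicting that $K$ is a cover; thus $A$ is an antichain. Moreover the set of vertices excluded from $A$ has size at most $|K^+|+|K^-|=\nu(H)$, so $|A|\ge n-\nu(H)$. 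Since $A$ is an antichain, ${\it wdt}(D)\ge|A|\ge n-\nu(H)$, and together with the trivial bound ${\it wdt}(D)\le n-\nu(H)$ this forces $|A|={\it wdt}(D)=n-\nu(H)$; the matching of size $\nu(H)$ then yields a chain partition of exactly this size. This K\"onig step is the only nonroutine part: the care lies in reading an antichain off a minimum vertex cover and checking that the cardinalities match on both sides.

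Finally, for the running time I would compute the comparability relation $A^t$ (well within the stated bound, e.g.\ via fast matrix multiplication, as $\omega<5/2$), build $H$ on $2n$ vertices, and compute a maximum matching with the Hopcroft--Karp algorithm in time $O(|E(H)|\sqrt{|V(H)|})=O(n^2\cdot\sqrt{n})=O(n^{5/2})$. The chain partition of size ${\it wdt}(D)$ is then read off from the matching in linear time by following the links as described above, which establishes both the existence claim and the $O(|V(D)|^{5/2})$ bound.
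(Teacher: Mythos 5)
Your proposal is correct and follows essentially the same route as the paper, which establishes this theorem by citing Fulkerson's reduction of minimum chain partition to maximum bipartite matching on a derived graph with $2n$ vertices, solved via Hopcroft--Karp in $O(n^{5/2})$ time. You merely flesh out what the paper leaves to the references (the chain-partition/matching correspondence and the K\"onig-based antichain certificate), so the two arguments are the same in substance.
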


Our characterization will be a refinement of Dilworth's theorem
and its algorithmic proof makes use of Dilworth's theorem as a subroutine.
We must introduce one further notion however.
A \emph{tower} of antichains of $D$
is a sequence of antichains of $D$,
$T = (N_1, N_2, \ldots, N_{{\it wdt}(D)} )$,
with $|N_i| = i$.
The \emph{value of an antichain} $N$ is given by
${\it val}(N) = \min_{v\in N} \smallPi_v$
and the \emph{value} of tower $T = (N_1, N_2, \ldots, N_{{\it wdt}(D)} )$
is defined as
${\it val}(T) = \sum_{i=1}^{{\it wdt}(D)} {\it val}(N_i)$.

To appreciate the purpose of this notion, we begin with a simple observation.

\begin{sloppypar}
\begin{lemma}\label{lem:theLowerBound}
Let $D$ be a DAG, let $P = \{C_1, \ldots, C_p\}$ be a chain partition of $D$,
and let $T = (N_1, N_2, \ldots, N_{{\it wdt}(D)})$ be a tower of antichains of $D$.
Then, $\bigPi(P) \geq {\it val}(T)$ even if the weight function $\smallPi$ is not monotone.
\end{lemma}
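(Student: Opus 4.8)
The plan is to prove the inequality by comparing, term by term, a \emph{sorted} list of the chain prices against the values of the antichains in the tower. The starting point is the elementary but crucial fact already recorded just above the statement: for any chain $C$ and any antichain $N$ we have $|C\cap N|\le 1$. Since $P$ partitions $V(D)$ into chains, each vertex of an antichain lies in exactly one chain of $P$, and no two distinct vertices of the same antichain can share a chain. Consequently, for each index $i$, the $i$ vertices of $N_i$ (recall $|N_i|=i$) are distributed among $i$ \emph{distinct} chains of $P$.

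The second step turns this into a statement about chain prices. Fix $i$ and let $v\in N_i$ lie in the chain $C\in P$. Then $\Pi(C)=\max_{u\in C}\pi_u \ge \pi_v \ge \min_{w\in N_i}\pi_w = {\it val}(N_i)$. Applying this to all $i$ vertices of $N_i$, which occupy $i$ distinct chains, shows that \emph{at least $i$ chains of $P$ have price at least ${\it val}(N_i)$}. I note that monotonicity of $\pi$ is never invoked here, in accordance with the phrasing of the lemma.

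The third step is a rearrangement argument. I would list the prices of the chains of $P$ in non-increasing order as $p_{(1)}\ge p_{(2)}\ge\cdots\ge p_{(p)}$, where $p=|P|$. The conclusion of the previous step, namely ``at least $i$ chains have price $\ge {\it val}(N_i)$'', is precisely the assertion that the $i$-th largest price satisfies $p_{(i)}\ge {\it val}(N_i)$ for every $i\in\{1,\ldots,{\it wdt}(D)\}$. Finally, since $N_{{\it wdt}(D)}$ is an antichain of size ${\it wdt}(D)$ and each chain meets it at most once, $P$ must consist of at least ${\it wdt}(D)$ chains, i.e.\ $p\ge {\it wdt}(D)$; combined with the non-negativity of $\pi$ this lets me discard the remaining terms and conclude
$$\Pi(P)=\sum_{j=1}^{p}p_{(j)}\;\ge\;\sum_{i=1}^{{\it wdt}(D)}p_{(i)}\;\ge\;\sum_{i=1}^{{\it wdt}(D)}{\it val}(N_i)={\it val}(T).$$

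I expect the only genuinely delicate point to be the bookkeeping in the third step: one must check that the family of ``threshold'' inequalities produced in step two pins down the \emph{sorted} prices index by index (rather than merely bounding their sum), and that truncating the sum at ${\it wdt}(D)$ is legitimate, which rests on the non-negativity of $\pi$ together with $p\ge {\it wdt}(D)$. Everything else — the disjointness of the chains meeting a fixed antichain, and the price-versus-value comparison — is immediate from the definitions.
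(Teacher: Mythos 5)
Your proof is correct, and its skeleton matches the paper's: both arguments rest on the two elementary facts that $|C\cap N|\le 1$ and that any chain meeting an antichain $N$ has price at least ${\it val}(N)$, both use $|P|\ge {\it wdt}(D)$, and both discard the surplus chains, which tacitly requires non-negativity of the prices. Where you genuinely diverge is in the bookkeeping that pairs chains with antichains. The paper asserts that the chains can be renamed $\tilde C_1,\ldots,\tilde C_p$ so that $\tilde C_i$ meets $N_i$ \emph{simultaneously} for all $i\le {\it wdt}(D)$; this is a system-of-distinct-representatives claim, left unproved in the paper, though it does hold: the set $S_i$ of chains meeting $N_i$ has $|S_i|=i$, and Hall's condition follows from $|\bigcup_{i\in I}S_i|\ge |S_{\max I}|=\max I\ge |I|$. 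You instead establish only the weaker per-index statement that at least $i$ chains have price at least ${\it val}(N_i)$, hence $p_{(i)}\ge {\it val}(N_i)$ for the sorted prices, and then sum. These per-index bounds are strictly easier to verify (no Hall-type or exchange argument is needed) and suffice for the inequality, so your route is, if anything, more self-contained than the paper's; it also makes explicit the two facts the paper uses silently in the truncation step, namely $p\ge {\it wdt}(D)$ and $\Pi(C)\ge 0$ (the latter because $\pi$ maps into $\Zplus$ and every chain of a partition is nonempty). Finally, the step you flag as delicate is in fact routine: ``at least $i$ chains have price $\ge t$'' is equivalent to $p_{(i)}\ge t$, since $p_{(i)}<t$ would leave at most $i-1$ prices that are $\ge t$.
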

\end{sloppypar}

\begin{proof}
For every chain $C$ and every antichain $N$
we have that $|C\cap N| \leq 1$.
Moreover, if $|C\cap N| = 1$,
then $\bigPi(C) \geq {\it val}(N)$.
Indeed, if $C\cap N = \{z\}$, then
\[
   \bigPi(C) =\max_{v\in C} \smallPi_v \geq \smallPi_z \geq \min_{v\in N} \smallPi_v = {\it val}(N).
\]
Since $P$ is a chain partition of $D$, then $|P| \geq {\it wdt}(D)$,
and we can always rename its chains as
$\tilde{C}_1, \tilde{C}_2, \ldots, \tilde{C}_p$
in such a way that,
for every $i=1,\ldots, {\it wdt}(D)$,
chain $\tilde{C}_i$ intersects the antichain $N_i$.
At this point,
\[
   \bigPi(P) = \sum_{i=1}^p \bigPi(C_i)
          = \sum_{i=1}^p \bigPi(\tilde{C}_i)
          \geq \sum_{i=1}^{{\it wdt}(D)} \bigPi(\tilde{C}_i)
          \geq \sum_{i=1}^{{\it wdt}(D)} {\it val}(N_i)
          = {\it val}(T).
\]
\end{proof}

For the case of monotone weight functions, the following min-max strengthening of Dilworth's theorem holds.

\begin{theorem}\label{thm:min-max}
Let $D$ be a DAG and let $\smallPi$ be a monotone weight function of $D$.
Then $D$ admits a chain partition $P = \{C_1, \ldots, C_{{\it wdt}(D)}\}$
and a tower of antichains $T = (N_1, N_2, \ldots, N_{{\it wdt}(D)} )$
such that $\bigPi(P) = {\it val}(T)$.
Such a pair $(P,T)$ can be computed in time
$\widetilde{\mathcal{O}}(|V(D)|^{\omega+1})$.
\end{theorem}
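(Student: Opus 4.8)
The plan is to retain the lower bound of Lemma~\ref{lem:theLowerBound} as one half, and, for monotone $\pi$, to construct an explicit pair $(P,T)$ meeting it. First I would fix a threshold decomposition. Let $0=w_0<w_1<\cdots<w_k$ be the distinct values of $\pi$, and for each $j$ put $U_j=\{v\in V:\pi_v\ge w_j\}$. Monotonicity makes every $U_j$ closed under out-neighbours, so $U_j$ induces a sub-DAG; write $W_j={\it wdt}(U_j)$. Since $U_k\subseteq\cdots\subseteq U_1=V$, the sequence $W_1\ge W_2\ge\cdots\ge W_k$ is non-increasing with $W_1={\it wdt}(D)$. The common target value will be $\Phi=\sum_{j=1}^k (w_j-w_{j-1})\,W_j$.

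The tower is the easy half and certifies $\Phi$ from below. For $i=1,\dots,{\it wdt}(D)$ set $j(i)=\max\{j:W_j\ge i\}$; as $U_{j(i)}$ has an antichain of size $W_{j(i)}\ge i$, choose $N_i\subseteq U_{j(i)}$ to be an antichain with $|N_i|=i$. Each vertex of $N_i$ has weight at least $w_{j(i)}$, so ${\it val}(N_i)\ge w_{j(i)}$. Writing $w_{j(i)}=\sum_{j\le j(i)}(w_j-w_{j-1})$ and exchanging the two summations (using the equivalence $j(i)\ge j\iff W_j\ge i$) gives $\sum_{i=1}^{W_1}w_{j(i)}=\sum_j (w_j-w_{j-1})W_j=\Phi$. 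Hence the tower $T=(N_1,\dots,N_{{\it wdt}(D)})$ satisfies ${\it val}(T)\ge\Phi$.

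The heart of the matter is the chain partition. I would aim for a single chain partition $P$ with exactly ${\it wdt}(D)$ chains that is \emph{simultaneously minimum on every $U_j$}: for each $j$ the nonempty sets $C\cap U_j$ ($C\in P$) form a minimum chain partition of $U_j$, so that precisely $W_j$ chains of $P$ meet $U_j$. Granting such a $P$, monotonicity gives $\Pi(C)=\sum_j (w_j-w_{j-1})\,[\,C\text{ meets }U_j\,]$ for each chain, and summing over $C$ yields $\Pi(P)=\sum_j (w_j-w_{j-1})W_j=\Phi$. Combined with the tower and Lemma~\ref{lem:theLowerBound}, the estimates $\Phi\le {\it val}(T)\le \Pi(P)=\Phi$ collapse to $\Pi(P)={\it val}(T)$, and in particular $P$ has the claimed size ${\it wdt}(D)$.

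The main obstacle is precisely the existence of a simultaneously minimum $P$, and here I expect the naive approach to fail. Processing $U_k,U_{k-1},\dots$ top-down and \emph{preserving} a fixed minimum chain partition of the inner up-set does not succeed: one can build small instances in which every minimum chain partition of $U_j$ that restricts to one \emph{specific} minimum partition of $U_{j+1}$ needs more than $W_j$ chains, while rerouting inside $U_{j+1}$ to a different, still minimum, partition does admit a minimum extension. I would therefore avoid any ``preserve-the-partition'' extension lemma and obtain $P$ globally. Modelling chain partitions as integral $s$-$t$ flows in the Fulkerson split network, with a unit lower bound on each vertex arc and cost $\pi_v$ on the arc by which a chain exits at its maximal vertex $v$, turns $\min_P\Pi(P)$ into a min-cost flow problem; total unimodularity furnishes an integral optimum (a chain partition) and linear programming duality identifies its value with $\Phi$, which in turn forces minimality on every $U_j$ simultaneously. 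For the running time I would use the threshold view directly: sweep $U_k,\dots,U_1$, maintaining a minimum chain partition by one maximum-matching computation per threshold and rerouting within already-placed layers as needed; with at most $n$ thresholds and Theorem~\ref{thm:Dilworth} giving matching in $O(n^{5/2})$ per threshold, the total is $O(n^{7/2})$.
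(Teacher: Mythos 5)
Your lower-bound half is sound: the thresholds $U_j$, the tower with ${\it val}(N_i)\ge w_{j(i)}$, the exchange of summations giving ${\it val}(T)\ge\Phi$, and the telescoping identity $\Pi(P)=\sum_j(w_j-w_{j-1})\cdot|\{C\in P: C\cap U_j\neq\emptyset\}|$ are all correct (monotonicity makes each $U_j$ an up-set, so restrictions of chains to $U_j$ remain chains of $D[U_j]$ and antichains of $D[U_j]$ are antichains of $D$). You have also correctly located the crux, and your claim that a naive ``preserve the inner partition'' induction fails is true: for instance, with relations $a<b$, $a<c$, $x<c$ and the up-set $\{a,b,c\}$, the inner minimum partition $\{(a,c),(b)\}$ is not the restriction of any minimum chain partition of the whole DAG, while $\{(a,b),(c)\}$ is. But at exactly this crux your proof has a genuine gap: ``total unimodularity furnishes an integral optimum and linear programming duality identifies its value with $\Phi$'' is an assertion, not an argument. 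TU and strong duality tell you that the min-cost flow optimum is attained integrally and equals the dual optimum; they do not tell you what that common value is. Your tower only yields $\min_P\Pi(P)\ge\Phi$; the inequality you actually need, $\min_P\Pi(P)\le\Phi$ (equivalently, that the dual optimum is at most $\Phi$, or that some flow of cost $\Phi$ exists), is precisely the hard half of the theorem, and nothing in the proposal proves it. Note that no easy primal certificate is available: the unit lower bounds on all vertex arcs force any feasible flow to decompose into full chain partitions of $V$, whose prices are a priori only $\ge\Phi$, so convex combinations of threshold-wise Dilworth partitions do not give a fractional solution of cost $\le\Phi$; bounding the dual instead would require an uncrossing argument you have not supplied. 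The same gap reappears in your running-time sketch, where ``rerouting within already-placed layers as needed'' names the unresolved step rather than resolving it.

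The paper closes this gap with a vertex-by-vertex induction equipped with a concrete rerouting rule. Remove a source $v$ of globally minimum weight and take an optimal pair $(P',T')$ for $D-v$ by induction. If the width grows, add the singleton chain $(v)$ to $P'$ and a maximum antichain of $D$ to $T'$; that antichain must contain $v$, whose weight is the global minimum, so price and value stay matched. If the width is unchanged, pick a maximum antichain $T$ of $D-v$, let $\widehat T$ be the set of vertices admitting a non-trivial path into $T$, keep the inductive chains only on $V\setminus\widehat T$, recompute from scratch (via Dilworth's theorem) a partition of $D[T\cup\widehat T]$ into $|T|$ chains, and splice the two partitions at $T$. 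Since every vertex of $\widehat T$ lies strictly below $T$, each $t_i$ is the top of its lower chain, and monotonicity guarantees the splice does not raise any chain's price, so $\Pi(\tilde P)\le\Pi(P')={\it val}(T')$ while $T'$ remains a valid tower for $D$. This ``discard everything below a maximum antichain and recompute'' move is exactly the mechanism your sweep is missing; with it, each of the $n$ steps costs one Dilworth computation in $O(n^{5/2})$, giving the stated $O(n^{7/2})$. To rescue your approach you would either adapt this splicing to your threshold sweep, or genuinely carry out the dual analysis of your flow LP; as written, the existence of a chain partition of price $\Phi$ is assumed, not proved.
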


\begin{proof}
The proof is by induction on $n= |V(D)|$.
Clearly, the statement holds for $n=1$.
As for the inductive step, let $n>1$ and consider a vertex $v\in V(D)$ without any incoming arcs and such that
$\smallPi_v\le \smallPi_{v'}$ for all $v'\in V(D)$.
Such a vertex exists since the subgraph of $D$ induced by the set of vertices achieving the minimum
value of $\smallPi$ is acyclic.
Let $D' = D-v$, and consider a chain partition
$P' = \{C_1, \ldots, C_{{\it wdt}(D')}\}$
of $D'$ and a tower of antichains
$T' = (N_1, N_2, \ldots, N_{{\it wdt}(D')})$
of $D'$ such that $\bigPi(P') = {\it val}(T')$.

Two cases are possible.
If ${\it wdt}(D) > {\it wdt}(D')$
then let $P$ be obtained from $P'$
by adding a chain $C$ comprising the sole vertex $v$
and let $T$ be obtained from $T'$
by adding any antichain $N_{{\it wdt}(D)}$
of $D$ such that $|N_{{\it wdt}(D)}| = {\it wdt}(D)$.
Since $|N_{{\it wdt}(D)}| > {\it wdt}(D')$, we infer that $v\in N_{{\it wdt}(D)}$
and hence $\min_{u\in N_{{\it wdt}(D)}} \smallPi_u = \smallPi_{v} = \bigPi(C)$.
Therefore, $\bigPi(P) = {\it val}(T)$ closing the induction in this case.

Assume therefore that ${\it wdt}(D) = {\it wdt}(D')$.
Let $T$ be an antichain in $D'$ with $|T| = {\it wdt}(D)$
and let $\widehat{T}$ be the set of vertices of $D$
from which there is a non-trivial path to a vertex of $T$.
Notice that $v \in \widehat{T}$
since $T\cup \{v\}$ is not an antichain and $v$ is a source vertex of $D$.
The DAG $D[V(D)\setminus \widehat{T}]$ is an acyclic subgraph of $D'$ of width at least $|T| = wdt(D')$, since $T$ is a subset of its vertex set. Moreover, while the width of an arbitrary induced subgraph can in general increase with vertex removal, this is not the case for $D[V(D)\setminus \widehat{T}]$, because any path in $D$ between two vertices of $V(D)\setminus \widehat{T}$ is also a path in $D[V(D)\setminus \widehat{T}]$, by the choice of $\widehat{T}$. It follows that the DAG $D[V(D)\setminus \widehat{T}]$ is of width $|T| = {\it wdt}(D')$; hence, by the inductive hypothesis, it
admits a chain partition $P^T = \{C^T_1, \ldots, C^T_{{\it wdt}(D')}\}$ with $\bigPi(P^T)\leq \bigPi(P') = {\it val}(T')$. (Indeed, we could just take $C^T_i := C_i\setminus \widehat{T}$ for every $i\in\{1,\ldots, {\it wdt}(D')\}$.)
Also the acyclic subgraph $D[T\cup \widehat{T}]$ of $D$ has width $|T| = {\it wdt}(D')$;
hence, by Dilworth's theorem it admits a chain partition
$P^{\widehat{T}} = \{C^{\widehat{T}}_1, C^{\widehat{T}}_2, \ldots, C^{\widehat{T}}_{{\it wdt}(D')}\}$ covering all its vertices.
Now we construct our chain partition for $D$:
let $T = \{t_1, t_2, \ldots, t_{|T|}\}$.
After a possible renaming of the chains in the two chain partitions,
we can assume that
$C^T_i\cap T = \{t_i\}$ and $C^{\widehat{T}}_i\cap T = \{t_i\}$
for every $i=1,\ldots, {\it wdt}(D') = {\it wdt}(D) = |T|$,
and hence define the chain $\tilde{C}_i = C^{\widehat{T}}_i\cup C^T_i$.
(Indeed, $t_i$ will be the last vertex of $C^{\widehat{T}}_i$
and the first vertex of $C^{T}_i$, thus this chaining of chains can be performed.)
Note that $\tilde{P} := \{\tilde{C}_1, \ldots, \tilde{C}_{{\it wdt}(D)}\}$
is a chain partition of $D$
with $\bigPi(\tilde{P}) = \bigPi(P^T)\leq \bigPi(P') = {\it val}(T')$
Clearly, $T'$ is a valid tower of antichains for $D$.

The above proof of the existence of a pair $(P,T)$ of a chain partition and a tower of antichains of $D$ satisfying $\bigPi(P) = {\it val}(T)$
is constructive and can be turned into a
$\widetilde{\mathcal{O}}(|V(D)|^{\omega+1})$
time algorithm for computing such a pair $(P,T)$.
Indeed, at each step of the algorithm, we delete one vertex, make one recursive call to the algorithm, compute the set $\widehat T$ and the acyclic subgraph $D[T\cup \widehat{T}]$ together with a chain partition $P^{\widehat{T}} = \{C^{\widehat{T}}_1, C^{\widehat{T}}_2, \ldots, C^{\widehat{T}}_{{\it wdt}(D')}\}$ covering all its vertices. The time complexity of each step is dominated by computing $P^{\widehat{T}}$. By Theorem~\ref{thm:Dilworth}, this can be done in time
$\widetilde{\mathcal{O}}(|V(D)|^\omega)$.
The claimed time complexity of
$\widetilde{\mathcal{O}}(|V(D)|^{\omega+1})$
follows.
\end{proof}

To see that Theorem~\ref{thm:min-max} is a strengthening of Dilworth's theorem, consider an arbitrary DAG $D=(V,A)$ and let $\smallPi$ be the weight function of $D$ that is constantly equal to~$1$. Then, the price of any chain $C$ is $\bigPi(C) = \max_{v\in C} \smallPi_v = 1$ and the price of a chain partition $P$ equals its cardinality. Moreover, the value of any antichain $N$ is ${\it val}(N) = \min_{v\in N} \smallPi_v = 1$, and consequently the value of any tower $T = (N_1, N_2, \ldots, N_{{\it wdt}(D)})$ of antichains is ${\it val}(T) = \sum_{i=1}^{{\it wdt}(D)} {\it val}(N_i) = {\it wdt}(D)$. Since ${\it wdt}(D)$ is a lower bound on the cardinality of any chain partition, applying Theorem~\ref{thm:min-max} to $(D,\smallPi)$ gives exactly the statement of Dilworth's theorem for $D$.

We would also like to emphasize that due to the non-linearity of the definitions of the price of a chain and the value of an antichain, Theorem~\ref{thm:min-max} is incomparable with the classical weighted generalization of Dilworth's theorem due to Frank~\cite{MR1714683}.

\begin{sloppypar}
\begin{remark}
The monotonicity assumption in Theorem~\ref{thm:min-max} is necessary.
If we drop it, the price $\min_{P} \bigPi(P)$ and the value $\max_{T} {\it val}(T)$ may diverge.
Consider the DAG on vertex set $\{a\}, \{b\}, \{a,b\}, \{b,c\}$,
  with $\smallPi_{\{a\}} = \smallPi_{\{b,c\}} = z$
  and $\smallPi_{\{b\}} = \smallPi_{\{a,b\}} = Z$, with $0<z < Z <2z$,
  and where the arcs are according to set inclusion.
  Here, $\min_{P} \bigPi(P) = 2Z$ whereas $\max_{T} {\it val}(T) = Z+z$.

On the other hand, a simple application of Dilworth's theorem shows that
the monotonicity assumption is not necessary in the case of $0,1$-weight functions.
\end{remark}
\end{sloppypar}

Lemma~\ref{lem:theLowerBound} and Theorem~\ref{thm:min-max} imply the following.

\begin{corollary}\label{cor:min-price-chain-partition-algorithm}
{\sc MinimumPriceChainPartition} can be solved optimally in time
$\widetilde{\mathcal{O}}(|V(D)|^{\omega+1})$.
More specifically, in the stated time a minimum price chain partition $P$ of $D$ can be found
with the additional property that $|P| = {\it wdt}(D)$ (hence $P$ is simultaneously a
minimum price chain partition and a minimum size chain partition of $D$).
\end{corollary}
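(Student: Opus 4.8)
The plan is to derive the corollary as an immediate consequence of the weak duality supplied by Lemma~\ref{lem:theLowerBound} together with the exact min-max equality and the accompanying algorithm of Theorem~\ref{thm:min-max}. Note first that the input to {\sc MinimumPriceChainPartition} comes equipped with a monotone weight function $\pi$, so Theorem~\ref{thm:min-max} applies directly; this is the only hypothesis one must check before the two preceding results combine cleanly.

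First I would invoke Theorem~\ref{thm:min-max} to compute, in time $O(|V(D)|^{7/2})$, a chain partition $P = \{C_1, \ldots, C_{{\it wdt}(D)}\}$ and a tower of antichains $T$ satisfying $\Pi(P) = {\it val}(T)$. This single pair plays a double role: $P$ is the candidate solution, while $T$ is the certificate of its optimality. The claimed running time of the corollary is then inherited verbatim from the algorithm in that theorem, and no further computation is needed.

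Next I would apply Lemma~\ref{lem:theLowerBound} to the fixed tower $T$ against an arbitrary chain partition $P'$ of $D$, obtaining $\Pi(P') \geq {\it val}(T)$. Chaining this with the equality from the theorem yields $\Pi(P') \geq {\it val}(T) = \Pi(P)$ for every chain partition $P'$, so $P$ attains the minimum possible price and is therefore an optimal solution. For the additional property, I would observe that $|P| = {\it wdt}(D)$ holds by construction in Theorem~\ref{thm:min-max}, and that ${\it wdt}(D)$ is exactly the minimum number of chains in any chain partition of $D$ by Dilworth's theorem (Theorem~\ref{thm:Dilworth}); hence $P$ is simultaneously a minimum price and a minimum size chain partition. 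There is no genuine obstacle to overcome here, since the entire combinatorial content has already been established: the corollary is a one-line pairing of the inequality direction (weak duality from the lemma) with the matching direction (the constructive equality from the theorem), and the only care required is to note the monotonicity of $\pi$ and to transfer the running time unchanged.
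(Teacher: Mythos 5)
Your proof is correct and follows exactly the route the paper intends: the paper gives no separate proof of Corollary~\ref{cor:min-price-chain-partition-algorithm}, stating only that it is implied by Lemma~\ref{lem:theLowerBound} and Theorem~\ref{thm:min-max}, which is precisely the pairing of weak duality with the constructive equality that you spell out. Your additional observations---that monotonicity of $\pi$ is guaranteed by the problem definition, that the running time is inherited from Theorem~\ref{thm:min-max}, and that $|P| = {\it wdt}(D)$ is minimum by Dilworth's theorem together with the trivial lower bound---are all accurate.
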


Two remarks are in order here, showing that the result of
Corollary~\ref{cor:min-price-chain-partition-algorithm} is sharp in two ways.
First, let us note that the variant of the {\sc MinimumPriceChainPartition}
problem in which the chains used in the partition have to be of bounded size
was studied by Moonen and Spieksma~in~\cite{MR2386521}, who described a
practical application encountered at Bruynzeel Storage Systems, a manufacturing company in the
Netherlands, to a problem of optimally loading pallets on a truck.\footnote{The upper bound on the size of chains
relates to the fact that trucks are of bounded height.}
Moonen and Spieksma referred to the problem as ``Minimum Weight Partition into
$B$-chains'' (where $B$ is the upper bound on the size of the chains)
and showed that the problem is \apx-hard even in the case of unit
weights, strengthening the previous \np-hardness result from~\cite{MR1380086}.

Second, the variant of {\sc MinimumPriceChainPartition} where the weight
function $\smallPi$ is not restricted to be monotone is \np-hard. This follows
from the fact that the Weighted Coloring problem is \np-hard in the class of
interval graphs, as shown by Escoffier et al.~\cite{MR2195353}.
The input to the Weighted Coloring problem is a graph $G = (V,E)$ and a weight function
$\smallPi:V \to \Zplus$ and the task is to find a partition $\mathcal{I}$ of $V$ into
independent sets minimizing the value of $\sum_{I\in \mathcal{I}}\max_{v\in
I}\smallPi_v$. The Weighted Coloring problem in interval graphs finds applications in
distributed computing in transportation networks and in dynamic storage allocation
in computer processes~\cite{MR1439871}. Given an interval graph $G=(V,E)$ represented by an interval model
$(I_v = [a_v,b_v]:v\in V)$ and a weight function $\smallPi:V \to \Zplus$, the
Weighted Coloring problem given $(G,\smallPi)$ is equivalent to the problem of
finding a chain partition of the DAG with vertex set $V$ and arc set
$\{(u,v): b_u<a_v\}$ of minimum price with respect to $\smallPi$. The claimed
\np-hardness follows.

\subsection{Connection with the Minimum Conflict-Free Row Split Problem}\label{sec:heuristic}

We will now describe a heuristic algorithm for the MCRS problem based on Theorem~\ref{thm:min-max} and its algorithmic proof. The basic idea is to search for an optimal solution only among linear branchings, where a branching of $D_M$ is said to be {\it linear} if it defines a subgraph of maximum in- and out-degree at most one, that is, a disjoint union of directed paths. Note that such branchings correspond bijectively to chain partitions of $D_M$.

We denote with $\beta_{\ell}(M)$ the minimum number of elements in $U(B)$ over all linear branchings $B$ of $D_M$. We now introduce the following problem, referred to as {\sc MinimumUncoveringLinearBranching}: Given a binary matrix $M$, compute a linear branching $B$ of $D_M$ such that  $|U(B)|=\beta_{\ell}(M)$.

For a binary matrix $M$, define a function $\smallPi: V(D_M)\to \Zplus$ with $\smallPi(v)=|v|$ (recall that vertices of $D_M$ are pairwise distinct subsets or $R_M$).
By definition of $D_M$, we have $u\subset v$ whenever $(u,v)$ is an arc in $D_M$. This implies that $\smallPi$ is a monotone weight function of $D_M$.
It is not difficult to see that for a linear branching $B$ and its corresponding chain partition $P$, we have $\bigPi(P)=|U(B)|$.
Since linear branchings correspond bijectively to chain partitions, it follows that {\sc MinimumUncoveringLinearBranching} is a special case of {\sc MinimumPriceChainPartition}. Using Theorem~\ref{thm:min-max}, we obtain that a linear branching $B$ of $D_M$ with $|U(B)|=\beta_{\ell}(M)$ can be computed in time $\widetilde{\mathcal{O}}(|V(D)|^{\omega+1})$.
This proves the following theorem.

\begin{sloppypar}
\begin{theorem}\label{thm:linear-branchings}
{\sc MinimumUncoveringLinearBranching} can be solved to optimality in time
$\widetilde{\mathcal{O}}(|V(D)|^{\omega+1})$.
\end{theorem}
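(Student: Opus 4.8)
The plan is to establish Theorem~\ref{thm:linear-branchings} by reducing {\sc MinimumUncoveredLinearBranching} to {\sc MinimumPriceChainPartition} and then invoking Corollary~\ref{cor:min-price-chain-partition-algorithm}. The reduction has essentially been sketched in the paragraph preceding the theorem, so my proof will simply verify the two facts on which that reduction rests. First I would confirm that the weight function $\pi : V(D_M) \to \Zplus$ defined by $\pi(v) = |v|$ is monotone. This is immediate from the definition of $D_M$: an arc $(u,v)$ exists precisely when $u \subset v$ (proper inclusion), so $|u| < |v|$, and in particular $\pi(u) = |u| \le |v| = \pi(v)$. Hence $(D_M, \pi)$ is a valid instance of {\sc MinimumPriceChainPartition}.

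The crux of the argument is the identity $\Pi(P) = |U(B)|$ for a linear branching $B$ and its associated chain partition $P$. I would argue this chain by chain. A linear branching has in- and out-degree at most one, so its underlying subgraph is a disjoint union of directed paths, which are exactly the chains $C_1,\ldots,C_p$ of $P$. Fix one chain $C = (w_1, w_2, \ldots, w_t)$, where the arcs of $B$ run $w_1 \to w_2 \to \cdots \to w_t$, so $w_1 \subset w_2 \subset \cdots \subset w_t$. For each vertex $w_i$ on the chain, the covered elements of $w_i$ are exactly those lying in its unique $B$-in-neighbor $w_{i-1}$ (with $w_0 := \emptyset$), so the number of $B$-uncovered pairs with second coordinate $w_i$ is $|w_i \setminus w_{i-1}| = |w_i| - |w_{i-1}|$, using $w_{i-1} \subset w_i$. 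Summing telescopically along the chain gives $\sum_{i=1}^t (|w_i| - |w_{i-1}|) = |w_t| = \max_{v \in C} |v| = \max_{v\in C}\pi_v = \Pi(C)$. Summing over all chains yields $|U(B)| = \sum_{i=1}^p \Pi(C_i) = \Pi(P)$, as required.

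With these two observations in hand, the bijection between linear branchings of $D_M$ and chain partitions of $D_M$ (noted in the paper, and clear since both correspond to vertex-disjoint unions of directed paths covering $V(D_M)$) combines with the price identity to show that minimizing $|U(B)|$ over linear branchings is precisely minimizing $\Pi(P)$ over chain partitions. Thus $\beta_\ell(M) = \min_P \Pi(P)$, and an optimal linear branching is recovered directly from an optimal chain partition via the bijection. Applying Corollary~\ref{cor:min-price-chain-partition-algorithm} to $(D_M, \pi)$ produces a minimum-price chain partition, hence an optimal linear branching, in time $O(|V(D_M)|^{7/2})$, establishing the theorem.

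The main obstacle, such as it is, is the telescoping computation of the price identity: one must be careful that the covered elements of a vertex on a linear-branching chain are governed solely by its single $B$-in-neighbor (not by the whole chain prefix), which is what makes the per-vertex uncovered count equal to the clean difference $|w_i| - |w_{i-1}|$ and lets the sum telescope. Everything else is a routine matter of matching definitions.
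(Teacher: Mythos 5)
Your proposal is correct and takes essentially the same route as the paper: defining the monotone weight $\pi(v)=|v|$ on $V(D_M)$, using the bijection between linear branchings and chain partitions to recognize {\sc MinimumUncoveredLinearBranching} as a special case of {\sc MinimumPriceChainPartition}, and invoking the $O(|V(D_M)|^{7/2})$ algorithm from Corollary~\ref{cor:min-price-chain-partition-algorithm} (equivalently, Theorem~\ref{thm:min-max}). Your telescoping computation simply spells out the identity $\Pi(P)=|U(B)|$ that the paper dismisses as ``not difficult to see,'' and it is carried out correctly.
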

\end{sloppypar}

Note that Theorem~\ref{thm:linear-branchings} yields a heuristic polynomial-time algorithm for
the MUB problem, and consequently for the MCRS problem. We are now going to explain why this algorithm improves
on the heuristic for the latter problem by Hujdurovi\'c et al.~from~\cite{tcbb16}. For the sake of
simplicity of exposition, suppose that the input matrix $M$ does not have any pairs of identical columns.
(It is not difficult to see that this assumption is without loss of generality.) In this case,
the algorithm from \cite{tcbb16} returns a row split of the input matrix naturally derived from
an optimal coloring of the complement of the underlying undirected graph of $D_M$, which is a
cocomparability graph and thus an optimal coloring can be computed efficiently, see, e.g., \cite{MR2063679}.
Such optimal colorings correspond bijectively to minimum chain partitions of $D_M$; each color class
corresponds to a chain. In the terminology of branchings, the conflict-free row split of the input matrix $M$ returned by the heuristic from~\cite{tcbb16} is exactly the $B$-split of $M$ (cf.~Definition~\ref{def:B-split}) where $B$ is the linear branching of $D_M$ corresponding to a minimum chain partition of $D_M$.

In the above approach, any proper coloring could be used instead of an optimal coloring of the derived cocomparability graph. In branching terminology, choosing a proper coloring of the derived cocomparability graph so that the number of rows of the output row split is minimized corresponds exactly to {\sc MinimumUncoveringLinearBranching}, which can be solved optimally by Theorem~\ref{thm:linear-branchings}. Thus, the heuristic algorithm for the MCRS problem that returns the $B$-split of $M$ where $M$ is an optimal solution to {\sc MinimumUncoveringLinearBranching} always returns solutions that are at least as good as those computed by the algorithm by Hujdurovi\'c et al.~from~\cite{tcbb16}.
Moreover, note that by Corollary~\ref{cor:min-price-chain-partition-algorithm}, digraph $D_M$ has a minimum price chain partition that is also minimum with respect to size. This implies the existence of an optimal solution to {\sc MinimumUncoveringLinearBranching} on $M$ such that the corresponding chain partition is of size ${\it wdt}(M)$ and, equivalently, the existence of an optimal coloring of the derived cocomparability graph that minimizes the number of rows in the derived conflict-free row split of $M$ over all proper colorings of the derived graph.

\begin{remark}
As discussed in~\cite{wabi14,tcbb16}, the main motivation for the MCRS problem comes from cancer genomics, with the goal to reconstruct, from a set of given mixed tumor samples,
a simplest possible mutational history of the tumor, represented by a rooted tree (without any restriction on the shape of the tree). Without going into details, let us note that the output of the heuristic algorithm for the MCRS problem given by Theorem~\ref{thm:linear-branchings} corresponds to a simplest possible reconstruction of the mutational history within a restricted space of rooted trees, namely within the space of rooted trees such that the root is the only node that is allowed to have more than one non-leaf child.
\end{remark}


\section{(In)approximability Issues}\label{sec:approx}

In this section we will discuss (in)approximability properties of the four problems studied in this paper, giving both
\apx-hardness results and approximation algorithms. The approximation ratios of some of our algorithms will be described in terms of the following parameters of the input matrix. Recall that the \emph{width} of a DAG $D$ is the maximum cardinality of an antichain in $D$. The {\em height} of a DAG $D$ is the maximum number of vertices in a directed path contained in $D$. The {\it width} and the {\it height} of a binary matrix $M$ are denoted by ${\it wdt}(M)$ and by $h(M)$, respectively, and defined as the width, resp.~the height, of the containment digraph of $M$.

\subsection{Hardness Results}\label{subsec:hard}

Our main inapproximability results are summarized in the following theorem, which shows hardness
already for very restricted input instances.

\begin{sloppypar}
\begin{theorem}\label{thm:apxHard}
The MUB and the MIB problems (and consequently the
MCRS and the MDCRS problems) are \apx-hard, even for instances of height~$2$.
\end{theorem}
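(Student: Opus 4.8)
The plan is to invoke Theorem~\ref{thm:beta=gamma}, which yields $\gamma(M)=\beta(M)$ and $\eta(M)=\zeta(M)$ together with cost-preserving polynomial-time translations in both directions, and thereby reduce the statement to proving \apx-hardness of the two branching problems MUB and MIB. Working directly with the containment digraph $D_M$ is convenient because on instances of height~$2$ its structure is transparent: $D_M$ is ``bipartite'', its vertices splitting into minimal supports (sources) and maximal supports (sinks), with an arc $u\to v$ exactly when $u\subset v$, so a branching simply assigns to each source $u$ at most one sink $v\supset u$. Since a minimal vertex is never entered by an arc, every source is $B$-irreducible and contributes $|u|$ uncovered pairs regardless of $B$, which gives the clean identities $\beta(M)=\big(\sum_{u\text{ source}}|u|+\sum_{v\text{ sink}}|v|\big)-\max_B \sum_{v}\big|\bigcup_{u\to v}u\big|$ and $\zeta(M)=(\#\text{sources})+(\#\text{sinks})-\max_B\#\{v:\bigcup_{u\to v}u=v\}$. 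Thus MUB becomes a ``distributed coverage'' maximization and MIB an ``exact-cover packing'' maximization, and it remains to encode an \apx-hard problem into each.

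For MIB I would reduce from the maximum bounded $3$-dimensional matching problem, which is \apx-hard. Given triples $T\subseteq X\times Y\times Z$ with each element in a bounded number of triples, I build the matrix whose rows are $X\cup Y\cup Z$ and whose columns are the singletons $\{q\}$ together with the triples $\{x,y,z\}$ for $(x,y,z)\in T$; the singletons are exactly the minimal vertices and the triples the maximal ones, so $D_M$ has height~$2$ and no duplicated supports. A sink $\{x,y,z\}$ is exactly covered iff all three singletons $\{x\},\{y\},\{z\}$ point to it, and since each singleton can point to only one triple, the set of exactly covered sinks is precisely a $3$-dimensional matching. Hence $\zeta(M)=N_0+N_1-\nu$, where $N_0=|X\cup Y\cup Z|$, $N_1=|T|$ and $\nu$ is the matching size, so $\eta(M)=\zeta(M)$ is an affine, order-reversing function of the matching size. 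On bounded-occurrence instances $N_0,N_1=O(\nu^\ast)$ for the optimum $\nu^\ast$, which is exactly what the two L-reduction inequalities require (with $\beta=1$), and an approximate branching is turned back into an approximate matching by simply reading off the exactly covered triples.

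For MUB the same singleton/triple gadget collapses: with singleton sources each source contributes exactly one unit of coverage wherever it is sent, so $\max_B\sum_v\big|\bigcup_{u\to v}u\big|$ is a constant and the instance is trivial. The fix is to build genuine \emph{redundancy} into the coverage, so that some element of a sink can be covered by two distinct sources and sending ``too many'' sources into one sink wastes capacity that another sink needs. I would therefore reduce from an \apx-hard coverage/matching problem using sources that are \emph{proper} subsets overlapping inside the sinks (for instance two- or three-element sources whose pairwise unions already saturate a sink), chosen so that the contention among shared sources forces the same combinatorial optimum as the source problem, while keeping all sources minimal and all sinks maximal to preserve height~$2$. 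The objective $\beta(M)$ is then again an affine function of that optimum, giving an L-reduction.

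The main obstacle is exactly this MUB gadget. Because arcs of $D_M$ run from a strictly smaller support to a strictly larger one, a sink can be covered only by a \emph{union} of strictly smaller sources, so saturating a sink is inherently an AND-type (union-saturation) condition rather than an OR; this is why packing/matching problems fit naturally but why naive gadgets degenerate into an additive, constant-valued objective. Making the coverage genuinely contested without destroying height~$2$ or introducing duplicated supports is the delicate point, and it must be arranged so that the additive constants (the total source and sink sizes) stay within a constant factor of the optimum, which forces bounded-occurrence source instances so that the first L-reduction inequality holds. Once the affine correspondences $\beta(M)=c-\mathrm{opt}$ and $\zeta(M)=c'-\mathrm{opt}'$ are established with controlled constants, \apx-hardness of MUB, MIB, and hence of MCRS and MDCRS at height~$2$ follows from the standard composition properties of L-reductions.
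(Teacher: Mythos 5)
Your MIB half is correct and close in spirit to the paper's. Your structural identities on height-$2$ instances (every source contributes all its elements as uncovered pairs; a sink $v$ is non-irreducible iff $\bigcup_{u\to v}u=v$) are right, and the singleton/triple gadget from bounded maximum $3$-dimensional matching does give $\zeta(M)=N_0+N_1-\nu$ with both $L$-reduction conditions checkable ($a$ constant via bounded occurrence, $b=1$ by reading off the exactly covered sinks as a matching). The paper proves this half by a structurally identical gadget from vertex cover in cubic graphs: sources are the singletons $\{e\}$, $e\in E(G)$, sinks are the sets $E(x)$ of size~$3$, and one shows $\zeta(M)=|E(G)|+\tau(G)$; your route is a legitimate variant, modulo citing \apx-hardness of the bounded $3$-dimensional matching problem.

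The MUB half, however, is a genuine gap, and you flag it yourself: no gadget is ever constructed, only desiderata for one (``sources that are proper subsets overlapping inside the sinks, chosen so that the contention forces the same combinatorial optimum''). Since the statement claims hardness of MUB, and hence of MCRS, the proof is incomplete without it. Your diagnosis of \emph{why} the naive gadget degenerates (covering a sink is a union-saturation, AND-type condition, so with singleton sources the coverage total is branching-independent) is exactly right; the paper's resolution of precisely this difficulty is to add two fresh rows $x,y$ to the row set $E(G)$ of a cubic graph $G$ and take as supports the sources $E(v)\cup\{x\}$ and $E(v)\cup\{y\}$ for each $v\in V(G)$, the local sinks $E(v)\cup\{x,y\}$, and one global sink $E(G)\cup\{x\}$. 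The shared element $x$ supplies the redundancy you were looking for: each source $E(v)\cup\{x\}$ can spend its single branching arc either in the global sink (covering the three edges of $E(v)$ there) or in its local sink (covering $x$ there), so making every edge covered in $E(G)\cup\{x\}$ forces the globally routed sources to correspond to a vertex cover, and each such source leaves $x$ uncovered locally at a cost of exactly one uncovered pair per chosen vertex. This yields $\beta(M)=8|V(G)|+\tau(G)$, and cubicity gives $\tau(G)\geq |V(G)|/2$, so $\beta(M)\le 17\tau(G)$, establishing the $L$-reduction with $a=17$, $b=1$ (the $b=1$ direction requires an exchange argument normalizing an arbitrary optimal branching so that $E(G)\cup\{x\}$ is not irreducible, which the paper carries out). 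Without some such concrete construction, your proposal establishes \apx-hardness only of MIB and MDCRS, not of MUB and MCRS.
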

\end{sloppypar}

The above result implies that none of the four problems admits a polynomial-time approximation scheme (PTAS), unless {\sf P} = \np.
Proving that a problem is \apx-hard also provides a different proof of \np-hardness.

The \apx-hardness for the two branching problems is established by developing $L$-reductions from the vertex cover problem in cubic graphs, which is known to be \apx-hard~\cite{MR1756204}. The \apx-hardness of the other two problems then follows from Theorem~\ref{thm:beta=gamma}. Recall that $\apx$ is a class of problems approximable to within a constant factor in polynomial time. A problem $\Pi$ is said to be \apx-hard if every problem in $\apx$ reduces to $\Pi$ by an approximation-preserving reduction. Another way to prove that a problem $\Pi$ is \apx-hard is to show that an \apx-complete problem $\Pi'$ is $L$-reducible to $\Pi$.
For the sake of self-containment, we recall the definition of $L$-reducibility; for further background on \apx-hardness, we refer to~\cite{complexity}.

\begin{definition}\label{def:L-reduction}
Let $\Pi$ and $\Pi'$ be two \np-hard optimization problems. Problem $\Pi$ is said to be {\em $L$-reducible} to problem $\Pi'$ if there exists a polynomial-time transformation $f$ mapping instances of $\Pi$ to instances of $\Pi'$ and constants $a,b \in \mathbb{R}_+$ such that for every instance $x$ of $\Pi$ the following conditions hold:
\begin{itemize}
\item $opt_{\Pi'} (f(x)) \le a\cdot opt_{\Pi} $,

\item for every feasible solution $y'$ of $f(x)$ with objective value $c_2$ we can compute in polynomial time solution $y$ for $x$ with objective value $c_1$ such that $|opt_{\Pi}(x) - c_1| \le b \cdot |opt_{\Pi'}(f(x)) - c_2|$.
\end{itemize}
\end{definition}

To simplify the description of the hardness reductions of this section, we will use the notion of a column hypergraph of a given binary matrix $M$.
This notion is closely related to the containment digraph of $M$ and will find a further application in Section~\ref{sec:2-approx}.
Recall that a {\em set family} (or a {\em hypergraph}) is a pair $\mathcal{H} = (V,\mathcal{E})$ where $V= V(\mathcal{H})$ is a set and $\mathcal{E} = E(\mathcal{H})$ is a subset of the power set $\mathcal{P}(V)$. Elements of $V(\mathcal{H})$ are the {\em vertices} of $\mathcal{H}$; elements of $E(\mathcal{H})$ are its {\em hyperedges}. The {\it column hypergraph} $\mathcal{H}_M$ of a binary matrix $M$ is the hypergraph having the rows of $M$ as vertices and the support sets of the columns of $M$ as hyperedges. Formally, $\mathcal{H}_M$ has vertex set $V(\mathcal{H}_M) = R_M$ and hyperedge set $E(\mathcal{H}_M) = \{\supp_M(c): c\in C_M \}$.
Note that the set of hyperedges of the column hypergraph of $M$ equals the vertex set of the containment digraph $D_M$.

We split the proof of Theorem~\ref{thm:apxHard} into two parts.

\begin{proposition}\label{prop:apxHard}
{\sc MinimumUcoveringBranching} is \apx-hard, even for instances of height~$2$.
Consequently, {\sc MinimumConflict-FreeRowSplit} is \apx-hard, even for instances of height~$2$.
\end{proposition}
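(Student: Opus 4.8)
The plan is to construct an $L$-reduction from the vertex cover problem in cubic graphs, which is \apx-hard by~\cite{MR1756204}, to the MUB problem, producing instances of height~$2$. Given a cubic graph $G=(V_G,E_G)$, I would build a binary matrix $M$ whose containment digraph $D_M$ is a height-$2$ DAG (i.e.\ a bipartite-like poset with a layer of ``small'' supports below a layer of ``large'' supports), set up so that the branching decisions correspond to choosing, for each vertex $u\in V_G$, whether it ``covers'' its incident edges. Concretely, I would let the rows of $M$ encode the edges of $G$ (and possibly some auxiliary padding rows), and introduce two kinds of columns: for each vertex $u$ a ``vertex column'' whose support is the set of edges incident to $u$, and for each edge $e$ an ``edge column'' whose support is a singleton or small set contained in exactly the two vertex-supports corresponding to the endpoints of $e$. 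The proper-inclusion relation then makes each edge-vertex a source that can point up to one of its two covering vertex-supports in the branching; a source edge-column that is not covered contributes an uncovered pair, so minimizing $|U(B)|$ forces us to cover as many edges as possible by as few chosen vertices as possible.

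The key technical step is to calibrate the gadget so that the optimum $\beta(M)$ is an affine function of the minimum vertex cover number $\tau(G)$, say $\beta(M)=\tau(G)+c\cdot|E_G|$ or a similar linear expression, with the additive constant determined purely by the fixed structure (the number of forced uncovered pairs coming from the top-layer supports and padding). I would verify the first $L$-reduction inequality $opt_{\mathrm{MUB}}(f(G))\le a\cdot\tau(G)$ using the fact that a cubic graph has $\tau(G)\ge |E_G|/3$ (each vertex covers at most three edges), so the additive term proportional to $|E_G|$ is itself $O(\tau(G))$, yielding a valid constant $a$. For the second inequality, I would show that any branching $B$ of $D_M$ can be converted in polynomial time into a vertex cover of $G$ whose size exceeds $\tau(G)$ by at most the excess of $|U(B)|$ over $\beta(M)$: from $B$ one reads off the set $S$ of vertex-supports that are ``used'' as targets, each uncovered edge-column forces us to add one of its endpoints to $S$, and the resulting set covers all edges with controlled overhead, giving the constant $b$.

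The main obstacle I anticipate is ensuring the branching genuinely behaves like a vertex cover \emph{selection} rather than permitting spurious ``cheating'' configurations that drive $|U(B)|$ below what any vertex cover would allow. Because a branching only requires out-degree at most one, a vertex-support of $D_M$ may send its own arc upward and thereby be covered by a still-larger support, so I must design the top layer so that large supports cannot be cheaply covered and so that the only way to eliminate an edge-column's uncovered pair is to route it through exactly one of its two endpoint vertex-supports. This is where the height-$2$ restriction both helps and constrains: it keeps the poset shallow (simplifying the analysis of which vertices are $B$-irreducible and which pairs are $B$-uncovered), but it forbids tall chains that might otherwise be used as convenient covering structures, so the gadget must achieve the coverage semantics within two levels. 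Verifying that no branching can exploit the padding rows or duplicate-support coincidences to beat the intended optimum, and pinning down the exact additive constant, is the delicate part; the remaining inequalities are then routine counting. The \apx-hardness of MCRS follows immediately from Theorem~\ref{thm:beta=gamma}, since $\gamma(M)=\beta(M)$ and the height-$2$ instances are preserved.
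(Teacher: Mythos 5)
Your high-level plan coincides with the paper's: an $L$-reduction from vertex cover in cubic graphs, an affine calibration $\beta(M)=\tau(G)+\mathrm{const}$, the bound $\tau(G)\ge |E(G)|/3$ to get the first $L$-condition, and a polynomial-time read-off of a vertex cover from an arbitrary branching for the second. However, the concrete gadget you sketch fails for MUB, and the piece you explicitly defer (``the delicate part'') is precisely the missing idea, not a routine verification. With rows $E(G)$, vertex columns of support $E(u)$, and singleton edge columns $\{e\}$, each edge column is a source whose single arc can cover $e$ in one chosen endpoint's support at zero cost; counting gives $|U(B)| = |E(G)| + \sum_{u\in V(G)}\bigl(3 - d_B(u)\bigr) = |E(G)| + 3|V(G)| - \sum_{u} d_B(u)$, where $d_B(u)$ is the number of incident edges pointing to $E(u)$. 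This is minimized, at the constant value $3|V(G)|$ plus the source terms, by letting \emph{every} edge point to an \emph{arbitrary} endpoint -- no vertex-cover structure emerges, because the MUB objective counts uncovered pairs and nothing penalizes using many vertex supports as arc targets. (Your gadget is essentially the paper's reduction for the MIB problem, Proposition~\ref{prop:apxHard2}, where it does work: there the objective counts irreducible vertices, and $E(u)$ ceases to be irreducible only when all three incident edges point to it, which a branching permits for at most one endpoint per edge.)

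The paper supplies the coupling you are missing by adding two auxiliary rows $x,y$ and, for each $v\in V(G)$, the three columns $E(v)\cup\{x\}$, $E(v)\cup\{y\}$, $E(v)\cup\{x,y\}$, together with a single global column $E(G)\cup\{x\}$. The vertex $E(v)\cup\{x\}$ then faces a genuine trade-off: it either covers $x$ in $E(v)\cup\{x,y\}$, or points to $E(G)\cup\{x\}$ (putting $v$ in the cover) at the cost of exactly one extra uncovered pair, namely $(x, E(v)\cup\{x,y\})$; meanwhile all $|E(G)|+1$ elements of the global column can be covered only if the set of vertices pointing to it is a vertex cover of $G$. An exchange argument shows that an optimal branching may be assumed to cover $E(G)\cup\{x\}$ completely, yielding $\beta(M)=\tau(G)+8|V(G)|$ (the term $8|V(G)|$ comes from the forced uncovered pairs at the $2|V(G)|$ four-element source columns), whence $a=17$ and $b=1$. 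Height~$2$ is preserved since $E(v)\subseteq E(w)$ is impossible for distinct vertices of a cubic graph. Without some such mechanism -- a unit price per selected vertex plus one column whose full coverage encodes the global covering constraint -- your reduction collapses to a constant optimum independent of $\tau(G)$, so the gap is genuine rather than cosmetic; the final step, deducing \apx-hardness of MCRS via $\gamma(M)=\beta(M)$ from Theorem~\ref{thm:beta=gamma}, is correct as you state it.
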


\begin{proof}
We will prove the proposition using the fact that the vertex cover problem is \apx-hard on cubic graphs~\cite{MR1756204}. Recall that a graph $G$ is {\it cubic} if every vertex of $G$ is incident with exactly three edges and that a vertex cover of a graph $G$ is a subset $C \subseteq V(G)$ such that $\forall \, e=\{v_1,v_2\}, \, e \in E(G) \Rightarrow v_1 \in C \, \vee \, v_2 \in C$.
For all $v\in V(G)$, we define $E(v)$ as a set of all edges in $E(G)$ incident with $v$. In symbols, $E(v) = \{e \in E(G) : v \in e \}$. We say that a graph $G$ is {\em cubic} if for every $v\in V(G)$ it holds $|E(v)| = 3$.

 We will construct an $L$-reduction from the vertex cover problem in cubic graphs to the MUB problem on instances of height~$2$. Let $G$ be a cubic graph.  Let $x$ and $y$ be two new vertices not in $V(G) \cup E(G)$. Let $R=E(G) \cup \{x,y\}$ and let $\mathcal{H}$ be the hypergraph with vertex set $R$ and edge set
\begin{eqnarray*}
\mathcal{E} &=& \{E(G)\cup \{x\}\}
\cup \{ E(v) \cup \{x\} : v\in V(G) \}
\cup \{ E(v) \cup \{y\} : v\in V(G) \}\\
&&{ } \cup \{ E(v) \cup \{x,y\} : v\in V(G)\} \,.
\end{eqnarray*}
Let $M$ be a binary matrix without duplicated columns such that the column hypergraph of $M$ is isomorphic to $\mathcal{H}$. Note that $M$ is of height~$2$. See Fig.~\ref{fig:construction} for an example construction, representing the containment digraph $D_M$ of the binary matrix derived from the complete graph $K_4$.
\begin{figure}[H]
\begin{center}
\includegraphics[width=1\textwidth]{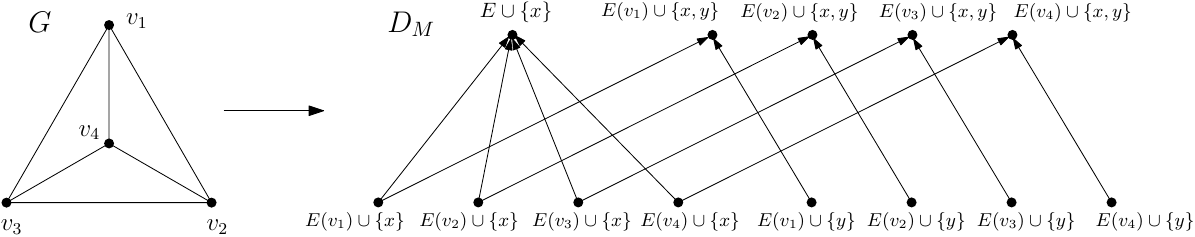}
\end{center}
\caption{An example construction of the $L$-reduction from the proof of Proposition~\ref{prop:apxHard}.}
\label{fig:construction}
\end{figure}

We denote by $\tau(G)$ the vertex cover number of $G$, that is, the minimum size of a vertex cover in $G$.
The \apx-hardness of {\sc MinimumUcoveringBranching} will be a consequence of the following claim and its proof.

\medskip
\noindent{\bf Claim. $\tau(G)  = \beta(M)-8|V(G)|$.}

\begin{proof}[Proof of the claim]
We split the proof of the equality into two parts, proving each of the two inequalities separately.

First, we prove the inequality $\beta(M)\le \tau(G)+ 8|V(G)|$.
Let $C$ be a minimum vertex cover of $G$. Define a branching $B$ of $D_M$ as follows:
\begin{eqnarray*}
B & = &
\{(E(v)\cup \{y\}, E(v)\cup \{x,y\}) : v \in V(G) \} \\
&& { } \cup  \{(E(v)\cup \{x\}, E\cup \{x\}) : v \in C \} \\
&& { } \cup  \{(E(v)\cup \{x\}, E(v)\cup \{x,y\}) : v \in V(G)\setminus C\}
\end{eqnarray*}
See Fig.~\ref{fig:construction1} for an example.

\begin{figure}[h!]
\begin{center}
\includegraphics[width=0.8\textwidth]{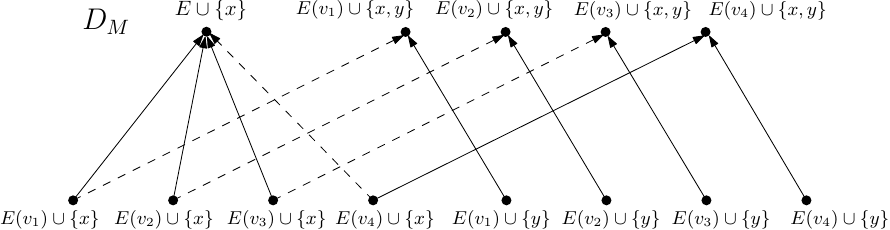}
\end{center}
\caption{The set of non-dashed arcs corresponds to the branching obtained from the vertex cover $\{v_1,v_2,v_3\}$.}
\label{fig:construction1}
\end{figure}

It is clear from the construction that $B$ is indeed a branching. Since $C$ is a vertex cover, every $e \in E(G)$ is covered  in $E(G)\cup \{x\}$ with respect to $B$. It is now not difficult to see the set of uncovered pairs with respect to $B$ equals
\begin{eqnarray*}
U(B)  &=& \{(r,E(v)\cup \{z\}) : v \in V(G) , z \in \{x,y\} , r \in E(v) \cup \{z\} \}\\&&{ }  \cup  \{(x,E(v)\cup \{x,y\}) : v \in C \} \,.
\end{eqnarray*}
Since we have $|E(v)| = 3$ for all $v\in V(G)$, this implies $\beta(M) \le |U(B)| = 8|V(G)|+ |C|= 8|V(G)|+\tau(G)$, as claimed.

Now we prove the inequality $ \tau(G) \le \beta(M) - 8|V(G)|$.
Let $B$ be a branching of $D_M$ such that $|U(B)|=\beta(M)$.
For every source vertex $u$ in $D_M$ and every element $r \in u$ it holds that $r$ is uncovered in $u$.
Since the source vertices are exactly the vertices of the form $E(v) \cup \{x\}$ and $E(v) \cup \{y\}$,
we have exactly $8|V(G)|$ uncovered pairs corresponding to the source vertices.
The minimality of $B$ implies that all arcs of the form $(E(v)\cup \{y\},E(v)\cup \{x,y\})$ are in $B$.
Therefore, for every $v\in V(G)$, element $x$ is the only possibly uncovered element in vertex $E(v)\cup \{x,y\}$.

We show that we may assume that vertex $E(G) \cup \{x\}$ is not irreducible, that is, that all its elements are covered in $E(G)\cup \{x\}$.
Suppose first that $x$ is not covered in $E(G)\cup \{x\}$. Then $B$ does not contain any arc of the form $(E(v)\cup \{x\}, E(G)\cup \{x\})$, and therefore, by minimality, contains all arcs of the form $(E(v)\cup \{x\}, E(v)\cup \{x,y\})$. Replacing one of these arcs with the arc
$(E(v)\cup \{x\}, E(G)\cup \{x\})$ results in a branching $B'$ such that $|U(B')| \le |U(B)|$,
hence in an optimal branching covering $x$. Now, suppose that there exists some $e\in E(G)$ such that $e \not \in \cup N^-_B(E(G) \cup \{x\})$. Let $v$ be an endpoint of $e$ in $G$ and consider the vertex $E (v) \cup \{x\}$. Since $e$ is not covered in $E(G)\cup \{x\}$, the arc $(E(v)\cup \{x\},E(G)\cup \{x\})$ is not in $B$. The optimality of $B$ implies that $(E(v)\cup \{x\},E(v)\cup \{x,y\}) \in B$.
Now, replace the arc $(E(v)\cup \{x\},E(v)\cup \{x,y\})$ with  the arc $(E(v)\cup \{x\},E(G) \cup \{x\})$.
This results in a branching $B'$ such that $e \in \cup N^-_{B'}(E(G) \cup \{x\})$.
Moreover, $|U(B')|\le|U(B)|$ since removing the arc $(E(v)\cup \{x\},E(v)\cup \{x,y\})$ makes $x$  uncovered in $E(v)\cup \{x,y\}$, but adding the arc $(E(v)\cup \{x\},E(G) \cup \{x\})$ makes element $e$ covered in $E(G) \cup \{x\}$.
Therefore, repeating the above procedure will eventually result in an optimal branching with respect to which
$E(G) \cup \{x\}$ is not irreducible, as claimed.

Define $C= \{ v \in V(G) : (E (v) \cup \{x \}, E(G) \cup \{ x\}) \in  B\}$. The fact that every $e\in E(G)$ is covered in $E(G)\cup \{x\}$ implies that
$C$ is a vertex cover of $G$. Moreover, for every $v\in C$, element $x$ is the only uncovered element in vertex $E(v)\cup \{x,y\}$, and for every
$v\in V(G)\setminus C$, all elements in $E(v)\cup \{x,y\}$ are covered. This implies that the total number of uncovered pairs by $B$ equals
$8|V(G)|+|C|$, implying $|C| = \beta(M) - 8|V(G)|$, which proves the claimed inequality
$\tau(G)\le \beta(M) - 8|V(G)|$.

This completes the proof of the claim.
\end{proof}

We now complete the proof by showing that the above reduction is an $L$-reduction.
Since $G$ is cubic, every vertex in a vertex cover of $G$
covers exactly $3$ edges, hence $\tau(G)\geq \frac{|E(G)|}{3} = \frac{|V(G)|}{2}$.
This implies that $\beta(M) = \tau(G) + 8|V(G)|\le 17\tau(G)$, hence the first condition
in the definition of $L$-reducibility is satisfied with $a = 17$.
The second condition in the definition of $L$-reducibility
states that for every branching $B$ of $D_M$ we can
can compute in polynomial time a vertex cover $C$ of $G$
such that $|C|-\tau(G)\le b \cdot (|U(B)|-\beta(M))$ for some $b>0$.
We claim that this can be achieved with $b = 1$. Indeed, the second part of the proof of above claim
shows how one can transform in polynomial time any branching of $D_M$
into a vertex cover $C$ of $G$ such that $|C|\le |U(B)|-8|V(G)|$.
Therefore, $|C|-\tau(G)\le |U(B)|-8|V(G)|-\tau(G) = |U(B)|-\beta(M)$.
This shows that the vertex cover problem in cubic graphs is $L$-reducible to
the  {\sc MinimumUcoveringBranching} and completes the proof.
\end{proof}
\begin{proposition}\label{prop:apxHard2}
The MIB problem is \apx-hard, even for instances of height~$2$.
Consequently, the MDCRS problem is \apx-hard, even for instances of height~$2$.
\end{proposition}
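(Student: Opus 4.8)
\emph{The plan is} to prove \apx-hardness of the MIB problem by an $L$-reduction from the vertex cover problem in cubic graphs, mirroring the strategy of Proposition~\ref{prop:apxHard}; the \apx-hardness of MDCRS will then follow immediately from the identity $\eta(M)=\zeta(M)$ established in Theorem~\ref{thm:beta=gamma}. A first temptation is to reuse the construction from Proposition~\ref{prop:apxHard} verbatim, but this fails: there the edges of $G$ are collected in the single vertex $E(G)\cup\{x\}$, and while leaving an edge uncovered produces a new \emph{uncovered pair}, it produces at most one \emph{irreducible vertex} regardless of how many edges remain uncovered. The penalty for not covering thus collapses onto a single cheap irreducible vertex, and one checks that the resulting $\zeta(M)$ carries no information about $\tau(G)$. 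The remedy, and the crux of the proof, is to design a height-$2$ instance in which each uncovered edge is forced to contribute its \emph{own} irreducible vertex.

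For the construction, given a cubic graph $G$ I would take as rows (vertices of the column hypergraph) the set $\{x_v : v\in V(G)\}\cup E(G)$, and as columns (hyperedges) the singletons $\{x_v\}$ $(v\in V(G))$, the singletons $\{e\}$ $(e\in E(G))$, and the vertex-sets $B_v := \{x_v\}\cup E(v)$ $(v\in V(G))$, where $E(v)$ is the set of the three edges incident with $v$. Let $M$ be a binary matrix without duplicated columns whose column hypergraph is isomorphic to this family. Since the $B_v$ have size four and are the only non-singleton hyperedges, the only proper inclusions are $\{x_v\}\subset B_v$ and $\{e\}\subset B_v$ for $v\in e$; hence $D_M$ has height~$2$, its sources are exactly the singletons, and its only non-source vertices are the $B_v$.

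The heart of the argument is the following description of branchings. In any branching $B$ of $D_M$, the vertex $\{x_v\}$ has $B_v$ as its unique out-neighbour, and the vertex $\{e\}$ with $e=\{u,w\}$ can send its single arc only to $B_u$ or to $B_w$; thus $B$ induces an \emph{orientation} of $G$ in which each edge points to the endpoint chosen by its singleton. As the only in-neighbours of $B_v$ are $\{x_v\}$ and the singletons $\{e\}$ with $e\in E(v)$, the vertex $B_v$ is $B$-reducible if and only if all three edges incident with $v$ are oriented towards $v$. I would then observe that the set $D$ of such ``full-sink'' vertices is an \emph{independent set} of $G$ (two adjacent full sinks would force their common edge to point both ways), and conversely that any independent set is realizable as $D$ by orienting all its incident edges inward. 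Since the singletons are sources and hence always irreducible, this yields $|I(B)| = |V(G)| + |E(G)| + |V(G)\setminus D|$, and minimizing over $B$ (equivalently, maximizing $|D|$, i.e.\ taking a maximum independent set of size $\alpha(G)$) gives $\zeta(M) = |V(G)| + |E(G)| + \tau(G)$, using $\alpha(G)+\tau(G)=|V(G)|$.

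To finish the $L$-reduction, writing $|E(G)|=\tfrac32|V(G)|$ and using $\tau(G)\ge |V(G)|/2$ for cubic $G$ gives $\zeta(M)\le 6\,\tau(G)$, so the first condition of Definition~\ref{def:L-reduction} holds with $a=6$. For the second condition, from an arbitrary branching $B$ let $D$ be its set of $B$-reducible vertices $B_v$; the above shows $D$ is independent, so $C:=V(G)\setminus D$ is a vertex cover, computable in polynomial time, and by the displayed count $|C| = |I(B)| - |V(G)| - |E(G)|$, whence $|C|-\tau(G) = |I(B)|-\zeta(M)$, so $b=1$. This establishes the $L$-reduction and the \apx-hardness of MIB, and the \apx-hardness of MDCRS then follows from $\eta(M)=\zeta(M)$. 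I expect the main obstacle to be precisely the counting subtlety flagged above: arranging, via the orientation/independent-set correspondence, that each uncovered edge forces a \emph{distinct} irreducible vertex, and in particular verifying that the $B$-reducible vertices always form an independent set of $G$, which is what makes the back-translation to a vertex cover valid and yields the clean constant $b=1$.
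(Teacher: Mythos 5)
Your proposal is correct and takes essentially the same route as the paper: an $L$-reduction from vertex cover in cubic graphs via a height-$2$ column hypergraph consisting of edge singletons and vertex stars, where a star is reducible exactly when all arcs from its incident edge singletons enter it, so the reducible stars form an independent set, $|I(B)|$ equals a fixed constant plus the size of the complementary vertex cover, and $b=1$. The paper's gadget simply omits your anchor elements $x_v$ (it uses the stars $E(v)$ themselves, giving $\zeta(M)=|E(G)|+\tau(G)$ and $a=4$ rather than your $a=6$); the only slip in your write-up is that reducibility of $B_v$ also requires the arc $(\{x_v\},B_v)$, not just the three inward edge arcs, which is harmless since that arc is the unique arc out of $\{x_v\}$ and can always be added for free.
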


\begin{proof}
We construct an $L$-reduction from the vertex cover problem in cubic graphs to the MIB problem.
Let $G$ be a cubic graph. Let $M$ be a binary matrix without duplicated columns such that its column hypergraph is isomorphic to $\mathcal{H}$, where $\mathcal{H} = (E, E\cup  \{E(x) : x \in V\})$. See Fig.~\ref{fig:construction-2} for an example construction, representing the containment digraph $D_M$ of the binary matrix derived from the complete graph $K_4$.

\begin{figure}[h!]
\begin{center}
\includegraphics[width=0.9\textwidth]{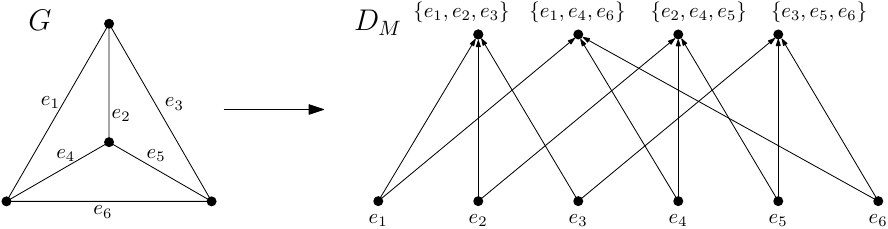}
\end{center}
\caption{An example construction of the $L$-reduction from the proof of Proposition~\ref{prop:apxHard2}.}
\label{fig:construction-2}
\end{figure}

To prove \apx-hardness, we will show that $\zeta(M)  = |E(G)| + \tau(G)$.
This will suffice: since every vertex in a vertex cover covers at most three edges, we have $\tau(G)\ge |E(G)|/3$, which
will imply that $\zeta(M)\le 4\tau(G)$. Similar arguments as those used at the end of the proof
of Proposition~\ref{prop:apxHard} can then be used to infer that the given reduction is an $L$-reduction, thus completing the proof of the theorem.

We split the proof of  $\zeta(M)  = |E(G)| + \tau(G)$ into two parts. First we show that $\zeta(M)  \le |E(G)| + \tau(G)$. Let $C$ be any minimum vertex cover of $G$. Define a set of arcs $B$ of $D_M$ as $B = \{(e, E(x)) : x \in e \wedge x \in V(G) \setminus C\}$. We first claim that $B$ is branching of $D_M$. Indeed, if this was not the case, then there would exist an edge $e \in E(G)$ and two distinct vertices $x,y \in V(G)$ such that $(e,E(x)), (e, E(y)) \in B$. This would imply that $e \in E(x)$ and $e \in E(y)$ and consequently $e = xy$. By definition of $B$, none of $x$ and $y$ is in $C$, contradicting the fact that $C$ is vertex cover.

Let $x\in V(G)$. We claim that $E(x) \in I(B)$ implies that $x \in C$. Suppose for a contradiction that $E(x) \in I(B)$ with $x \not \in C$. Since $x\in V(G)\setminus C$, the definition of $B$ implies that $(e,E(x))\in B$, for every $e\in E(x)$, in particular, every element of $E(x)$ is $B$-covered in $E(x)$. Hence $E(x)\not \in I(B)$, a contradiction. This shows that $|I(B) \cap \{E(x): x \in V(G)\}| \le |C|$. Together with $I(B)=(I(B) \cap E(G)) \cup (I(B) \cap \{E(x) : x  \in V(G)\})$ this implies that $|I(B)| \le |E(G)| + |C|$. It follows that $\zeta(M)\le |I(B)|\leq |E(G)|+|C| = |E(G)|+\tau(G)$, as claimed.

Next we show that $\zeta(M) \ge |E(G)| + \tau(G)$ by showing that $\tau(G) \le \zeta(M) - |E(G)|$.
Let $B$ be a branching of $D_M$ such that $|I(B)|= \zeta(M)$.
Define a set $C$ with $C = \{x \in V(G) : E(x) \in I(B)\}$. We claim that $C$ is a vertex cover of $G$.
Suppose that this does not hold, that is, that there exists $e\in E(G)$, such that $e=xy$ and $x,y \in V(G)\setminus C$. Since $x,y\not \in C$, it follows that $E(x), E(y) \not \in I(B)$. By construction, every element of $D_M$ of the form $E(z)$ is $B$-irreducible, unless $B$ contains all the three arcs leading to $E(z)$.
Consequently, $B$ contains all the three arcs leading to $E(x)$, and similarly for $E(y)$. In particular, we infer that
 $(e, E(x)), (e, E(y)) \in B$, contradicting the fact that $B$ is a branching in $D_M$.
Since $I(B)$ is the disjoint union of $I(B) \cap E(G)$ and  $ I(B) \cap \{E(x) : x  \in V(G)\}$ and   $E(G)\subseteq I(B)$ we have
$|I(B)|= |I(B) \cap E(G)| + |I(B) \cap \{E(x) : x  \in V(G)\}| = |E(G)| + |C|$, implying that $\tau(G)\le |C|=|I(B)|-|E(G)|=\zeta(M)-|E(G)|$.
\end{proof}

\begin{sloppypar}
\begin{theorem3}
The MUB and the MIB problems (and consequently the MCRS and the MDCRS problems) are \apx-hard, even for instances of height~$2$.
\end{theorem3}
\end{sloppypar}

\begin{proof}
The theorem combines the statements of Propositions~\ref{prop:apxHard} and~\ref{prop:apxHard2}.
\end{proof}

\subsection{$2$-Approximating $\eta$ and $\zeta$ via Laminar Set Families}\label{sec:2-approx}

\begin{sloppypar}
The result of Theorem~\ref{thm:apxHard} raises the question whether the four problems (MCRS, MDRCS, MUB, and MIB) admit constant factor approximations. In this section, we show that this is the case for the MDRCS and the MIB problems. This will be achieved by proving a lower and an upper bound for $\eta(M)$, which will together imply a simple \hbox{$2$-approximation} algorithm.

The lower bound is based on a connection between conflict-free matrices and laminar set families and an upper bound on the size of a laminar family in terms of the size of the ground set. Recall that a hypergraph $\mathcal{H}$ is said to be {\em laminar} if every two hyperedges $e_1,e_2 \in E(\mathcal{H})$ satisfy $e_1 \cap e_2 = \emptyset$, $e_1 \subseteq e_2$, or $e_2 \subseteq e_1$. Recall also that the column hypergraph $\mathcal{H}_M$ of a binary matrix $M$ is the hypergraph with vertex set $V(\mathcal{H}_M) = R_M$ and hyperedge set $E(\mathcal{H}_M) = \{\supp_M(c): c\in C_M \}$.
\end{sloppypar}

The following observation follows immediately from definitions.

\begin{observation}\label{observation:laminar}
A binary matrix $M$ is conflict-free if and only if its column hypergraph $\mathcal{H}_M$ is laminar.
\end{observation}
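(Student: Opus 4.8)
The plan is to unwind the definitions on both sides and show each direction by contraposition. Recall that $M$ is conflict-free exactly when no two columns $i,j$ admit rows $r,r',r''$ with the forbidden pattern $\left(\begin{smallmatrix}1&1\\1&0\\0&1\end{smallmatrix}\right)$, and that $\mathcal{H}_M$ is laminar exactly when any two hyperedges $\supp_M(c_i)$ and $\supp_M(c_j)$ are nested or disjoint. The key observation is that the forbidden submatrix pattern is precisely a certificate that two support sets \emph{cross}, i.e.\ are neither nested nor disjoint.

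First I would establish the contrapositive of the forward direction: if $\mathcal{H}_M$ is not laminar, then $M$ is not conflict-free. Assuming non-laminarity, there exist columns $i,j$ whose supports $A=\supp_M(c_i)$ and $B=\supp_M(c_j)$ cross, meaning $A\cap B\neq\emptyset$, $A\not\subseteq B$, and $B\not\subseteq A$. These three conditions supply three witnesses: a row $r$ with $r\in A\cap B$ (so $M_{r,i}=M_{r,j}=1$), a row $r'\in A\setminus B$ (so $M_{r',i}=1$, $M_{r',j}=0$), and a row $r''\in B\setminus A$ (so $M_{r'',i}=0$, $M_{r'',j}=1$). These are pairwise distinct since they lie in disjoint subsets of the three regions, and the submatrix $M[(r,r',r''),(i,j)]$ is exactly the forbidden conflict pattern. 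Hence columns $i$ and $j$ are in conflict and $M$ is not conflict-free.

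For the converse, again by contraposition, I would assume $M$ is not conflict-free and deduce that $\mathcal{H}_M$ is not laminar. If columns $i,j$ are in conflict, there are rows $r,r',r''$ realizing the pattern; reading off the entries, $r\in A\cap B$ witnesses $A\cap B\neq\emptyset$, $r'$ witnesses $A\not\subseteq B$, and $r''$ witnesses $B\not\subseteq A$, where again $A=\supp_M(c_i)$ and $B=\supp_M(c_j)$. Thus the hyperedges $A$ and $B$ neither are disjoint nor satisfy an inclusion either way, so $\mathcal{H}_M$ fails the laminarity condition. Combining the two contrapositives yields the stated equivalence.

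I do not expect any genuine obstacle here, since the proof is a direct translation between the combinatorial condition on support sets and the matrix condition on entries — the observation is essentially a restatement rather than a theorem with content. The only point requiring a small amount of care is confirming that the three witness rows are genuinely distinct (which follows from their membership in disjoint regions $A\cap B$, $A\setminus B$, $B\setminus A$) and that one must consider a fixed ordering of the columns as $(i,j)$ versus $(j,i)$; but since the crossing condition is symmetric in $A$ and $B$, either column ordering produces a forbidden pattern, so no case is lost. Because of this symmetry and directness, the proof in the paper is likely stated as following ``immediately from definitions,'' matching the phrasing of the observation.
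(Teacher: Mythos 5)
Your proof is correct and matches the paper's intent exactly: the paper offers no written argument, stating only that the observation ``follows immediately from definitions,'' and your contrapositive unwinding (forbidden $3\times 2$ pattern $\Leftrightarrow$ crossing supports $A\cap B\neq\emptyset$, $A\not\subseteq B$, $B\not\subseteq A$) is precisely the routine verification being elided, with the distinctness of the witness rows and the column-order symmetry correctly handled. Nothing further is needed.
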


The following upper bound on the size of a laminar hypergraph is well known, see, e.g.,~\cite{schrijver}.

\begin{theorem}\label{thm:laminar}
Every laminar hypergraph $\mathcal{H}$ satisfies $|E(\mathcal{H})| \le 2|V(\mathcal{H})|$.
\end{theorem}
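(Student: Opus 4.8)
The plan is to prove the bound purely combinatorially, exploiting the forest-like structure that laminarity imposes on $E(\mathcal{H})$. Write $\mathcal{L} = E(\mathcal{H})$ and $N = |V(\mathcal{H})|$; since the hyperedges of a hypergraph are pairwise distinct and the empty set, if present, can be discarded at the cost of a single unit, it suffices to establish the slightly sharper statement $|\mathcal{L}| \le 2N - 1$ under the assumption that every hyperedge is nonempty (with $N \ge 1$). Adding back a possible empty hyperedge then yields the claimed $|\mathcal{L}| \le 2N$. I would prove this sharper statement by induction on $N$, the base case $N = 1$ being immediate since the only nonempty subset of a singleton is the singleton itself.

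For the inductive step I would decompose $\mathcal{L}$ according to its inclusion-maximal members $A_1, \ldots, A_k$. Laminarity forces any two of them to be disjoint (if $A_i \subseteq A_j$ then maximality gives $A_i = A_j$), so the $A_i$ are pairwise disjoint nonempty subsets of $V(\mathcal{H})$, whence $\sum_i |A_i| \le N$. Moreover every set of $\mathcal{L}$ lies in exactly one $A_i$, so $\mathcal{L}$ partitions into the subfamilies $\mathcal{L}_i = \{S \in \mathcal{L} : S \subseteq A_i\}$, each of which is again a laminar family of nonempty subsets of the smaller ground set $A_i$. When $k \ge 2$, or when $k = 1$ but the unique maximal set is a proper subset of $V(\mathcal{H})$, every $|A_i| < N$, so the inductive hypothesis gives $|\mathcal{L}_i| \le 2|A_i| - 1$ and hence $|\mathcal{L}| = \sum_i |\mathcal{L}_i| \le 2\sum_i |A_i| - k \le 2N - k \le 2N - 1$.

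The one place where this stalls is the \emph{chain case}, when some maximal set equals the whole ground set, i.e. $V(\mathcal{H}) \in \mathcal{L}$ and $k = 1$: then $A_1 = V(\mathcal{H})$ is not strictly smaller and the induction cannot be applied to $\mathcal{L}_1$ directly. This is the main obstacle, and I would resolve it by peeling off the top: remove $V(\mathcal{H})$ and apply the maximal-set decomposition to $\mathcal{L} \setminus \{V(\mathcal{H})\}$, whose maximal sets $B_1, \ldots, B_m$ are now \emph{proper}, hence strictly smaller, subsets. The delicate point is the bookkeeping of the resulting $-1$ terms: bounding $|\mathcal{L}| = 1 + \sum_j |\mathcal{L}'_j|$, one uses that a single $B_j$ must be a proper subset (so $\sum_j |B_j| \le N - 1$), while $m \ge 2$ supplies an extra $-1$; in either subcase the arithmetic closes to $|\mathcal{L}| \le 2N - 1$.

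Finally, I would remark that the same bound admits a transparent charging proof, which I find clarifies why $2N$ is the right order of magnitude. Viewing $\mathcal{L}$ as a forest under inclusion, charge each leaf and each node with a unique child to an element of the set that belongs to none of its children (for a leaf, any of its elements); laminarity shows that for nested nodes $S_1 \subsetneq S_2$ one has $S_1 \subseteq S_2'$ for the unique child $S_2'$ of $S_2$, so these charges are pairwise distinct, bounding the number of such nodes by $N$. A standard forest count bounds the number of nodes with at least two children by the number of leaves minus one, hence by $N - 1$, and summing gives $2N - 1$ again. Both routes show the stated bound is loose by one, which is harmless for the intended application; the value $2N - 1$ is in fact attained (for instance by all $N$ singletons together with a nested chain of $N - 1$ sets above them), so no further improvement is possible in general.
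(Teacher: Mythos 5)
Your proof is correct, but there is nothing in the paper to compare it against line by line: the paper does not prove this statement at all, presenting it as a well-known fact with a pointer to Schrijver's book. So your proposal actually supplies what the paper omits. Your main induction is the standard argument (essentially the one found in the cited literature): decompose a laminar family by its inclusion-maximal members, which laminarity forces to be pairwise disjoint, and recurse; your handling of the chain case (when $V(\mathcal{H})$ itself is a hyperedge) is the one genuinely delicate point, and your bookkeeping closes correctly in both subcases --- for $m=1$ the single maximal set $B_1$ of $\mathcal{L}\setminus\{V(\mathcal{H})\}$ is proper, giving $1 + (2(N-1)-1) = 2N-2$, while for $m\ge 2$ one has $\sum_j |B_j| \le N$ and $1 + 2N - m \le 2N-1$; note that your parenthetical ``$\sum_j |B_j| \le N-1$'' is only valid in the $m=1$ subcase, so it would be cleaner to state the two subcases separately. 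The alternative charging proof is also sound: the inclusion forest has at most $N$ leaves-plus-unary-nodes (the charges are distinct because a node below a unary node $S_2$ sits inside its unique child $S_2'$, which the charge of $S_2$ avoids), and at most $N-1$ branching nodes by the standard forest count. Two small remarks: your sharper bound $2N-1$ with the tightness example (all singletons plus a chain above them) is a genuine refinement of the stated $2N$; and your discard-and-restore step for the empty hyperedge needs $N\ge 1$, which is harmless here since the paper's matrices have at least one row, but worth flagging since for $N=0$ the inequality $|E(\mathcal{H})|\le 2|V(\mathcal{H})|$ fails if the empty hyperedge is admitted.
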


Observation~\ref{observation:laminar} and Theorem~\ref{thm:laminar} imply the following.

\begin{corollary}\label{corollary:lbound}
Every conflict-free binary matrix $M$ with $m$ rows satisfies $k \le 2m$, where $k$ is the number of distinct columns of $M$.
\end{corollary}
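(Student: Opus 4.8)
The goal is to prove Corollary~\ref{corollary:lbound}: every conflict-free binary matrix $M$ with $m$ rows has at most $2m$ distinct columns. The plan is to reduce this directly to Theorem~\ref{thm:laminar} via the column hypergraph, using Observation~\ref{observation:laminar} as the bridge.

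First I would recall the setup. The column hypergraph ${\cal H}_M$ has vertex set $V({\cal H}_M) = R_M$, so $|V({\cal H}_M)| = m$, the number of rows. Its hyperedges are the support sets $\supp_M(c)$ for $c \in C_M$. Since $M$ is conflict-free, Observation~\ref{observation:laminar} tells us that ${\cal H}_M$ is laminar, so Theorem~\ref{thm:laminar} applies and yields $|E({\cal H}_M)| \le 2|V({\cal H}_M)| = 2m$.

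The remaining step, and the one requiring the most care, is to relate $|E({\cal H}_M)|$ to $k$, the number of \emph{distinct} columns of $M$. The subtlety is that $E({\cal H}_M)$ is defined as a set of support sets, so two distinct columns having the same support collapse to a single hyperedge; moreover the paper's convention rules out all-zero rows but one should note the empty support issue. I would argue that the map $c \mapsto \supp_M(c)$ induces a surjection from distinct columns onto the hyperedges in $E({\cal H}_M)$, whose fibers are exactly the classes of columns sharing a common support. Two columns of $M$ are equal as binary vectors precisely when they have the same support set, so distinct supports correspond bijectively to distinct columns; hence $k = |E({\cal H}_M)|$ (or $k \le |E({\cal H}_M)|$, with the possible discrepancy of an all-zero column, which is excluded by the standing assumption that $M$ has no all-zero row, equivalently causes no complication for columns). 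Either way, $k \le |E({\cal H}_M)|$.

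Combining the two bounds gives $k \le |E({\cal H}_M)| \le 2|V({\cal H}_M)| = 2m$, which is exactly the claim. I expect the only genuine obstacle to be the bookkeeping in the final paragraph: making precise that counting distinct columns and counting distinct hyperedges coincide (so that $k$, defined as a count of distinct \emph{columns}, can legitimately be substituted for $|E({\cal H}_M)|$, defined as a count of distinct \emph{support sets}). Everything else is an immediate chaining of the two cited results.
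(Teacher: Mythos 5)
Your proof is correct and follows exactly the route the paper intends: it derives the corollary by chaining Observation~\ref{observation:laminar} (conflict-free $\Leftrightarrow$ laminar column hypergraph) with Theorem~\ref{thm:laminar}, which is all the paper itself offers, stated without further proof. Your extra bookkeeping identifying the $k$ distinct columns with the hyperedges of ${\cal H}_M$ (a binary column is determined by its support, so $k = |E({\cal H}_M)|$) is sound and merely makes explicit what the paper leaves implicit.
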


The claimed $2$-approximation will be based on three lemmas.

\begin{lemma}\label{lemma:numOfDistColumnsM'}
If $M'$ is a conflict-free row split of $M$, then the number of distinct columns of $M'$ is at least as large as
the number of distinct columns of $M$.
\end{lemma}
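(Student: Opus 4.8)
The plan is to exploit the fact that a row split does not alter the columns: both $M$ and $M'$ have $n$ columns, and their columns are naturally indexed by the same set $\{1,\ldots,n\}$. Writing $c_j$ for the $j$-th column of $M$ and $c'_j$ for the $j$-th column of $M'$, I would show that this correspondence never merges two distinct columns of $M$ into a single column of $M'$; equivalently, I would prove the implication that $c'_p = c'_q$ forces $c_p = c_q$. Once this is established, the number of distinct columns of $M'$ is at least that of $M$.

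First I would recall the defining relation of a row split: for each row index $i$ and each column index $j$, the entry $M_{i,j}$ equals the bitwise OR $\bigvee_{r' \in R_i} M'_{r',j}$, where $R_i$ is the set of split rows of $r_i$. Then, assuming $c'_p = c'_q$, i.e.\ $M'_{r',p} = M'_{r',q}$ for every row $r'$ of $M'$, I would fix an arbitrary row index $i$ and compute
$$ M_{i,p} = \bigvee_{r' \in R_i} M'_{r',p} = \bigvee_{r' \in R_i} M'_{r',q} = M_{i,q}, $$
where the middle equality uses the hypothesis applied to the rows $r' \in R_i$. Since $i$ is arbitrary, $c_p = c_q$, which proves the desired implication.

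Finally, I would take a set $S$ of column indices realizing all distinct columns of $M$ (one index per class of equal columns), so that $|S|$ equals the number of distinct columns of $M$ and $c_p \neq c_q$ for distinct $p,q \in S$. By the contrapositive of the implication just proved, the columns $\{c'_j : j \in S\}$ are then pairwise distinct in $M'$, whence $M'$ has at least $|S|$ distinct columns. This yields the bound.

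I do not anticipate a genuine obstacle here; the argument is a one-line bitwise-OR computation together with the observation that splitting preserves the column index set. The only points worth stating explicitly are that the column correspondence is the natural one coming from the definition of a row split, and — perhaps surprisingly — that the conflict-freeness of $M'$ is not used at all: the inequality holds for \emph{every} row split of $M$.
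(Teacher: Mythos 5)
Your proof is correct and is essentially the paper's argument in contrapositive form: the paper shows directly that two distinct columns $c_i \neq c_j$ of $M$ remain distinct in $M'$ by picking a row $r$ with $M_{r,i}=0$ and $M_{r,j}=1$ and observing that every split row of $r$ has a $0$ in column $i$ while some split row has a $1$ in column $j$, which is exactly your bitwise-OR computation read in the other direction. Your remark that conflict-freeness of $M'$ is never used also matches the paper's proof, which likewise does not invoke it.
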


\begin{proof}
It suffices to prove that each two distinct columns of $M$ are still distinct after performing the row split. Let $c_i, c_j$ be two distinct columns of $M$ and $c_i', c_j'$ the corresponding columns of $M'$. Then, without loss of generality, there exists a row $r$ of $M$ such that, $M_{r,i} = 0$ and $M_{r,j} =1$. Let $R(r)$ be the set of split rows of $r$ with respect to $M'$. Then for every $r' \in R(r)$ it holds $M'_{r', i} = 0$. Since the rows in $R(r)$ split $r$, there exists some $r'' \in R(r)$ with $M'_{r'', j}=1$.  This gives us $M'_{r'', i}=0$ and $M'_{r'',j} = 1$, showing that columns $c_i'$ and $c_j'$ are distinct.
\end{proof}

The following lemma shows that the value of $\eta$ is invariant under deleting one of a pair of identical columns.

\begin{lemma}\label{lemma:reduction}
For every binary matrix $M$ it holds that
$$\gamma(M)=\gamma(\rd(M))\,, \qquad  \eta(M)=\eta(\rd(M))\,,$$
\vspace{-4mm}
$$\beta(M)=\beta(\rd(M))\,,  \qquad  \zeta(M)=\zeta(\rd(M))\,.$$
\end{lemma}

\begin{proof}
Since $\rd(M)$ is submatrix of $M$, it follows that $\gamma(\rd(M))\leq \gamma(M)$ and, similarly, that $\eta(\rd(M))\leq \eta(M)$.
Conversely, since any conflict-free row split of $\rd(M)$ can be transformed to a conflict-free row split of $M$ with the same number of rows (by duplicating some columns) it follows that $\gamma(M)\leq \gamma(\rd(M))$ and $\eta(M)\leq \eta(\rd(M))$.
We have shown that $\gamma(M)=\gamma(\rd(M))$ and $\eta(M)=\eta(\rd(M))$.
Moreover, since the containment digraphs $D_M$ and $D_{\rd(M)}$ are the same, we infer that $\beta(M)=\beta(\rd(M))$ and $\zeta(M)=\zeta(\rd(M))$.
\end{proof}

Corollary~\ref{corollary:lbound} and Lemmas~\ref{lemma:numOfDistColumnsM'} and~\ref{lemma:reduction} together with a simple row splitting strategy imply the following.

\begin{lemma}\label{lem:bounds-eta}
For every binary matrix $M$, we have $k/2 \le \eta(M) \le k$, where $k$ is the number of distinct columns of $M$.
\end{lemma}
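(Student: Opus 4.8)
The plan is to prove the two-sided bound $k/2 \le \eta(M) \le k$ by combining the three preceding results, where $k$ denotes the number of distinct columns of $M$. For the lower bound, the key idea is to chase the number of distinct columns through a row split and then apply the laminar-family size bound. For the upper bound, the strategy is to exhibit an explicit conflict-free row split whose number of distinct rows is at most $k$.

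First I would establish the lower bound $\eta(M) \ge k/2$. Fix an optimal conflict-free row split $M'$ of $M$, so that $M'$ has exactly $\eta(M)$ distinct rows. By Lemma~\ref{lemma:numOfDistColumnsM'}, the number of distinct columns of $M'$ is at least the number of distinct columns of $M$, namely at least $k$. Since $M'$ is conflict-free, I apply Corollary~\ref{corollary:lbound} to $M'$: writing $m'$ for the number of distinct rows of $M'$ and $k'$ for its number of distinct columns, the corollary gives $k' \le 2m'$. Combining $k \le k'$ with $k' \le 2m' = 2\eta(M)$ yields $k \le 2\eta(M)$, i.e.\ $\eta(M) \ge k/2$, as desired.

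For the upper bound $\eta(M) \le k$, by Lemma~\ref{lemma:reduction} it suffices to treat $\rd(M)$, the matrix obtained from $M$ by keeping one copy of each distinct column, since $\eta(M) = \eta(\rd(M))$ and $\rd(M)$ has exactly $k$ (distinct) columns. The natural candidate is the ``trivial'' row split that makes the matrix conflict-free by splitting each row into its single-column indicator vectors; concretely, one replaces each row $r$ by the set of unit-support rows, one for each column in which $r$ has a $1$. The resulting matrix is conflict-free because every column's support in it is either disjoint from or equal to any other column's support, and each distinct resulting row is a standard basis vector $e_j$ supported on exactly one of the $k$ columns. Hence the number of distinct rows is at most $k$ (at most one per column), giving $\eta(\rd(M)) \le k$ and therefore $\eta(M) \le k$.

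The routine verifications are that this explicit split is genuinely a row split (each original row is the bitwise OR of its pieces, which is immediate) and that it is conflict-free (no forbidden $3\times 2$ pattern can arise among unit vectors). The only real subtlety, and the step I would be most careful about, is the lower bound's application of Corollary~\ref{corollary:lbound}: one must apply it to $M'$ rather than $M$, and correctly track that ``number of distinct columns'' is what the corollary's $k$ refers to while ``number of distinct rows'' is the quantity $\eta(M)$ controls. Getting these two roles straight — and invoking Lemma~\ref{lemma:numOfDistColumnsM'} to transfer the column count from $M$ up to $M'$ — is where the argument could easily be stated incorrectly, even though each individual ingredient is already in place.
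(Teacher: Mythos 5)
Your plan coincides with the paper's proof in both directions: the upper bound via Lemma~\ref{lemma:reduction} and the unit-row split of $\rd(M)$ is exactly the paper's argument and is correct as you state it, and your lower bound is intended to follow the paper's chain (Lemma~\ref{lemma:numOfDistColumnsM'} plus Corollary~\ref{corollary:lbound}). However, there is a genuine gap in the lower bound, at precisely the step you flagged as delicate. Corollary~\ref{corollary:lbound} bounds the number of distinct columns of a conflict-free matrix by twice its number of \emph{rows}, not by twice its number of \emph{distinct} rows. Applied to $M'$, as you do, it yields only $k' \le 2m'$ where $m'$ is the total row count of $M'$; since a conflict-free row split with $\eta(M)$ distinct rows may have arbitrarily many rows in total (duplicates are free), this inequality says nothing about $\eta(M)$, and the conclusion $k' \le 2\eta(M)$ does not follow from what you cite.

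The paper closes exactly this gap with one extra construction: from $M'$ it forms the matrix $N \in \{0,1\}^{\eta(M)\times n}$ by keeping one representative row from each class of identical rows of $M'$. Two observations make this work. First, deleting rows preserves conflict-freeness, since the forbidden $3\times 2$ pattern of Definition~\ref{def:conflict} involves three rows. Second, deleting a duplicate row cannot merge two previously distinct columns, because the retained twin row has the same entries and so still witnesses any difference the deleted row witnessed; hence $N$ has exactly $k'$ distinct columns. Now $N$ is conflict-free with exactly $\eta(M)$ rows, so Corollary~\ref{corollary:lbound} applied to $N$ gives $k' \le 2\eta(M)$, and combining with $k \le k'$ from Lemma~\ref{lemma:numOfDistColumnsM'} yields $\eta(M) \ge k/2$. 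With this one-line repair your argument matches the paper's proof; the upper-bound half needs no change.
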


\begin{sloppypar}
\begin{proof}
Let $M \in \{0,1\}^{m \times n}$.
First, we prove that  $k/2 \le \eta(M)$ or, equivalently, that $k \le 2 \eta(M)$.
Let $M' \in \{0,1\}^{m' \times n}$ be a row split of $M$ with exactly $\eta(M)$ distinct rows.
Let $k'$ be the number of distinct columns of $M'$.
Let $N\in \{0, 1\}^{\eta(M) \times n}$ be a new matrix obtained from $M'$ by taking one row from each set of identical rows.
It is not difficult to see that $N$  is conflict-free, with exactly $k'$ distinct columns.
Further on, by Corollary~\ref{corollary:lbound} it holds $k'\le 2 \eta(M)$ and hence by Lemma~\ref{lemma:numOfDistColumnsM'} it holds $k \le k' \le 2\eta(M)$, as claimed.

It remains to show $\eta(M) \le k$. By Lemma~\ref{lemma:reduction} it suffices to show that $\eta(\rd(M)) \le k$. Let $M'$ be the row split of $\rd(M)$ obtained by splitting each row $r$ with $t$ ones into $t$ rows, each with exactly one non-zero entry.
By construction, $M'$ has exactly $k$ columns and therefore at most $k$ distinct rows. It follows that $\eta(\rd(M))\le k$, as desired.
\end{proof} \end{sloppypar}

Now we have everything ready to state and prove the announced approximation result.

\begin{theorem}\label{thm:2-approx}
There is a $2$-approximation algorithm for the MDCRS (and consequently for the MIB)
problem running in time ${\mathcal{O}}(mnk)$ on a given matrix $M\in \{0,1\}^{m\times n}$
where $k$ is the number of distinct columns of $M$.
\end{theorem}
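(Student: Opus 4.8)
The plan is to turn the bounds of Lemma~\ref{lem:bounds-eta} into an algorithm together with a matching analysis of its approximation ratio. By Lemma~\ref{lemma:reduction} we have $\eta(M) = \eta(\rd(M))$, so we may work throughout with the reduced matrix $\rd(M)\in\{0,1\}^{m\times k}$, which has exactly $k$ distinct columns. The algorithm itself is the trivial ``shatter'' split already used in the upper-bound half of Lemma~\ref{lem:bounds-eta}: split each row $r$ of $\rd(M)$ having $t$ ones into $t$ unit rows, each carrying a single $1$. First I would verify that the resulting matrix $M'$ is a conflict-free row split of $\rd(M)$ (hence, after duplicating the removed columns, of $M$) and that it has at most $k$ distinct rows, since each of its rows is one of the $k$ standard basis vectors selected by the nonzero column positions. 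Thus the algorithm outputs a feasible solution of cost at most $k$.

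For the approximation guarantee I would invoke the lower bound $k/2 \le \eta(M)$ from Lemma~\ref{lem:bounds-eta}. Since the algorithm's output has at most $k$ distinct rows and the optimum is $\eta(M) \ge k/2$, the ratio between the produced value and the optimum is at most $k/(k/2) = 2$. Combined with $\eta(M)=\zeta(M)$ from Theorem~\ref{thm:beta=gamma}, the same output (interpreted via the equivalence between row splits and branchings) yields a $2$-approximation for the MIB problem as well, so I would state the MIB consequence as a direct corollary of the equivalence.

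The remaining work is the running-time bound of $O(mnk)$. Here I would account for the steps explicitly: computing $\rd(M)$ from $M\in\{0,1\}^{m\times n}$ can be done in time $O(mn)$ by radix sort on the columns (as already noted in the proof of Theorem~\ref{thm:beta=gamma}); producing the shattered matrix $M'$ requires writing out its rows, and since $M'$ has at most $mk$ rows (each original row contributes at most $k$ unit rows) and $k$ columns, naively materializing $M'$ costs $O(mk^2)$. To meet the stated $O(mnk)$ bound I would observe that $k\le n$, so $O(mk^2)\subseteq O(mnk)$, and fold in the $O(mn)$ reduction cost, giving total time $O(mn + mnk) = O(mnk)$.

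The main obstacle I anticipate is purely bookkeeping rather than conceptual: one must be careful that the $2$-approximation claim uses the \emph{number of distinct rows} (the objective $\eta$) and not the total number of rows, since the shattered matrix $M'$ may have up to $mk$ rows but only $k$ distinct ones. The correctness of the ratio hinges on counting distinct rows and on the lower bound $k/2\le\eta(M)$, which in turn rests on Corollary~\ref{corollary:lbound} (the laminar-family size bound) via Lemma~\ref{lemma:numOfDistColumnsM'}. I would therefore make explicit in the proof that the feasible solution is evaluated by its distinct-row count, so that both the feasibility (from the upper-bound argument) and the ratio (from the lower-bound argument) line up, and then the time analysis completes the proof.
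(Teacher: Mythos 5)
Your proposal is correct and takes essentially the same route as the paper's own proof: the same shatter split of $\rd(M)$ into unit rows, the same bounds $k/2 \le \eta(M) \le k$ from Lemma~\ref{lem:bounds-eta} (resting on Corollary~\ref{corollary:lbound} and Lemmas~\ref{lemma:numOfDistColumnsM'} and~\ref{lemma:reduction}), and the same overall time accounting. The only cosmetic difference is that the paper explicitly itemizes the $O(kmn)$ cost of the final column-duplication step that turns the split of $\rd(M)$ back into a row split of $M$, whereas you mention this step only for feasibility; your stated $O(mnk)$ bound absorbs it in any case.
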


\begin{proof}
Let $M$ be a binary matrix with $m$ rows and $n$ columns, exactly $k$ of which are distinct. The proof of Lemma~\ref{lem:bounds-eta} is constructive and leads to the following algorithm to compute a row split of $M$ with at most $k$ distinct rows:
\begin{enumerate}
\item Compute $\rd(M)$. (This can be done in time ${\mathcal{O}}(mn)$ using radix sort.)
\item Compute a row split $M'$ of $\rd(M)$ obtained by splitting each row $r$ with $t$ ones into $t$ rows, each with exactly one non-zero entry. (This can be done in time ${\mathcal{O}}(mk^2)$.)
\item Transform $M'$ into a row split of $M$ by an appropriate duplication of some columns.  (This can be done in time ${\mathcal{O}}(kmn)$, since $M'$ has at most $km$ rows and the constructed matrix will have exactly $n$ columns.)
\end{enumerate}
Clearly, the algorithm produces a row split of $M$ with at most $k$ distinct rows. Since $\eta(M)\ge k/2$, it follows that this is a $2$-approximation.
Moreover, using the fact that $k\le n$, we infer that the total time complexity of the algorithm is
${\mathcal{O}}(mn + mk^2 + mnk) = {\mathcal{O}}(mnk)$, as stated.
\end{proof}

Note that Theorems~\ref{thm:apxHard} and~\ref{thm:2-approx} imply that the MDCRS and the MIB problems are \apx-complete.

\subsection{Two Approximation Algorithms for Computing $\gamma$ and $\beta$}\label{subsect:height and width}

While the question of whether the MCRS (and consequently the MUB) problem admits a constant factor approximation algorithm on general instances remains open, we give in this section two partial results in this direction. We show that
the two problems admit constant factor approximation algorithms on instances of bounded height or width.

Roughly speaking, the following theorem shows that for instances of bounded height {\it any} algorithm
for the MCRS problem based on branchings is a constant factor approximation algorithm.

\begin{theorem}\label{thm:height}
Let $M$ be a binary matrix and let $B$ be an arbitrary branching of $D_M$.
Then, the number of rows in the $B$-split of $M$ is at most $h(M)\gamma(M)$.
\end{theorem}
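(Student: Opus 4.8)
The plan is to bound the number of rows in the $B$-split $M^B$, which by Lemma~\ref{lemma:branching to split} equals $|U(B)|$, by relating the number of $B$-uncovered pairs to an optimal solution $\beta(M) = \gamma(M)$ using the height parameter $h(M)$. First I would fix an arbitrary branching $B$ and analyze $U(B) = \bigcup_{r \in R_M} U_B(r)$, partitioning the uncovered pairs by their first coordinate (the row $r$). The key quantity to control is, for each row $r$, the number $|U_B(r)|$ of vertices $v$ of $D_M$ containing $r$ in which $r$ is uncovered with respect to $B$.

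The central observation I would try to exploit is that within a single branching, the set of vertices $v$ for which $(r,v) \in U_B(r)$ behaves like a structure whose size is controlled by the height of the containment digraph. Since $B$ is a branching, each vertex has at most one out-arc, so the arcs of $B$ form a forest oriented toward the sinks; the vertices containing a fixed row $r$ that are $B$-uncovered for $r$ should form an antichain-like or path-bounded collection relative to the reachability structure. More precisely, I would aim to show that for each row $r$, the uncovered vertices $\{v : (r,v)\in U_B(r)\}$ can be charged against the vertices through which $r$ ``passes'' along the tree structure of $B$, and that any chain (directed path) in $D_M$ contains at most $h(M)$ vertices by definition of height. The goal is a bound of the form $|U_B(r)| \le h(M) \cdot (\text{number of uncovered pairs for } r \text{ in an optimal branching})$, or more directly to compare $|U(B)|$ against $\gamma(M)=\beta(M)$ globally.

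A cleaner route I would pursue: since every row $r$ must be covered somewhere in an optimal conflict-free row split, the optimum $\gamma(M) = \beta(M)$ already accounts for at least one uncovered pair per row that genuinely needs splitting. The height $h(M)$ bounds the length of any directed path, hence the number of distinct ``levels'' (nested supports) any single element can traverse. I would therefore argue that $|U(B)| \le h(M) \cdot (\text{lower bound on } \beta(M))$ by showing each uncovered pair in $B$ can be attributed to a root-to-vertex segment of length at most $h(M)$ in the branching forest, while each such segment must contribute at least one uncovered pair to $\beta(M)$. Summing over segments gives $|U(B)| \le h(M)\,\beta(M) = h(M)\gamma(M)$.

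The hard part will be making the charging argument precise, namely identifying the correct combinatorial object in $B$ (presumably the maximal directed paths of the branching, each of which sits inside a chain of $D_M$ and hence has at most $h(M)$ vertices) and proving that each such path forces at least one uncovered pair in \emph{any} branching, in particular an optimal one. The subtlety is that the branching $B$ is arbitrary and possibly far from optimal, so I cannot directly compare uncovered pairs vertex-by-vertex; instead I expect the argument to hinge on the structural fact that each maximal path in the branching forest lies in a single chain (a directed path in $D_M$) of length $\le h(M)$, and that the source of each such chain is necessarily irreducible and thus contributes to any feasible solution. Verifying that this accounting never overcounts and that the factor is exactly $h(M)$ rather than something larger is where I would spend the most care.
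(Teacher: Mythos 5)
There is a genuine gap, and it sits exactly where you predicted it would: in making the charging argument precise. Your ``cleaner route'' decomposes the \emph{arbitrary} branching $B$ into maximal directed paths and claims that each such path forces at least one uncovered pair in \emph{any} branching because ``the source of each such chain is necessarily irreducible.'' This is false: irreducibility is relative to a branching, and the source of a maximal path of $B$ is only $B$-irreducible, not irreducible with respect to an optimal branching $B_{opt}$. Concretely, take a matrix whose column supports are $u_1=\{r_1\}$, $u_2=\{r_2\}$, $w=\{r_1,r_2\}$, and take $B=\emptyset$; then $\{w\}$ is a maximal (trivial) path of $B$, yet under $B_{opt}=\{(u_1,w),(u_2,w)\}$ every element of $w$ is covered, so this path contributes no uncovered pair to the optimum. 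A second, quantitative problem is that uncovered pairs are counted per row, so a single maximal path of $B$ can carry far more than $h(M)$ of them: for $B=\emptyset$ each vertex $v$ is its own maximal path and carries $|v|$ uncovered pairs. Hence the accounting ``at most $h$ pairs per segment, at least one optimum pair per segment'' cannot close, no matter how the segments are chosen inside $B$.

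The repair --- and the paper's actual proof --- charges in the opposite direction, against the path structure of $B_{opt}$ rather than of $B$, and keeps the row fixed throughout. For each $(r,v)\in U(B_{opt})$ set $\Omega(r,v)=\{(r,v') : v'\in B_{opt}^{+}(v)\}$. Every pair $(r,v')$ with $r\in v'$ --- in particular every element of $U(B)$, for any branching $B$ whatsoever --- lies in some such $\Omega(r,v)$: if $r$ is covered in $v'$ with respect to $B_{opt}$, pass to an in-neighbor of $v'$ in $B_{opt}$ containing $r$ and repeat until a $B_{opt}$-uncovered pair $(r,v)$ is reached; this terminates, and all vertices visited contain $r$ because arcs of $D_M$ go from subsets to supersets. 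Finally, since $B_{opt}$ is a branching, every vertex has out-degree at most one in it, so $B_{opt}^{+}(v)$ is a single directed path of $D_M$ and therefore $|\Omega(r,v)|\le h(M)$. Summing over $U(B_{opt})$ gives $|U(B)|\le h(M)\,|U(B_{opt})|=h(M)\beta(M)=h(M)\gamma(M)$, using Theorem~\ref{thm:beta=gamma} and the fact that the $B$-split has $|U(B)|$ rows (Lemma~\ref{lemma:branching to split}). Note that this argument uses nothing about $B$ --- it bounds even $|U(\emptyset)|$, the total number of row--vertex incidences --- which is precisely why your instinct to extract the comparison structure from $B$ itself was a dead end.
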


\begin{proof}
Let $h = h(M)$ and let $B_{opt}$ be a branching of $D_M$ with $|U(B_{opt})|=\beta(M)$.
Recall that the number of rows in the $B$-split of $M$ is $|U(B)|$.  Since by  Theorem~\ref{thm:beta=gamma} $\beta(M)=\gamma(M)$, it suffices to prove that $|U(B)|\leq h\beta(M)$.
For $(r,v)\in U(B_{opt})$, define the set $\Omega(r,v)$ with
$$
\Omega(r,v)=\{(r,v')\mid v \in V, v'\in B_{opt}^{+}(v)\}.
$$
We claim that
\begin{equation}\label{eq:omega(r,v)}
U(B)\subseteq\cup_{(r,v)\in U(B_{opt})}\Omega(r,v).
\end{equation}
(In fact, since $U(B)=U(\emptyset)=\{(r,v):r\in v \in V\}$, equality holds in \eqref{eq:omega(r,v)}, but we will not need it in the proof.)
Let $(r,v')\in U(B)$.
We will show that there exists some $(r,v)\in U(B_{opt})$ such that $(r,v')\in \Omega(r,v)$.
If $(r,v')\in U(B_{opt})$, then $(r,v')\in \Omega(r,v')$, since $v'\in B_{opt}^+(v')$.
If $(r,v')\not \in U(B_{opt})$, it follows that $r$ is covered in $v'$ with respect to $B_{opt}$, and therefore, there exists some $v''$ such that $(v'',v')\in B_{opt}$, and $r\in v''$. If $(r,v'')\in U(B_{opt})$, then it is clear that $(r,v')\in \Omega(r,v'')$. If $(r,v'')\not \in U(B_{opt})$, then we repeat the described procedure, which has to terminate after finitely many steps. Therefore, there exists some $(r,v)\in U(B_{opt})$ such that $(r,v')\in \Omega(r,v)$, as claimed. This establishes inclusion \eqref{eq:omega(r,v)}.

Since the height of $D_M$ is $h$, it follows that the height of  $B_{opt}$ is at most $h$. Moreover, since $B_{opt}$ is a branching, it follows that $|\Omega(r,v)|\le h$, for every $(r,v)\in U(B_{opt})$. Combining this with \eqref{eq:omega(r,v)}, we have
$$
|U(B)|\leq \sum_{(r,v)\in U(B_{opt})} |\Omega(r,v)|\leq h|U(B_{opt})|=h\beta(M).
$$
Since  $\beta(M)=\gamma(M)$, it follows that $|U(B)|\leq h\gamma(M)$. This completes the proof.
\end{proof}

\begin{sloppypar}
\begin{remark}
Since in Theorem~\ref{thm:height}, there is no restriction on the branching $B$,
an $h(M)$-approximation to $\gamma(M)$ can be obtained simply by taking $B = \emptyset$ and returning the resulting row split.
\end{remark}
\end{sloppypar}

\begin{remark}\label{remark:tight}
The following example shows that for every $h>1$ and every $\epsilon>0$, the algorithm for the MCRS problem given by Theorem~\ref{thm:linear-branchings} is not an $(h-\epsilon)$-approximation when restricted to instances of height $h$.
\end{remark}

\begin{example}\label{ex:d-ary tree}
Fix a positive integer $h\geq 2$. For all $d\ge 2$, let $M_d$ be a binary matrix with $d^{h-1}$ rows (indexed by $1,\ldots,d^{h-1}$) and $1+d+d^2+\ldots+d^{h-1}$ columns. Entries of $M_d$ are defined so that the supports of columns of $M_d$ are given by $A_j^i=\{(j-1)d^{i-1}+k: 1\leq k\leq d^{i-1}\}$ for all $i\in \{1,\ldots,h\}$ and all $j\in \{1,\ldots,d^{h-i-1}\}$ (see Figs.~\ref{fig:tree} and~\ref{fig:tree2} for an example). The height of $M_d$ is $h$.
Matrix $M_d$ is conflict-free, therefore $\gamma(M_d)=\beta(M_d)=d^{h-1}$. It can be seen that $\beta_{\ell}(M_d)=h\cdot d^{h-1}-(h-1)\cdot d^{h-2}$. Therefore,
$\beta_{\ell}(M_d)/\beta(M_d)=h-\frac{h-1}{d}$.
It follows that for every $\epsilon >0$, we have $\beta_{\ell}(M_d)/\beta(M_d)>h-\epsilon$ for all large enough $d$.
\end{example}

\begin{figure}[h!]
\begin{center}
\includegraphics[width=0.9\textwidth]{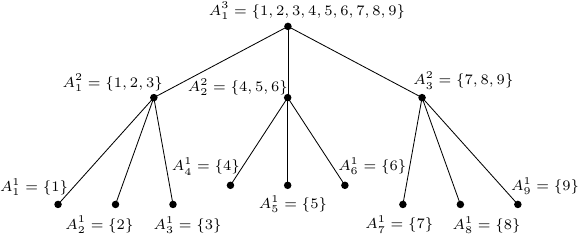}
\end{center}
\caption{The underlying tree of a branching $B$ of $D_{M_d}$ where $M_d$ is defined as in Example~\ref{ex:d-ary tree} for $d=3$ and $h=3$.}
\label{fig:tree}
\end{figure}

\begin{figure}[h!]
\begin{center}
\includegraphics[width=0.9\textwidth]{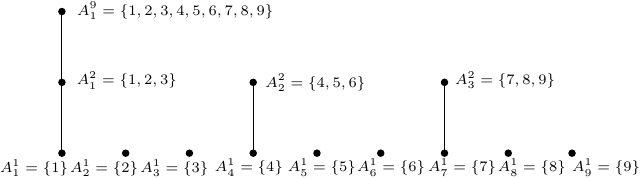}
\end{center}
\caption{The underlying forest of an optimal linear branching $B'$ given by Theorem~\ref{thm:linear-branchings}.
Note that, if $B$ denotes the branching from Fig.~\ref{fig:tree}, then $|U(B)|=\gamma(M_d)=9$ and $|U(B')|=21$.}
\label{fig:tree2}
\end{figure}

For instances of bounded width, a constant factor approximation can be obtained by considering any $B$-split resulting from a linear branching $B$ of $D_M$ consisting of ${\it wdt}(M)$ paths. Note that such a branching can be computed in polynomial time using Dilworth's theorem (Theorem~\ref{thm:Dilworth}).

\begin{theorem}\label{thm:width}
Any algorithm that, given a binary matrix $M$, computes a linear branching $B$ of $D_M$ consisting of ${\it wdt}(M)$ paths and returns
the corresponding $B$-split of $M$ is a ${\it wdt}(M)$-approximation algorithm for the MCRS problem.
\end{theorem}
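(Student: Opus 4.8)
The plan is to bound the number of rows in the $B$-split from above by ${\it wdt}(M)$ times $\gamma(M)$, where $B$ is the given linear branching consisting of ${\it wdt}(M)$ paths. By Theorem~\ref{thm:beta=gamma}, the number of rows in the $B$-split equals $|U(B)|$, and $\gamma(M) = \beta(M)$. So it suffices to prove that $|U(B)| \le {\it wdt}(M)\cdot \beta(M)$. The cleanest route is through the weight function $\pi(v) = |v|$ introduced in Section~\ref{sec:heuristic}, which is monotone on $D_M$, and the observation recorded there that for a linear branching $B$ with corresponding chain partition $P$ we have $\Pi(P) = |U(B)|$.

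First I would observe that $B$ corresponds to a chain partition $P = \{C_1,\ldots,C_{{\it wdt}(M)}\}$ of $D_M$ into exactly ${\it wdt}(M)$ chains. Hence $|U(B)| = \Pi(P) = \sum_{i=1}^{{\it wdt}(M)} \Pi(C_i)$, where each $\Pi(C_i) = \max_{v\in C_i} \pi_v = \max_{v\in C_i} |v|$. The next step is to bound each $\Pi(C_i)$ from above by $\beta(M)$. Indeed, for any single vertex $v \in V(D_M)$, taking the branching $B' = \emptyset$ yields $U(B') = \{(r,v'): r\in v'\in V\}$, so in particular $|v| \le |U(B')|$; more to the point, every element $r \in v$ is uncovered in $v$ with respect to \emph{any} branching that does not send an arc into $v$, and one checks that $\beta(M) \ge \max_{v\in V(D_M)} |v|$ because the ${|v|}$ elements of a maximum-size vertex must each be accounted for by a distinct uncovered pair under any optimal branching (each element of $v$ that is covered must be covered from a distinct in-neighbor, but covering all of $v$'s elements forces uncovered pairs to appear deeper in the branching). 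Thus $\Pi(C_i) = \max_{v\in C_i}|v| \le \max_{v\in V(D_M)}|v| \le \beta(M)$ for every $i$.

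Combining these two facts gives
\[
|U(B)| = \sum_{i=1}^{{\it wdt}(M)} \Pi(C_i) \le {\it wdt}(M)\cdot \beta(M) = {\it wdt}(M)\cdot \gamma(M),
\]
which is exactly the claimed ${\it wdt}(M)$-approximation guarantee, since $\gamma(M)$ is the optimum. The running-time claim (that $B$ is computable in polynomial time) follows from Theorem~\ref{thm:Dilworth}, which produces a chain partition of size ${\it wdt}(M)$ in time $O(|V(D_M)|^{5/2})$, and from the first part of Theorem~\ref{thm:beta=gamma}, which converts $B$ into the $B$-split efficiently.

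\textbf{The main obstacle} I anticipate is pinning down the inequality $\max_{v\in V(D_M)} |v| \le \beta(M)$ rigorously. The intuition is that the largest support $v$ has $|v|$ rows, and any conflict-free row split must ``separate'' these rows in a way that produces at least $|v|$ uncovered pairs overall; equivalently, a single chain passing through $v$ contributes price at least $|v|$, and by Lemma~\ref{lem:theLowerBound} any chain partition has price at least the value of any tower of antichains. A clean way to secure this bound is to exhibit, for the vertex $v$ of maximum size, a tower of antichains whose value is at least $|v|$ — for instance using the monotonicity of $\pi$ to guarantee that the antichain realizing the width can be chosen so that its minimum weight is large — or simply to argue directly that in any $B$-split each of the $|v|$ rows of $v$ must appear as a distinct row, so the split has at least $|v|$ rows and hence $\beta(M) \ge |v|$. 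I would lean on the latter, more elementary argument, as it avoids invoking the min-max machinery and keeps the proof self-contained.
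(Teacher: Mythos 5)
Your proof is correct, but it reaches the bound by a different decomposition than the paper's. The paper counts uncovered pairs \emph{per row}: for each row $r$ it shows that the set $N(r)=\{v\in V(D_M):(r,v)\in U(B)\}$ meets each chain $C_i$ in at most one vertex (if $v_1\subset v_2$ lie on the same chain and $r\in v_1$, then $r$ is covered in $v_2$ via the path of $B$ from $v_1$ to $v_2$), which gives $|U(B)|\le {\it wdt}(M)\,|R_M|$, and then invokes the trivial bound $|R_M|\le\gamma(M)$ (valid since $M$ has no all-zero rows). You instead compute $|U(B)|$ exactly \emph{per chain}, using the telescoping identity $|U(B)|=\Pi(P)=\sum_{i}\max_{v\in C_i}|v|$ recorded in Section~\ref{sec:heuristic}, and bound each summand by $\beta(M)$. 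Your route yields the slightly sharper intermediate bound $|U(B)|\le {\it wdt}(M)\cdot\max_{v\in V(D_M)}|v|\le {\it wdt}(M)\,|R_M|$, while the paper's per-row argument avoids the pricing machinery and works directly from the definition of $U(B)$; both ultimately rest on essentially the same elementary lower bound, $\gamma(M)\ge |R_M|\ge \max_{v}|v|$.

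One caveat: the justification sketched in your middle paragraph for $\beta(M)\ge\max_v|v|$ is broken as written. It is not true that ``each element of $v$ that is covered must be covered from a distinct in-neighbor'' --- a single in-neighbor $u$ with $(u,v)\in B$ covers all elements of $u$ simultaneously --- and the computation of $|U(B')|$ for $B'=\emptyset$ says nothing about an \emph{optimal} branching. But you correctly discard this in your final paragraph in favor of the elementary argument, which does work: the split-row sets $R_i$ of the rows belonging to $v$ are pairwise disjoint and nonempty, so any conflict-free row split of $M$ has at least $|v|$ rows (indeed at least $|R_M|$ rows), whence $\gamma(M)=\beta(M)\ge |v|$ by Theorem~\ref{thm:beta=gamma}. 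With that substitution your proof is complete and self-contained.
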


\begin{proof}
Let $M$ be a binary matrix and let $w={\it wdt}(M)$. Let $P=\{C_1,\ldots,C_{w}\}$ be a chain partition of $D_M$ (the existence of such a partition
is guaranteed by Dilworth's theorem) and let $B$ be the linear branching of $D_M$ corresponding to $P$.
We will prove that $|U(B)|\leq w |R_M|$, where $R_M$ denotes the set of rows of $M$.
We claim that the number of elements in $U(B)$ with fixed first coordinate is at most $w$. For a row $r$ of $M$, let $N(r)=\{v\in V(D_M):(r,v)\in U(B)\}$.
We claim that $|N(r)\cap C_i|\leq 1$, for every $i\in \{1,\ldots,w\}$.
Suppose that $v_1\neq v_2$ and $v_1,v_2\in N(r)\cap C_i$ for some $i\in \{1,\ldots,w\}$. Since $C_i$ is a chain, we may assume without loss of generality that $v_1\subset v_2$. Moreover, since  $v_1,v_2$ are both in $C_i$ it follows that
there exists a path in $B$ from $v_1$ to $v_2$.
Since $v_1\in N(r)$, it follows that $r\in v_1$, and since there exists a path in $B$ from $v_1$ to $v_2$, it follows that $r$ is covered in $v_2$ with respect to $B$. This contradicts the assumption that $v_2\in N(r)$. The obtained contradiction
shows that $|N(r)\cap C_i|\leq 1$, as claimed.
Since $|N(r)\cap C_i|\leq 1$, and $P$ is a chain partition of $D_M$, it follows that $|N(r)|= \sum_{i=1}^w|N(r)\cap C_i|\leq w$.
It is now easy to see that $|U(B)|=\sum_{r\in R_M}|N(r)|\leq w|R_M|$.

Since matrix $M$ is assumed to have no row whose all entries are $0$, every row split of $M$ contains at least $|R_M|$ rows, that is,
$|R_M|\le \gamma(M)$. It follows that $|U(B)|\leq w\gamma(M)$ and since the $B$-split of $M$ has exactly $|U(B)|$ rows (by Lemma~\ref{lemma:branching to split}), the claimed approximation ratio follows.
\end{proof}

\section{Conclusion}\label{sec:conclusion}

In this paper, we revisited the minimum conflict-free row split problem and a variant of it. We formulated the two problems as optimization problems on branchings in a derived directed acyclic graph and, building on these formulations, obtained several new algorithmic and complexity insights about the two problems, including \apx-hardness results and approximation algorithms. Moreover, we proved a min-max result on digraphs strengthening the classical Dilworth's theorem and leading to a new heuristic for the MCRS problem. In Figure~\ref{fig:problems} we summarize the relations between several problems discussed in this paper, along with known complexity results and some applications.
The relations are described informally; for instance, we say that
problem $P_1$ \emph{reduces to} problem $P_2$ if a polynomial-time algorithm
for problem $P_2$ can be used to develop a polynomial-time algorithm for problem $P_1$.

\begin{figure}[h!]
\begin{center}
\includegraphics[width=\textwidth]{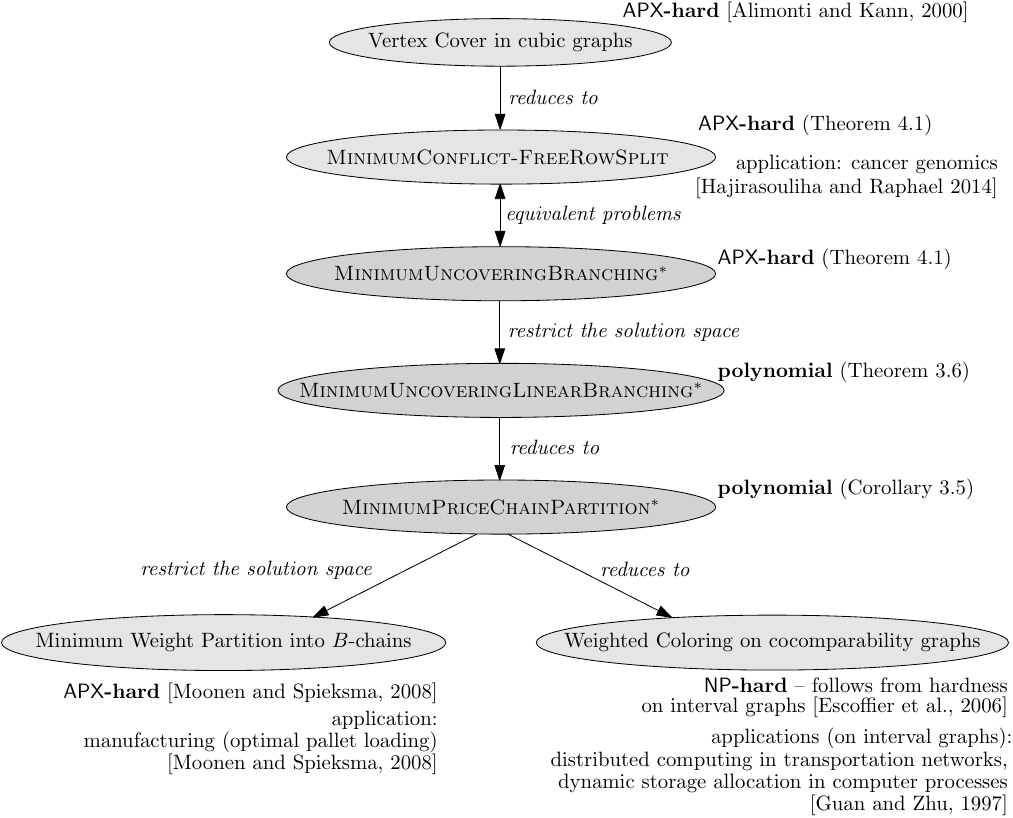}
\end{center}
\caption{Summary of relations between some problems discussed in this paper,
along with known complexity results and some applications. Problems marked with an asterisk
($^\ast$) are introduced in this paper.}
\label{fig:problems}
\end{figure}

The main problem left open by our work is the determination of the exact (in)approximability status of the MCRS problem.
In particular, does the problem admit a constant factor approximation?
Other possibilities for related future research include:
i) the study of the approximability properties of the closely related Minimum-Split-Row problem~\cite{wabi14}
(our preliminary investigations show that the problem, while being \apx-hard, admits a $(2h(M)-1)$-approximation);
ii) a parameterized complexity study of the considered problems (along with identification of meaningful parameterizations), and iii) a study of extensions of the model that could be relevant for the biological application, such as the case when the input binary matrix may contain errors or has partially missing data. Finally, it would be interesting to find further applications of the
polynomially solvable {\sc MinimumPriceChainPartition} problem,  as well as of the two branching problems,  {\sc MinimumUncoveringBranching} and {\sc MinimumIrreducingBranching}, introduced in this paper.

\subsection*{Acknowledgments}

\begin{sloppypar}
The authors are grateful to the two anonymous reviewers for helpful remarks. This work is supported in part by the Slovenian Research Agency under research programs I0-0035, P1-0285, and research projects N1-0032, N1-0038, N1-0062, J1-6720, J1-7051, and by the Academy of Finland
under Grant No.~274977.
\end{sloppypar}

\end{document}